\theoremstyle{definition}
\theoremstyle{plain}
\newtheorem{theorem}{Theorem}
\newtheorem{proposition}[theorem]{Proposition}
\newtheorem{lemma}[theorem]{Lemma}
\newtheorem{corollary}[theorem]{Corollary}
\theoremstyle{definition}
\newtheorem{definition}[theorem]{Definition}
\newtheorem{remark}[theorem]{Remark}
\newsavebox{\tempbox}
\def\A{\mathcal{A}}
\def\B{\mathcal{B}}
\def\X{\mathcal{X}}
\def\Y{\mathcal{Y}}
\def\Z{\mathcal{Z}}
\def\N{\mathbb{N}}
\def\complex{\mathbb{C}}
\def\real{\mathbb{R}}
\def\dim{\textnormal{dim}}
\def\A{\mathcal{A}}
\def\B{\mathcal{B}}
\def\CP{\textnormal{CP}}
\def\D{\textnormal{D}}
\def\C{\textnormal{C}}
\def\Pos{\textnormal{Pos}}
\def\Herm{\textnormal{Herm}}
\def\rank{\textnormal{rank}}
\def\Tr{\textnormal{Tr}}
\def\T{\textnormal{T}}
\def\F{\textnormal{F}}
\def\vec{\textnormal{vec}}
\def\L{\textnormal{L}}
\def\U{\textnormal{U}}
\newcommand\I{\mathbbm{1}}
\let\originalleft\left
\let\originalright\right
\newcommand{\floor}[1]{\lfloor #1 \rfloor}
\renewcommand{\left}{\mathopen{}\mathclose\bgroup\originalleft}
\renewcommand{\right}{\aftergroup\egroup\originalright}
\def\S{\textnormal{S}}
\def\N{\mathbb{N}}
\newcommand{\norm}[1]{\| #1 \|}
\newcommand{\bignorm}[1]{\big\| #1 \big\|}
\newcommand{\Bignorm}[1]{\Big\| #1 \Big\|}
\newcommand{\biggnorm}[1]{\bigg\| #1 \bigg\|}
\newcommand{\vertiii}[1]{{\vert\kern-0.25ex\vert\kern-0.25ex\vert #1 
    \vert\kern-0.25ex\vert\kern-0.25ex\vert}}
\newcommand{\bigvertiii}[1]{{\big\vert\kern-0.25ex\big\vert\kern-0.25ex\big\vert #1 
    \big\vert\kern-0.25ex\big\vert\kern-0.25ex\big\vert}}
\newcommand{\Bigvertiii}[1]{{\Big\vert\kern-0.25ex\Big\vert\kern-0.25ex\Big\vert #1 
    \Big\vert\kern-0.25ex\Big\vert\kern-0.25ex\Big\vert}}
\newcommand{\biggvertiii}[1]{{\bigg\vert\kern-0.25ex\bigg\vert\kern-0.25ex\bigg\vert #1 
    \bigg\vert\kern-0.25ex\bigg\vert\kern-0.25ex\bigg\vert}}
\newcommand{\Biggvertiii}[1]{{\Bigg\vert\kern-0.25ex\Bigg\vert\kern-0.25ex\Bigg\vert #1 
    \Bigg\vert\kern-0.25ex\Bigg\vert\kern-0.25ex\Bigg\vert}}
\newcommand{\ip}[2]{\langle #1, #2 \rangle}
\newcommand{\bigip}[2]{\big\langle #1, #2 \big\rangle}
\NewDocumentCommand\linear{mo}{
  \IfNoValueTF{#2}
	      {\text{L}(\mathcal{#1})}
	      {\text{L}(\mathcal{#1}, \mathcal{#2})}
}
\renewcommand{\t}{{\scriptscriptstyle\mathsf{T}}}
\newcommand{\op}[1]{\operatorname{#1}}
\title{Ancilla dimension in quantum channel discrimination}
\author[1,3]{Daniel Puzzuoli}
\author[2,3,4]{John Watrous}
\affil[1]{Department of Applied Mathematics, University of Waterloo, 
  Waterloo, ON, Canada}
\affil[2]{School of Computer Science, University of Waterloo, 
  Waterloo, ON, Canada}
\affil[3]{Institute for Quantum Computing, University of Waterloo, 
  Waterloo, ON, Canada}
\affil[4]{Canadian Institute for Advanced Research, Toronto, ON, Canada}
\begin{document}

\maketitle

\begin{abstract}
Single-shot quantum channel discrimination is a fundamental task in quantum
information theory. 
It is well known that entanglement with an ancillary system can help in
this task, and furthermore that an ancilla with the same dimension as the input
of the channels is always sufficient for optimal discrimination of two
channels. A natural question to ask is whether the same holds true for the
output dimension. That is, in cases when the output dimension of the channels
is (possibly much) smaller than the input dimension, is an ancilla with
dimension equal to the output dimension always sufficient for optimal
discrimination? We show that the answer to this question is ``no'' by
construction of a family of counterexamples. This family contains instances
with arbitrary finite gap between the input and output dimensions, and still
has the property that in every case, for optimal discrimination, it is
necessary to use an ancilla with dimension equal to that of the input.

The proof relies on a characterization of all operators on the trace norm unit
sphere that maximize entanglement negativity. In the case of density operators
we generalize this characterization to a broad class of entanglement measures,
which we call weak entanglement measures. This characterization allows us to
conclude that a quantum channel is reversible if and only if it preserves
entanglement as measured by any weak entanglement measure, with the structure
of maximally entangled states being equivalent to the structure of reversible
maps via the Choi isomorphism. We also include alternate proofs of other known
characterizations of channel reversibility.
\end{abstract}

\section{Introduction}

The task of quantum channel discrimination is to determine which member of a
given set of quantum channels is acting on a system. 
Different versions of this problem have been considered, in which the number of
uses, types of channels, and resources available for the task are varied. 
For example, one may consider when perfect discrimination is possible given a
finite number of channel uses \cite{acin_statistical_2001,duan_perfect_2009}, 
the influence of memory effects \cite{chiribella_memory_2008}, the benefits of 
adaptive strategies \cite{harrow_adaptive_2010}, the effects of locality in 
multiparty settings \cite{duan_local_2008,matthews_entanglement_2010}, and 
also asymptotic versions \cite{hiai_proper_1991,bjelakovic_quantum_2003}. 
Parameter estimation in experiments is another version of this problem
\cite{granade_practical_2016}.

Here we consider the task of single-shot channel discrimination,
which is to determine, given a single use, which of two known channels is
acting on a system.
In the abstract setting, the individual performing the task can choose any
state to feed through the channels, then perform any measurement on the output
to guess which channel acted on the state. 
In general, it can be useful to probe the channels using a state which is
entangled to some ancillary system, called an \emph{ancilla}, then perform a
joint measurement on the output and ancilla systems together.
This fact was suggested (somewhat implicitly) in \cite{Kitaev97} and
(more explicitly) in \cite{KitaevSV02}, and also proved not to hold
for the restricted case of unitary channels in \cite{AharonovKN98} and
\cite{ChildsPR00}. See, for example, \cite{ChildsPR00,Sacchi05,Sacchi05b,piani_all_2009} for investigations on the advantages of using entanglement in this setting, and \cite{RosgenW05,rosgen08b,GilchristLN05,Watrous08} for other work in the single-shot channel discrimination setting.

One fundamental question is as follows: How much entanglement is
\emph{necessary} to optimally discriminate two channels?
We consider a specific formulation of this question: Given a pair of channels,
what is the minimum ancilla dimension that is sufficient for optimal
discrimination (in relation to the input and output dimensions of the
channels)? 
Due to the nature of the optimization, it is possible to conclude that an
ancilla the same size as the input of the channels is always sufficient for
optimal discrimination \cite{Kitaev97}.
(See also \cite{GilchristLN05} and \cite{watrous_notes_2005} for a simple proof
of this fact.)
It is also known, in cases when the input and output dimensions are the same,
that using an ancilla having the same size as the input is sometimes
\emph{necessary} for optimal discrimination.
One such example, which we will review, is given by the Werner-Holevo channels,
introduced in \cite{werner_counterexample_2002} (and described in \cite[Example
  3.39]{watrous_quantum_2015}, for instance).
It is natural to ask whether the same could be said of the output dimension of
the channels: Is an ancilla the same size as the output of the channels
always sufficient for optimal discrimination?

By construction of a family of examples we show that, in cases when the output
dimension is smaller than the input, an ancilla of size equal to the output is
not sufficient in general for optimal channel discrimination. This family is
parameterized by two natural numbers $n \geq 2$ and $k \geq 1$, with the input
dimension being $n^k$ and the output being $nk$, and hence the output can be
made arbitrarily small compared to the input. Despite this arbitrary gap, we
show that for optimal discrimination of these channels it remains necessary to
use an ancilla as large as the input. This family is based on the Werner-Holevo
channels (and is equivalent to these channels in the $k=1$ case), and therefore
can be viewed as extending them as a demonstration of the general necessity of
using an ancilla that is as large as the input.

Due to the relationship between channel discrimination and the completely
bounded trace norm, this family can also be viewed as a concrete and direct
proof of the fact that for an arbitrary linear map taking matrices to matrices,
the completely bounded trace norm does not generically achieve its value with
an ancilla equal to the output dimension of the map. 
An equivalent dual statement in terms of the completely bounded norm was proved
by Haagerup in \cite{haagerup_injectivity_1985}.

Our proof is based on a characterization of operators on the trace norm unit
sphere that maximize entanglement negativity
\cite{vidal_computable_2002}.\footnote{While the physical concept of
  ``entanglement'' only applies to density operators, the entanglement
  negativity as a function can just as well be applied to any bipartite
  operator.} 
When restricting attention to density operators, we generalize this
characterization to a class of measures that we call \emph{weak entanglement
  measures}, which satisfy a subset of properties that many entanglement
measures have. We conclude by showing that, when quantified by a weak
entanglement measure, a channel is reversible if and only if it preserves
entanglement, and if and only if its Choi matrix is maximally entangled. 
Part of proving this is the observation that the structure of maximally
entangled states is equivalent to the structure of reversible channels shown in
\cite{busch_stochastic_1999,nayak_invertible_2007}. 
We also give short proofs of the known facts that a channel being reversible is
equivalent to it preserving trace norm, preserving fidelity, and that all
complementary channels are necessarily constant on the set of density
operators.

\section{Background and notation} 

In this section we set up notation and review some basic concepts in finite dimensional vector spaces and quantum theory. Readers familiar with these topics may wish to skip this section and refer back to it if some notation is unclear.

\subsection{Finite dimensional complex vector spaces}

In this paper we work in finite dimensional (f.d.) complex Hilbert spaces,
which we will always take to be $\complex^n$ with the standard inner product
$\ip{u}{v} = \sum_{i=1}^n \overline{u_i}v_i$ for $u,v \in \complex^n$
(conjugate linear in the first argument). 
We use the symbols $\A, \B, \X, \Y,$ and  $\Z$ to denote f.d.\ complex
Hilbert spaces when it is useful to have a label, or when it is not necessary
to explicitly refer to the dimension. 
The unit sphere of $\X$ is denoted $\S(\X) = \{x \in \X : \|x \|=1\}$. 
The set of linear operators mapping $\X \rightarrow \Y$ is denoted $\L(\X,\Y)$,
and we use the convention $\L(\X) = \L(\X,\X)$. 
We denote the \emph{standard basis} of elementary vectors for $\complex^n$ as
$e_1, \dots, e_n$. 
For any operator $A \in \L(\X,\Y)$, the operator $A^* \in \L(\Y,\X)$ denotes
the adjoint map to $A$, the operator $A^{\t}\in\L(\Y,\X)$ denotes the transpose
map to $A$, and the operator $\overline{A}\in\L(\X,\Y)$ denotes the entrywise
conjugate of $A$.
(Transposition and entrywise complex conjugation are taken with respect to the
standard basis.)
For $u \in \X$, we also use the notations $u^*, u^{\t} \in \L(\X,\complex)$ and
$\overline{u}\in\X$ by identifying $u$ with an element in $\L(\complex,\X)$
acting as $\alpha\mapsto \alpha u$.
The symbol $\I$ is used to denote the identity map, with subscript specifying
what space it acts on (e.g. $\I_\X \in \L(\X)$ is the identity acting on $\X$).

The Hilbert-Schmidt inner product on $\L(\X,\Y)$ is $\ip{A}{B} = \Tr(A^*B)$ for
$A,B \in \L(\X,\Y)$, where $\Tr$ is the trace. 
For standard basis elements $e_i \in \X$ and $e_j \in \Y$, $E_{ij} = e_ie_j^*
\in \L(\Y,\X)$ denotes the matrix units. 
We use special notation for various subsets of $\L(\X)$: 
\begin{itemize}
\item[$\bullet$] $\Herm(\X) = \{ A \in \L(\X) : A^* = A\}$, the set of self-adjoint
  operators.
\item[$\bullet$] $\Pos(\X) = \{P \in \L(\X) : P \geq 0\} \subset \Herm(\X)$, the set of
  positive semi-definite operators.
\item[$\bullet$] $\U(\X,\Y) = \{ A \in \L(\X) : A^*A = \I_\X\}$ when $\dim(\X) \leq
  \dim(\Y)$, the set of isometries.
\end{itemize}

It will sometimes be useful for us to think of vectors in $\X \otimes \Y$ as
elements in $\L(\Y,\X)$, and vice versa. 
To do so we use the \emph{vectorization mapping} $\vec : \L(\Y,\X) \rightarrow
\X \otimes \Y$ defined as $\vec(E_{ij}) = e_i \otimes e_j$, and extended by
linearity to all of $\L(\Y,\X)$. For general $u \in \X$ and $v \in \Y$,
$\vec(uv^*) = u \otimes \overline{v}$. The function $\vec$ is an isometric
isomorphism, i.e., it is a linear bijection and satisfies
$\ip{\vec(A)}{\vec(B)} = \ip{A}{B}$ for all $A,B \in \L(\Y,\X)$. An identity we
make use of is that
\begin{equation}
  \vec(ABC) = \left(A \otimes C^{\t}\right)\vec(B),
\end{equation}
which holds for any $A,B,C$ for which the product $ABC$ is well defined.

The set of linear maps taking $\L(\X) \rightarrow \L(\Y)$ is denoted
$\T(\X,\Y)$, and $\T(\X) = \T(\X,\X)$. 
The set of completely positive maps in $\T(\X,\Y)$ is denoted $\CP(\X,\Y)$. 
Throughout this paper we let $T \in \T(\X)$ denote the transpose map, so
that $T(X) = X^{\t}$.
It holds that
\begin{equation}
  \bigl(T \otimes \I_{\L(\X)}\bigr)(\vec(\I_\X) \vec(\I_\X)^*) = W_{\X\X},
\end{equation}
where $W_{\X\Y} \in \U(\X \otimes \Y, \Y \otimes \X)$ denotes the swap
operator, which satisfies $W_{\X\Y}(x \otimes y) = y \otimes x$ for all $x \in
\X$ and $y \in \Y$. The linear map $J : \T(\X,\Y) \rightarrow \L(\X \otimes \Y)$, defined as
\begin{align}
	J(\Phi) = (\I_{\L(\X)} \otimes \Phi)(\vec(\I_\X)\vec(\I_\X)^*)
\end{align}
for $\Phi \in \T(\X,\Y)$, is a vector space isomorphism. The matrix $J(\Phi)$ is called the \emph{Choi matrix} of $\Phi$ \cite{choi_completely_1975}.

For $A \in \L(\X,\Y)$ we use three standard matrix norms, the $1$-norm (also called the \emph{trace norm}), $2$-norm (also called the \emph{Frobenius norm}), and $\infty$-norm (also called the \emph{spectral norm} or \emph{operator norm}) defined as
\begin{equation}
  \begin{aligned}
    \norm{A}_1 & = \Tr\big(\sqrt{A^*A}\big),\\
    \norm{A}_2 & = \sqrt{\ip{A}{A}},\\
    \norm{A}_\infty & = \max\{\norm{Ax} : x \in \S(\X)\}.
  \end{aligned}
\end{equation}
For $p\in\{1,\infty\}$ we denote the induced $p$-norms on $\Phi \in \T(\X,\Y)$
\begin{align}
	\norm{\Phi}_p = \max\left\{\norm{\Phi(X)}_p : X \in \L(\X), \norm{X}_p \leq 1\right\} 
\end{align}
and the completely bounded versions as
\begin{align}
	\vertiii{\Phi}_p = \sup\big\{ \bignorm{\Phi \otimes \I_{\L(\complex^m)}}_p : m \in \N\big\}. \label{eqn:completely_bounded}
\end{align}
It holds that $\vertiii{\Phi}_1 = \bignorm{ \Phi \otimes \I_{\L(\X)} }_1$ and
$\vertiii{\Phi}_\infty = \bignorm{ \Phi \otimes \I_{\L(\Y)} }_\infty$
for all $\Phi \in \T(\X,\Y)$.

\subsection{Some quantum terminology}
A vector $u \in \S(\complex^n \otimes \complex^m)$ is called \emph{maximally entangled} if, for $r = \min(n, m)$, there exists orthonormal sets $\{x_i\}_{i=1}^r \subset \complex^n$ and $\{y_i\}_{i=1}^r \subset \complex^m$ for which 
\begin{align}
	u = \frac{1}{\sqrt{r}}\sum_{i=1}^r x_i \otimes y_i. \label{eqn:max_entangled}
\end{align}
When $m \leq n$, this is equivalent to the statement that there exists an isometry $A \in \U(\complex^m, \complex^n)$ for which $u = \frac{1}{\sqrt{r}} \vec(A)$. We denote $\tau_\X \in \D(\X \otimes \X)$ as the \emph{canonical maximally entangled state}, defined as
\begin{align}
	\tau_\X = \frac{1}{n} \vec(\I_{\X})\vec(\I_{\X})^*,
\end{align}
where according to the vectorization convention $\vec(\I_{\X}) = \sum_{i=1}^{n} e_i \otimes e_i$.

For a quantum system with associated f.d.\ complex Hilbert space $\X$, the states of the system are elements of $\D(\X) = \{\rho \in \Pos(\X) : \Tr(\rho) = 1\}$, called either the set of \emph{density operators}, \emph{density matrices}, or \emph{quantum states}. Quantum transformations, called \emph{quantum channels}, from a system associated with $\X$ to one associated with $\Y$ are given by the completely positive and trace preserving maps from $\L(\X)$ to $\L(\Y)$, denoted $\C(\X,\Y)$.

For a finite set $\Sigma$ and some $\X$, a \emph{measurement} with outcomes $\Sigma$ on a quantum system associated with $\X$ is a function $\mu : \Sigma \rightarrow \Pos(\X)$ such that $\sum_{a \in \Sigma} \mu(a) = \I_\X$. If such a measurement is performed on a quantum state $\rho \in \D(\X)$, the probability of outcome $a \in \Sigma$ is given by the inner product $\ip{\mu(a)}{\rho}$. A \emph{projective measurement} is a measurement $\mu : \Sigma \rightarrow \Pos(\X)$ for which $\mu(a)$ is an orthogonal projection for every $a \in \Sigma$. We remark that in this definition of measurement we are only considering the outcome statistics, and say nothing about the state of the system after measurement, which is not necessary in the settings we are considering. Measurements as defined here are often referred to as (finite-outcome) positive operator-valued measures (POVMs) in the quantum information literature.

\section{Channel discrimination} \label{section:channel_discrimination}

The relevance of the trace and completely bounded trace norms in quantum theory
arises in part from their interpretation in terms of quantum state and channel
discrimination. (Note that the completely bounded trace norm is often referred to as the \emph{diamond norm} in the quantum information literature.) These tasks can be formalized in terms of games, where how easy
(or difficult) it is to discriminate two states or channels is given by the
optimal probability with which this game can be won.

Quantum state discrimination games are single player games which proceed as follows. Descriptions of two quantum states $\rho_0, \rho_1 \in \D(\X)$ and a probability $\lambda \in [0,1]$ are known to the player. A bit $\alpha \in \{0,1\}$ is sampled by the referee according to the distribution $p(0) = \lambda$, $p(1) = 1-\lambda$. A single copy of the state $\rho_\alpha$ is given to the player, from which they must guess what $\alpha$ was by measuring the system (i.e., guess which of the two states they were given). For a given measurement $\mu : \{0, 1\} \rightarrow \Pos(\X)$, the probability of guessing correctly in a single run of the game is given by the expression
\begin{align}
	\lambda \ip{\mu(0)}{\rho_0} + (1-\lambda) \ip{\mu(1)}{\rho_1},
\end{align}
and hence the optimal success probability is given as the above expression
optimized over all choices of two-outcome measurements. 
The following theorem \cite{helstrom_detection_1967,holevo_analog_1972}
provides a simple expression for the optimal success probability, which
generalizes the expression for the classical version of the game.

\begin{theorem}[Holevo-Helstrom theorem]
  Let $\X$ be an f.d.\ complex Hilbert space, let $\rho_0, \rho_1 \in \D(\X)$
  be density operators, and let $\lambda \in [0,1]$ be a real number. 
  For every choice of measurement $\mu : \{0, 1\} \rightarrow \Pos(\X)$, it
  holds that
  \begin{equation}
    \lambda \ip{\mu(0)}{\rho_0} + (1-\lambda) \ip{\mu(1)}{\rho_1}
    \leq \frac{1}{2} + \frac{1}{2}\norm{\lambda \rho_0 - (1-\lambda) \rho_1}_1.
  \end{equation}
  Moreover there exists a projective measurement for which the inequality in
  this statement can be replaced by an equality.
\end{theorem}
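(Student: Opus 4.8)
The plan is to reduce the success probability to a single linear functional of the measurement operators evaluated against the Hermitian operator $P = \lambda\rho_0 - (1-\lambda)\rho_1$, and then to optimize it using the duality between the trace norm and the spectral norm together with the Jordan (positive/negative part) decomposition of $P$.

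First I would use the measurement constraint $\mu(0) + \mu(1) = \I_\X$ to rewrite the quantity in two ways: substituting $\mu(1) = \I_\X - \mu(0)$ gives $(1-\lambda) + \ip{\mu(0)}{P}$, while substituting $\mu(0) = \I_\X - \mu(1)$ gives $\lambda - \ip{\mu(1)}{P}$. Averaging the two expressions yields
\begin{equation}
  \lambda\ip{\mu(0)}{\rho_0} + (1-\lambda)\ip{\mu(1)}{\rho_1}
  = \frac{1}{2} + \frac{1}{2}\ip{\mu(0) - \mu(1)}{P}.
\end{equation}
Since $0 \leq \mu(a) \leq \I_\X$ for each $a \in \{0,1\}$ (each $\mu(a)$ is positive semidefinite and bounded above by $\I_\X$ because the complementary operator is positive semidefinite), the Hermitian operator $\mu(0) - \mu(1) = 2\mu(0) - \I_\X$ has all eigenvalues in $[-1,1]$, hence $\norm{\mu(0) - \mu(1)}_\infty \leq 1$. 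Applying the H\"older-type inequality $\ip{A}{B} \leq \norm{A}_\infty \norm{B}_1$ then gives $\ip{\mu(0) - \mu(1)}{P} \leq \norm{P}_1$, which is exactly the claimed upper bound.

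For the equality claim I would take the Jordan decomposition $P = P_+ - P_-$ with $P_+, P_- \in \Pos(\X)$ having orthogonal supports, so that $\norm{P}_1 = \Tr(P_+) + \Tr(P_-)$. Letting $\Pi$ be the orthogonal projection onto the support of $P_+$ and setting $\mu(0) = \Pi$, $\mu(1) = \I_\X - \Pi$ defines a projective measurement; since $\Pi$ annihilates $P_-$ we get $\Tr(\Pi P) = \Tr(P_+)$, so $\ip{\mu(0) - \mu(1)}{P} = 2\Tr(\Pi P) - \Tr(P) = \Tr(P_+) + \Tr(P_-) = \norm{P}_1$, and the identity above shows equality is attained.

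There is no serious obstacle here; the only points requiring care are justifying $\norm{\mu(0) - \mu(1)}_\infty \leq 1$ from the measurement axioms and checking that $\Pi P = P_+$ so that $\Tr(\Pi P) = \Tr(P_+)$ exactly, both of which follow from the spectral theorem for Hermitian operators. If one prefers not to invoke the trace-norm/spectral-norm duality as a black box, the upper bound step can instead be carried out by expanding $\ip{\mu(0) - \mu(1)}{P}$ in an eigenbasis of $P$ and bounding each diagonal entry of $\mu(0) - \mu(1)$ by $1$ in absolute value, which recovers the same estimate from the same spectral decomposition used for the equality case.
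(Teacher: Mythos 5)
Your proof is correct and complete: the reduction to $\frac{1}{2} + \frac{1}{2}\ip{\mu(0)-\mu(1)}{P}$, the H\"older bound via $\norm{\mu(0)-\mu(1)}_\infty \leq 1$, and the Jordan-decomposition construction of the optimal projective measurement are all sound. Note that the paper itself states this theorem as a known result, citing Helstrom and Holevo, and gives no proof of its own to compare against; the argument you give is the standard one that those references (and the cited textbook treatment) use, so there is nothing further to reconcile.
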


Hence, the trace norm has an operational interpretation in terms of this
discrimination game. 
A similar discrimination game can be defined for quantum channels.
As in the state case, descriptions of two quantum channels
$\Phi_0, \Phi_1 \in \C(\X,\Y)$ and a probability $\lambda \in [0,1]$ are known
to the player.
The referee samples a bit $\alpha \in \{0,1\}$ according to the distribution
$p(0) = \lambda$, $p(1) = 1-\lambda$. 
The player is then given a single use of $\Phi_\alpha$, and must guess
$\alpha$. 
This game has an additional degree of freedom from the state case, as the
player must choose a quantum state to feed into $\Phi_\alpha$.
Once this state is chosen the problem reduces to the problem of discriminating
the states output by the two channels.
An additional layer of complexity is that the player may have access to an
\emph{ancillary} quantum system with f.d.\ complex Hilbert space $\Z$, and can
choose a state $\rho \in \D(\X \otimes \Z)$, pass the system associated to $\X$
through $\Phi_\alpha$, then attempt to discriminate the states
$\big(\Phi_0 \otimes \I_{\L(\Z)}\big)(\rho)$ and 
$\big(\Phi_1 \otimes \I_{\L(\Z)}\big)(\rho)$. 
Hence, by the above theorem, for a choice of $\Z$ and
$\rho \in \D(\X\otimes \Z)$, the optimal success probability of guessing
correctly is 
\begin{align}
  \frac{1}{2} + \frac{1}{2}\bignorm{ \lambda \big(\Phi_0 \otimes \I_{\L(\Z)}\big)(\rho) - (1- \lambda)\big(\Phi_1 \otimes \I_{\L(\Z)}\big)(\rho) }_1,
\end{align}
and the optimal success probability for the game as a whole is given as an 
optimization of this expression over all choices of $\Z$ and 
$\rho \in \D(\X \otimes \Z)$. 
With this we arrive at the following theorem
(see \cite[Chapter~3]{watrous_quantum_2015}).

\begin{theorem}[Holevo-Helstrom theorem for channels]
  \label{thm:channel_discrimination}
  Let $\X$ and $\Y$ be finite dimensional complex Hilbert spaces, let
  $\Phi_0, \Phi_1 \in \C(\X,\Y)$ be channels, and let $\lambda \in [0,1]$ be a
  real number.
  For any choice of a positive integer $m$, a density operator
  $\rho \in \D(\X \otimes \complex^m)$, and a measurement 
  $\mu : \{0,1\} \rightarrow \Pos(\Y \otimes \complex^m)$, it holds that
  \begin{align}
  \begin{split}
    \label{equation:channel_discrim_upper_bound}
    \lambda \bigip{\mu(0)}{\big(\Phi_0 \otimes \I_{\L\left(\complex^m\right)}
      \big)(\rho)} + (1-\lambda)& \bigip{\mu(1)}{\big(\Phi_1 \otimes 
      \I_{\L\left(\complex^m\right)}\big)(\rho)}\\
    &\leq \frac{1}{2} + \frac{1}{2}\vertiii{\lambda\Phi_0 -
      (1-\lambda)\Phi_1}_1.
      \end{split}
  \end{align}
  Moreover, if $m \geq \dim(\X)$, then there exists a density operator
  $\rho \in \D(\X \otimes \complex^m)$ and projective measurement 
  $\mu : \{0,1\} \rightarrow \Pos(\Y \otimes \complex^m)$ for which
  equality in this relation is achieved.
\end{theorem}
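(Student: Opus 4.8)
The plan is to obtain both parts of the theorem from the Holevo-Helstrom theorem for states (the preceding theorem), combined with the identity $\vertiii{\Phi}_1 = \bignorm{\Phi \otimes \I_{\L(\X)}}_1$ recorded in the background section and the definition of the completely bounded trace norm in \eqref{eqn:completely_bounded}.

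For the inequality \eqref{equation:channel_discrim_upper_bound}, I would fix a positive integer $m$, a density operator $\rho \in \D(\X \otimes \complex^m)$, and a measurement $\mu : \{0,1\} \to \Pos(\Y \otimes \complex^m)$. Since $\Phi_0 \otimes \I_{\L(\complex^m)}$ and $\Phi_1 \otimes \I_{\L(\complex^m)}$ are channels, the operators $\xi_i := \bigl(\Phi_i \otimes \I_{\L(\complex^m)}\bigr)(\rho)$ are density operators on $\Y \otimes \complex^m$, so applying the Holevo-Helstrom theorem to $\xi_0, \xi_1, \lambda$ and $\mu$ bounds the left-hand side of \eqref{equation:channel_discrim_upper_bound} by $\tfrac12 + \tfrac12 \norm{\lambda \xi_0 - (1-\lambda)\xi_1}_1$. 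By linearity $\lambda \xi_0 - (1-\lambda)\xi_1 = \bigl((\lambda\Phi_0 - (1-\lambda)\Phi_1) \otimes \I_{\L(\complex^m)}\bigr)(\rho)$, and since $\norm{\rho}_1 = 1$ the definitions of the induced and completely bounded trace norms give
\begin{equation}
  \bignorm{\lambda \xi_0 - (1-\lambda)\xi_1}_1 \leq \bignorm{(\lambda\Phi_0 - (1-\lambda)\Phi_1) \otimes \I_{\L(\complex^m)}}_1 \leq \vertiii{\lambda\Phi_0 - (1-\lambda)\Phi_1}_1,
\end{equation}
which is the desired bound.

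For the achievability statement, write $\Psi = \lambda\Phi_0 - (1-\lambda)\Phi_1$, a Hermiticity-preserving map. First I would observe that $k \mapsto \bignorm{\Psi \otimes \I_{\L(\complex^k)}}_1$ is nondecreasing (an operator on $\X \otimes \complex^k$ of trace norm at most $1$, extended by zero blocks, becomes an operator on $\X \otimes \complex^{k'}$ of the same trace norm whose image has the same trace norm, whenever $k' \geq k$) and, by \eqref{eqn:completely_bounded}, is bounded above by $\vertiii{\Psi}_1$; since it already equals $\vertiii{\Psi}_1 = \bignorm{\Psi \otimes \I_{\L(\X)}}_1$ at $k = \dim(\X)$, it follows that $\bignorm{\Psi \otimes \I_{\L(\complex^m)}}_1 = \vertiii{\Psi}_1$ for every $m \geq \dim(\X)$. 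Hence it suffices to produce, for such $m$, a density operator $\rho \in \D(\X \otimes \complex^m)$ with $\bignorm{(\Psi \otimes \I_{\L(\complex^m)})(\rho)}_1 = \vertiii{\Psi}_1$: the equality case of the Holevo-Helstrom theorem, applied to the density operators $(\Phi_0 \otimes \I_{\L(\complex^m)})(\rho)$ and $(\Phi_1 \otimes \I_{\L(\complex^m)})(\rho)$, then supplies a projective measurement $\mu$ attaining equality in \eqref{equation:channel_discrim_upper_bound}. To build $\rho$, I would recall that the trace-norm unit ball of $\L(\X \otimes \complex^m)$ is the convex hull of the rank-one operators $uv^*$ with $u, v \in \S(\X \otimes \complex^m)$, and that $X \mapsto \bignorm{(\Psi \otimes \I_{\L(\complex^m)})(X)}_1$ is convex and continuous; hence its maximum over the unit ball, which is $\vertiii{\Psi}_1$, is attained at some $uv^*$.

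The main obstacle is the final step: replacing this rank-one maximizer by a density operator. The content of the results cited in the introduction for this setting (\cite{Kitaev97,GilchristLN05,watrous_notes_2005}) is precisely that, once the ancilla has dimension at least $\dim(\X)$, the optimizing input can be taken to be a \emph{pure state} $uu^*$; I would invoke this, choosing a unit vector $u \in \X \otimes \complex^{\dim(\X)}$ with $\bignorm{(\Psi \otimes \I_{\L(\complex^{\dim(\X)})})(uu^*)}_1 = \vertiii{\Psi}_1$ and regarding $u$ inside $\X \otimes \complex^m$ through an isometric embedding $\complex^{\dim(\X)} \hookrightarrow \complex^m$ (which leaves the image's trace norm unchanged, the extra ancilla coordinates contributing only zero blocks), so that $\rho = uu^*$ works. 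The delicate point in proving the pure-state reduction directly is the ancilla accounting: passing from a general rank-one optimizer $uv^*$ to a single pure state appears to cost extra ancilla dimension, and avoiding this seems to require exploiting that the Schmidt rank of each of $u$ and $v$ across the $\X : \complex^m$ cut is at most $\dim(\X) \leq m$, together with the unitary freedom on the ancilla (independent unitaries $W_0, W_1$ on $\complex^m$ send $uv^*$ to $(\I_\X \otimes W_0)uv^*(\I_\X \otimes W_1)^*$ without changing the trace norm of its image). This is the step I expect to require the most care.
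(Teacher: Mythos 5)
Your proposal is correct and follows essentially the same route as the paper, which likewise derives the inequality by applying the state version of the Holevo--Helstrom theorem to the channel outputs and bounding the resulting trace norm by the completely bounded trace norm, and which does not prove the achievability part itself but cites it (to \cite{Kitaev97,GilchristLN05,watrous_notes_2005} and \cite[Chapter~3]{watrous_quantum_2015}). Your reduction of the $m \geq \dim(\X)$ case to the known pure-state achievability for Hermiticity-preserving maps, via monotonicity of $k \mapsto \bignorm{\Psi \otimes \I_{\L(\complex^k)}}_1$ and isometric embedding of the ancilla, is sound and correctly isolates the one step that must be imported from the cited references.
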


The question we ask in this paper is: does equality necessarily hold in
Equation (\ref{equation:channel_discrim_upper_bound}) for some state and
measurement when $m = \dim(\Y)$?
In words, is it possible in all cases to optimally discriminate two quantum channels using an ancilla system that is the same size as the channel output? Given the current form of Theorem~\ref{thm:channel_discrimination}, this question only has relevance when $\dim(\Y) < \dim(\X)$.

A more general version of this question is: is it true that
\begin{align}
  \vertiii{\Psi}_1 = \bignorm{ \Psi \otimes \I_{\L(\Y)} }_1
  \label{equation:conjectured_equality}
\end{align}
for all $\Psi \in \T(\X,\Y)$?
Due to the $1$ and $\infty$ norms being dual to each other, this is equivalent
to asking whether
\begin{align}
  \vertiii{\Psi}_\infty = \bignorm{ \Psi \otimes \I_{\L(\X)} }_\infty
\end{align}
for all $\Psi \in \T(\X,\Y)$. 
It follows from work of Haagerup \cite{haagerup_injectivity_1985} that this
general question has a negative answer.
Despite this negative answer, in channel discrimination games we are
specifically interested in $\Psi$ of a special form, i.e., 
$\Psi = \lambda \Phi_0 - (1- \lambda) \Phi_1$ for some 
$\Phi_0, \Phi_1 \in \C(\X,\Y)$ and $\lambda \in [0,1]$, and one might be
inclined to question whether \eqref{equation:conjectured_equality} could still
hold for all linear maps of this form.
Moreover, Haagerup's proof provides an answer to the general question through a
somewhat indirect path, and we believe that it is helpful from the viewpoint of
quantum information theory to obtain explicit examples of channels for which 
equality cannot hold in \eqref{equation:channel_discrim_upper_bound} when
$m = \dim(\Y)$.

In this paper we construct such examples, thereby answering both of the
questions raised above negatively. In particular, we prove the following.

\begin{theorem} \label{theorem:counter_examples}
  For every choice of positive integers $n \geq 2$ and $k \geq 1$ there exist channels
  \begin{equation}
    \Gamma^{(0)}_{n,k},\Gamma^{(1)}_{n,k} \in 
    \C\bigl(\complex^{n^k},\complex^{kn}\bigr)
  \end{equation}
  such that for all real numbers $\lambda \in (0,1)$ it holds that
  \begin{equation}
    \Bignorm{
      \lambda \Gamma^{(0)}_{n,k} \otimes \I_{\L(\Y)}
      - (1 - \lambda) \Gamma^{(1)}_{n,k} \otimes \I_{\L(\Y)}}_1
    < \Bigvertiii{
      \lambda \Gamma^{(0)}_{n,k} -
      (1 - \lambda) \Gamma^{(1)}_{n,k}}_1 = 1 \label{equation:norm_strict_inequality}
  \end{equation}
  for every f.d.\ complex Hilbert space $\Y$ satisfying $\dim(\Y) < n^k$.
\end{theorem}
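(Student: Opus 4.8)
Let $\Theta_0,\Theta_1\in\C(\complex^n)$ be the Werner--Holevo channels, $\Theta_0(X)=\frac{\Tr(X)\I_n+X^{\t}}{n+1}$ and $\Theta_1(X)=\frac{\Tr(X)\I_n-X^{\t}}{n-1}$. Identify the input space $\complex^{n^k}$ with $A_1\otimes\cdots\otimes A_k$, each $A_i\cong\complex^n$, and the output space $\complex^{kn}$ with $\complex^k\otimes\complex^n$, and define
\begin{equation*}
  \Gamma^{(b)}_{n,k}(X)=\frac1k\sum_{j=1}^k E_{jj}\otimes\Theta_b\bigl(\Tr_{\neq j}(X)\bigr),\qquad b\in\{0,1\},
\end{equation*}
where $\Tr_{\neq j}$ is the partial trace over the input factors other than $A_j$ and $E_{jj}\in\L(\complex^k)$. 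Each summand is completely positive and $\Theta_b,\Tr_{\neq j}$ are trace preserving, so $\Gamma^{(b)}_{n,k}\in\C(\complex^{n^k},\complex^{kn})$; for $k=1$ it is just $\Theta_b$. Write $\Psi:=\lambda\Gamma^{(0)}_{n,k}-(1-\lambda)\Gamma^{(1)}_{n,k}$ and $\Psi_n:=\lambda\Theta_0-(1-\lambda)\Theta_1$.

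\textbf{The completely bounded norm.}
By the triangle inequality and $\vertiii{\cdot}_1=1$ for channels, $\vertiii{\Psi_n}_1\le1$; feeding $\Psi_n$ the canonical maximally entangled state $\tau_{\complex^n}$ gives $\bignorm{(\Psi_n\otimes\I_{\L(\complex^n)})(\tau_{\complex^n})}_1=1$ (a direct computation, using that $J(\Theta_0),J(\Theta_1)$ are proportional to the projections onto the symmetric and antisymmetric subspaces of $\complex^n\otimes\complex^n$), so $\vertiii{\Psi_n}_1=1$. The same upper bound gives $\vertiii{\Psi}_1\le1$, and for the matching lower bound feed $\Psi$ the state $uu^*$, with $u\in\complex^{n^k}\otimes\complex^{n^k}$ a tensor product over $j$ of maximally entangled states pairing $A_j$ with the $j$-th of $k$ ancilla factors; by orthogonality of the $E_{jj}$ this yields trace norm $\tfrac1k\sum_j\bignorm{(\Psi_n\otimes\I_{\L(\complex^n)})(\tau_{\complex^n})}_1=1$. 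This proves the equality in \eqref{equation:norm_strict_inequality} (and shows dimension $n^k$ attains it, as Theorem~\ref{thm:channel_discrimination} guarantees).

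\textbf{Reduction to negativity.}
Fix $\Y$ with $\dim(\Y)<n^k$ and $\rho\in\D(\complex^{n^k}\otimes\Y)$. The block structure and orthogonality of the $E_{jj}$ give $\bignorm{(\Psi\otimes\I_{\L(\Y)})(\rho)}_1=\tfrac1k\sum_{j}\bignorm{(\Psi_n\otimes\I_{\L(\Y)})(\sigma_j)}_1$, where $\sigma_j:=(\Tr_{\neq j}\otimes\I_{\L(\Y)})(\rho)\in\D(\complex^n\otimes\Y)$; each term is at most $\vertiii{\Psi_n}_1=1$, so if the left side equals $1$ then $\bignorm{(\Psi_n\otimes\I_{\L(\Y)})(\sigma_j)}_1=1$ for every $j$. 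Since $\Psi_n$ is a difference of channels, $(\Psi_n\otimes\I_{\L(\Y)})(\sigma)=\lambda(\Theta_0\otimes\I)(\sigma)-(1-\lambda)(\Theta_1\otimes\I)(\sigma)$ has trace norm $1$ exactly when those two density operators have orthogonal supports; writing $\sigma_\Y:=\Tr_{\complex^n}(\sigma)$ and $\sigma^{T_\Y}$ for the partial transpose on $\Y$, one has $(\Theta_0\otimes\I)(\sigma)=\tfrac1{n+1}(\I_{\complex^n}\otimes\sigma_\Y+\sigma^{T_\Y})$ and $(\Theta_1\otimes\I)(\sigma)=\tfrac1{n-1}(\I_{\complex^n}\otimes\sigma_\Y-\sigma^{T_\Y})$, and a short computation shows orthogonality of supports is equivalent to $(\sigma^{T_\Y})^2=\I_{\complex^n}\otimes\sigma_\Y^2$ together with $[\sigma^{T_\Y},\I_{\complex^n}\otimes\sigma_\Y]=0$; in particular $\norm{\sigma^{T_\Y}}_1=n\Tr(\sigma_\Y)=n$, i.e.\ $\sigma$ has maximal entanglement negativity. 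By our characterization of the operators on the trace norm unit sphere that maximize negativity (specialized to density operators), $\sigma$ must then be, up to a local isometry on $\Y$, of the form $\tau_{\complex^n}\otimes\mu$: a maximally entangled pure state $\tau_{\complex^n}$ between $\complex^n$ and an $n$-dimensional partner subsystem $\Y_1$ of $\Y$, tensored with a density operator $\mu$ on a complementary subsystem $\Y_2$.

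\textbf{Induction on $k$, and the obstacle.}
Apply the above to $\sigma_1,\dots,\sigma_k$, noting $(\sigma_j)_\Y=\Tr_{\complex^{n^k}}(\rho)=:\rho_\Y$ is independent of $j$. I claim $\dim(\Y)\ge n^k$ by induction on $k$, which contradicts $\dim(\Y)<n^k$ and hence finishes the proof (the norm $\bignorm{\Psi\otimes\I_{\L(\Y)}}_1$ is attained on density operators, so it must be $<1=\vertiii{\Psi}_1$). The base case $k=1$ is $\norm{\sigma_1^{T_\Y}}_1=n\le\dim(\Y)$. For $k>1$: $\Tr_{A_2\otimes\cdots\otimes A_k}(\rho)=\sigma_1$ has the pure state $\tau_{\complex^n}$ (on $A_1\otimes\Y_1$) as a further marginal, so $\rho$ is a product across the $(A_1\Y_1)\,|\,(A_2\cdots A_k\Y_2)$ cut: $\rho=\tau_{\complex^n}^{(A_1,\Y_1)}\otimes\rho''$ with $\rho''\in\D(A_2\otimes\cdots\otimes A_k\otimes\Y_2)$ and $\dim(\Y)\ge n\dim(\Y_2)$. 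Tracing out $A_1$ and the input factors other than $A_j$ gives $\sigma_j=\tfrac1n\I_{\Y_1}\otimes\xi_j$ for $j\ge2$, with $\xi_j\in\D(A_j\otimes\Y_2)$ a marginal of $\rho''$, whence $\norm{\xi_j^{T_{\Y_2}}}_1=\norm{\sigma_j^{T_\Y}}_1=n$ and (unwinding the identities) $\xi_j$ satisfies the same hypotheses with $k-1$ and $\Y_2$ in place of $k$ and $\Y$; by induction $\dim(\Y_2)\ge n^{k-1}$, so $\dim(\Y)\ge n^k$. The part I expect to require the most care is establishing the negativity-maximizer characterization in exactly the form invoked in the previous paragraph — for density operators on $\complex^n\otimes\Y$ with $\dim(\Y)$ arbitrary (in particular possibly not divisible by $n$), pinning down the factorized structure $\tau_{\complex^n}\otimes\mu$ — together with the orthogonal-supports computation relating $\bignorm{(\Psi_n\otimes\I)(\sigma)}_1=1$ to $(\sigma^{T_\Y})^2=\I_{\complex^n}\otimes\sigma_\Y^2$; the remaining steps are bookkeeping.
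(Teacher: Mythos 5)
Your construction is the paper's, and your overall strategy --- block-diagonal reduction of $\bignorm{(\Psi\otimes\I_{\L(\Y)})(\cdot)}_1$ to the single-copy negativities of the marginals $\sigma_j$, the structure theorem for negativity maximizers, and the induction via ``a pure marginal forces a tensor-product split'' --- is also essentially the paper's (its rank-one-reduction and projection-sandwich lemmas are exactly your factorization step). The genuine gap is the parenthetical claim that $\bignorm{\Psi\otimes\I_{\L(\Y)}}_1$ is attained on density operators. The induced trace norm is a maximum over \emph{all} $X$ with $\norm{X}_1\leq 1$, equivalently over rank-one operators $uv^*$ with $u\neq v$ permitted, and for a Hermiticity-preserving map with a restricted ancilla it is \emph{not} known that the optimum is attained at a Hermitian (let alone positive semidefinite) input; the paper's Discussion explicitly raises the Hermitian version of this as an open conjecture for the very maps $\Psi_{n,k}$ and notes it fails for general Hermiticity-preserving maps. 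Everything you do after the reduction --- orthogonality of the supports of $(\Theta_0\otimes\I)(\sigma)$ and $(\Theta_1\otimes\I)(\sigma)$, the identity $(\sigma^{T})^2=\I\otimes\sigma_\Y^2$, the purity-of-marginal factorization --- uses $\sigma_j\geq 0$, so none of it applies to the marginals of a non-positive optimizer, and the proof as written does not establish the strict inequality for the norm itself.

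The paper closes this hole by proving the negativity-maximizer characterization for arbitrary trace-norm-one operators: a singular value decomposition $X=\sum_i s_i x_i y_i^*$ forces every $x_i$ \emph{and} every $y_i$ to be maximally entangled, an equality-in-the-triangle-inequality proposition forces the associated isometries to satisfy $A_i^*A_j=0=B_i^*B_j$, and the resulting normal form $X=(\I\otimes U)(\tau\otimes\sigma)(\I\otimes V^*)$ carries two possibly distinct isometries $U,V$; the multiparty induction is then carried out at that level of generality. If you want to keep your orthogonal-supports route for arbitrary $\lambda$, you can still recover maximal negativity for a non-PSD marginal $\sigma$ by duality: a unitary $W$ with $\ip{W}{\lambda A-(1-\lambda)B}=1$ and $\norm{A}_1,\norm{B}_1\leq 1$ forces $\ip{W}{A}=\ip{W}{-B}=1$, hence $\norm{\lambda' A-(1-\lambda')B}_1=1$ for every $\lambda'\in(0,1)$, and the choice $\lambda'=\frac{n+1}{2n}$ turns the difference into $\frac{1}{n}\sigma^{T}$. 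But you would still need the general two-isometry structure theorem (not just its density-operator specialization) to run the induction. The remaining steps --- the value $\vertiii{\Psi}_1=1$ for all $\lambda$, the block decomposition, and the dimension count $\dim(\Y)\geq n\dim(\Y_2)\geq n^k$ --- are correct.
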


Note that in the setting of channel discrimination, by Theorem~\ref{thm:channel_discrimination} the equality 
\begin{align}
\Bigvertiii{
      \lambda \Gamma^{(0)}_{n,k} -
      (1 - \lambda) \Gamma^{(1)}_{n,k}}_1 = 1
\end{align}
implies that the channels $\Gamma^{(0)}_{n,k}$ and $\Gamma^{(1)}_{n,k}$ can be perfectly discriminated for any $\lambda \in (0,1)$. Also, as the input dimension is $n^k$, and the output dimension is $nk$, this family of channels contains instances with arbitrary finite gap between the input and output dimensions.

In the remainder of this section we describe the construction of a family of
channels for which the requirements of the above theorem are satisfied.
The proof that these channels indeed satisfy these requirements appears in the
two sections that follow.

For every integer $n\geq 2$, the \emph{Werner-Holevo channels} \cite{werner_counterexample_2002}
are defined as
\begin{align}
  \Phi_n^{(0)} = \frac{1}{n+1} \left(\Omega + T\right),\quad
  \Phi_n^{(1)} = \frac{1}{n-1} \left(\Omega - T \right),
\end{align}
where $\Omega \in \CP(\X)$ is defined as $\Omega(X) = \Tr(X) \I_\X$ on all 
$X \in \L(\X)$, where $\X = \complex^n$. 
Throughout this paper, for any finite sequence of f.d.~complex Hilbert spaces
$\X_1, \dots, \X_k$, we will denote the reduction to the $i^{th}$ subsystem as
$R_i \in \C(\X_1\otimes\cdots\otimes\X_k,\X_i)$. 
That is, for all $X_1 \in \L(\X_1), \dots, X_k \in \L(\X_k)$, the channel
$R_i$ acts as 
\begin{equation}
    R_i(X_1\otimes\cdots\otimes X_k) = \Big(\prod_{j\not=i}\Tr(X_j)\Big) X_i.
\end{equation}

Now, for integers $n \geq 2$ and $k \geq 1$, assume that $\X_1,\ldots,\X_k$ and
$\X$ denote copies of the space $\complex^n$.
We define the channels
\begin{equation}
  \Gamma_{n,k}^{(\alpha)} \in \C(\X_1\otimes \dots \otimes \X_k, 
  \complex^k \otimes \X)
\end{equation}
for all $X \in \L(\X_1 \otimes \dots \otimes \X_k)$ as
\begin{align}
	\Gamma^{(\alpha)}_{n,k}(X) = \frac{1}{k}\sum_{i=1}^k E_{ii} \otimes \Phi_n^{(\alpha)}\big(R_i(X) \big), \label{equation:channel_definition}
\end{align}
for each $\alpha \in \{0,1\}$, where each $R_i$ is regarded as a channel
of the form $R_i \in \C(\X_1 \otimes \dots \otimes \X_k, \X)$. 
Operationally, these channels represent randomly trashing all but one of the
input subsystems while keeping a classical record of which is kept, then
applying one of the Werner-Holevo channels. 
It holds that $\Gamma_{n,1}^{(\alpha)} \cong \Phi_n^{(\alpha)}$ under the
association $\complex \otimes \X \cong \X$, and hence the Werner-Holevo
channels themselves are contained in this family.

Similarly, define mappings
\begin{equation}
  \Psi_{n,k} \in \T(\X_1\otimes \dots \otimes \X_k, 
  \complex^k \otimes \X)
\end{equation}
for all $X \in \L(\X_1 \otimes \dots \otimes \X_k)$ as
\begin{align}
  \Psi_{n,k}(X) = \sum_{i=1}^k E_{ii} \otimes T\big(R_i(X)\big). \label{equation:psi_nk_def}
\end{align}
For $\lambda_n = \frac{n+1}{2n}$ the following relations hold
\begin{align}
  \frac{1}{n}T &= \lambda_n \Phi_n^{(0)} - (1-\lambda_n) \Phi_n^{(1)},
  \label{equation:werner_holevo_difference}\\
  \frac{1}{nk}\Psi_{n,k} &= \lambda_n \Gamma_{n,k}^{(0)} - (1-\lambda_n) 
  \Gamma_{n,k}^{(1)}.
\end{align}

The crux of proving Theorem~\ref{theorem:counter_examples} will be to prove that
\begin{align}
  \bignorm{\Psi_{n,k} \otimes \I_{\L(\Y)}}_1 < \vertiii{\Psi_{n,k}}_1 = nk 
  \label{equation:norm_relations}
\end{align}
whenever $\dim(\Y) < n^k$, which is equivalent to the desired norm relation of the theorem for the particular probability $\lambda_n$. The specific value $\lambda_n$ is used to make many expressions easier to work with, and the extension of the result from a particular probability to arbitrary $\lambda \in (0,1)$ will be made by a simple argument. 

\section{Induced 1-norm of partial transpose}
For proving the relations in Equation (\ref{equation:norm_relations}) it will be useful to first examine expressions of the form
\begin{align}
  \bignorm{ \big(T \otimes \I_{\L(\Y)}\big)(X) }_1 \label{equation:negativity}
\end{align}
for $X \in \L(\X \otimes \Y)$ with $\|X\|_1 = 1$. When $X \in \D(\X \otimes \Y)$ this quantity (up to multiplicative and additive scalars) has been called the \emph{negativity} of the state $X$ \cite{vidal_computable_2002}, and is an easy to compute, though non-faithful entanglement measure (where ``non-faithful'' means that there exist entangled states that minimize this quantity). We will abuse terminology by referring to Equation (\ref{equation:negativity}) as the negativity of $X$, even when $X$ is not a state.

We will begin by reviewing some facts about negativity. 
When $X$ is a rank-1 operator, the expression \eqref{equation:negativity}
takes a simple form, as proved in \cite[Proposition 8]{vidal_computable_2002}.

\begin{proposition}[Vidal and Werner]
  \label{proposition:pt-vectorized-norm}
  Let $\X$ and $\Y$ be f.d.\ complex Hilbert spaces. 
  For $A,B \in \L(\Y,\X)$ it holds that
  \begin{align}
    \bignorm{ \big(T \otimes \I_{\L(\Y)}\big)(\vec(A) \vec(B)^*) }_1 
    = \|A\|_1 \|B\|_1.
  \end{align}
\end{proposition}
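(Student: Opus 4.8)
The plan is to reduce everything to the singular value decompositions of $A$ and $B$ and observe that, once the partial transpose is applied, the resulting operator is already displayed in singular value form, so its trace norm can simply be read off.

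First I would fix reduced singular value decompositions $A = \sum_i s_i u_i v_i^*$ and $B = \sum_j t_j x_j y_j^*$, where $s_i, t_j > 0$, the sets $\{u_i\},\{x_j\}\subset\X$ are orthonormal, the sets $\{v_i\},\{y_j\}\subset\Y$ are orthonormal, and $\|A\|_1 = \sum_i s_i$, $\|B\|_1 = \sum_j t_j$. Using $\vec(uv^*) = u\otimes\overline{v}$ together with bilinearity of $(P,Q)\mapsto \vec(P)\vec(Q)^*$, I obtain $\vec(A)\vec(B)^* = \sum_{i,j} s_i t_j\, (u_i x_j^*)\otimes(\overline{v_i}\,\overline{y_j^{\,*}})$. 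Applying $T\otimes\I_{\L(\Y)}$ term by term and using $T(uw^*) = \overline{w}\,\overline{u^{\,*}}$ for $u,w\in\X$, each summand $s_i t_j (u_i x_j^*)\otimes(\overline{v_i}\,\overline{y_j^{\,*}})$ becomes $s_i t_j (\overline{x_j}\,\overline{u_i^{\,*}})\otimes(\overline{v_i}\,\overline{y_j^{\,*}})$, which I would regroup as the rank-one operator $s_i t_j\,(\overline{x_j}\otimes\overline{v_i})(\overline{u_i}\otimes\overline{y_j})^*$ on $\X\otimes\Y$. Summing over $i,j$ gives the closed form $\big(T\otimes\I_{\L(\Y)}\big)(\vec(A)\vec(B)^*) = \sum_{i,j} s_i t_j\,(\overline{x_j}\otimes\overline{v_i})(\overline{u_i}\otimes\overline{y_j})^*$.

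The key observation is then that $\{\overline{x_j}\otimes\overline{v_i}\}_{i,j}$ and $\{\overline{u_i}\otimes\overline{y_j}\}_{i,j}$ are orthonormal sets in $\X\otimes\Y$, since entrywise conjugation preserves orthonormality and a tensor product of orthonormal sets is orthonormal. Hence the displayed closed form is itself a singular value decomposition, with singular values $\{s_i t_j\}_{i,j}$, so its trace norm equals $\sum_{i,j} s_i t_j = \big(\sum_i s_i\big)\big(\sum_j t_j\big) = \|A\|_1\|B\|_1$, as claimed. The only thing that requires care is the transpose/conjugate bookkeeping and confirming that those two index families really are orthonormal; there is no substantive obstacle. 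As an alternative route, one could first establish the operator identity $\big(T\otimes\I_{\L(\Y)}\big)(\vec(A)\vec(B)^*) = (\overline{B}\otimes A^{\t})\,W_{\X\Y}$ and then conclude using unitary invariance of $\|\cdot\|_1$ together with $\|\overline{B}\otimes A^{\t}\|_1 = \|\overline{B}\|_1\|A^{\t}\|_1 = \|B\|_1\|A\|_1$, but the singular-value argument is the most transparent.
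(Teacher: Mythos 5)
Your proof is correct, and it is essentially the argument the paper intends: the paper gives no proof of this proposition, instead citing Vidal and Werner's Proposition~8 (the case $A=B$, proved there by exactly this Schmidt/singular-value computation) and remarking that the general case "can be reasoned similarly" --- your SVD calculation is that similar reasoning, carried out correctly, including the check that the two doubly-indexed families are orthonormal so that the partial transpose is displayed in singular value form. Your alternative route via the identity $\big(T \otimes \I_{\L(\Y)}\big)(\vec(A)\vec(B)^*) = (\overline{B}\otimes A^{\t})W_{\X\Y}$ is also valid and is in fact the same swap-operator identity the paper uses later in the proof of Theorem~\ref{theorem:structure_of_operators}.
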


Note that \cite[Proposition 8]{vidal_computable_2002} is proven for the case $A=B$, but the above can be reasoned similarly. From this the following known facts can be deduced.

\begin{proposition} \label{proposition:induced-T-norm}
Let $\X = \complex^n$, $\Y = \complex^m$. For $u,v \in \S(\X \otimes \Y)$, it holds that
\begin{align}
	\bignorm{ \big(T \otimes \I_{\L(\Y)}\big)(uv^*) }_1 \leq \min(n,m), \label{equation:rank_1_bound}
\end{align}
with equality if and only if both $u$ and $v$ are maximally entangled. In particular this implies
\begin{align}
	\bignorm{ T \otimes \I_{\L(\Y)} }_1 = \min(n, m). \label{eqn:1_norm_val}
\end{align} 
\begin{proof}
For $u,v \in \S(\X \otimes \Y)$, let $A, B \in \L(\Y \otimes \X)$ satisfy $u = \vec(A)$ and $v = \vec(B)$. By Proposition~\ref{proposition:pt-vectorized-norm},
\begin{equation}
    \bignorm{ \big(T \otimes \I_{\L(\Y)}\big)(\vec(A) \vec(B)^*) }_1 = \|A\|_1 \|B\|_1\\ \leq \min(n,m) \|A\|_2 \|B\|_2 = \min(n,m),
\end{equation}
where the inequality follows from the inequality $\|A\|_1 \leq
\sqrt{\min(n,m)}\|A\|_2$, with equality if and only if either $A$  or $A^*$ is
a scalar multiple of an isometry. 
Hence, we have the inequality in Equation (\ref{equation:rank_1_bound}), with
equality holding if and and only if $u$ and $v$ are maximally entangled.

Equation (\ref{eqn:1_norm_val}) follows as the induced $1$-norm can be written as an optimization restricted to operators of the form $uv^*$ for $u,v \in \S(\X \otimes \Y)$.
\end{proof}
\end{proposition}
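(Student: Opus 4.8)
The plan is to reduce everything to the rank-one identity of Proposition~\ref{proposition:pt-vectorized-norm} together with the elementary inequality relating the trace and Frobenius norms. First I would write $u = \vec(A)$ and $v = \vec(B)$ for the (uniquely determined) operators $A,B \in \L(\Y,\X)$; since $\vec$ is an isometric isomorphism, the hypothesis $u,v \in \S(\X\otimes\Y)$ becomes $\|A\|_2 = \|B\|_2 = 1$. Applying Proposition~\ref{proposition:pt-vectorized-norm} gives $\bignorm{(T\otimes\I_{\L(\Y)})(uv^*)}_1 = \|A\|_1\|B\|_1$, so the whole problem is now a statement about the two scalars $\|A\|_1$ and $\|B\|_1$.

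Next I would invoke the standard bound $\|C\|_1 \le \sqrt{\rank(C)}\,\|C\|_2 \le \sqrt{\min(n,m)}\,\|C\|_2$, valid for any $C \in \L(\Y,\X)$ with $\X = \complex^n$ and $\Y = \complex^m$; this is just Cauchy--Schwarz applied to the (at most $\min(n,m)$) singular values of $C$. Together with $\|A\|_2 = \|B\|_2 = 1$ this yields $\|A\|_1\|B\|_1 \le \min(n,m)$, which is exactly \eqref{equation:rank_1_bound}.

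For the equality condition I would track when the Cauchy--Schwarz step is tight. Equality $\|C\|_1 = \sqrt{\min(n,m)}\,\|C\|_2$ holds precisely when $C$ has the maximal number $\min(n,m)$ of nonzero singular values and they are all equal, i.e.\ when $C$ or $C^*$ is a nonzero scalar multiple of an isometry; combined with $\|C\|_2 = 1$ this says that $\sqrt{\min(n,m)}\,C$ is an isometry or a coisometry. Unwinding the definition of maximal entanglement through the identity $\vec(xy^*) = x\otimes\overline{y}$ (so that a Schmidt decomposition $u = \tfrac1{\sqrt r}\sum_i x_i\otimes y_i$ corresponds to $A = \tfrac1{\sqrt r}\sum_i x_i\overline{y_i}^{\,*}$ with orthonormal $\{x_i\}$ and $\{\overline{y_i}\}$), this is exactly the statement that the associated vector ($u$ for $A$, $v$ for $B$) is maximally entangled. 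Hence equality in \eqref{equation:rank_1_bound} holds if and only if both $u$ and $v$ are maximally entangled.

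Finally, for \eqref{eqn:1_norm_val} I would use that $X \mapsto \bignorm{(T\otimes\I_{\L(\Y)})(X)}_1$ is convex and that the unit ball of the trace norm on $\L(\X\otimes\Y)$ has its extreme points among the rank-one operators $uv^*$ with $u,v \in \S(\X\otimes\Y)$; therefore $\vertiii{T\otimes\I_{\L(\Y)}}_1$ equals the supremum of $\bignorm{(T\otimes\I_{\L(\Y)})(uv^*)}_1$ over such $u,v$, which by the first part is $\min(n,m)$, attained by taking $u = v$ any maximally entangled vector (e.g.\ $\vec(\I_\X)/\sqrt{n}$ when $n \le m$). The only place that needs care is the equality analysis --- making sure the Cauchy--Schwarz tightness condition is correctly transcribed into the ``maximally entangled'' language on each factor separately --- but this is bookkeeping rather than a genuine obstacle, since all the analytic content is already packaged in Proposition~\ref{proposition:pt-vectorized-norm}.
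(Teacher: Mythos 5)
Your proposal is correct and follows essentially the same route as the paper's proof: reduce to the rank-one identity of Proposition~\ref{proposition:pt-vectorized-norm}, apply $\|C\|_1 \le \sqrt{\min(n,m)}\,\|C\|_2$ with its tightness condition (scalar multiple of an isometry or coisometry, i.e.\ maximal entanglement of the corresponding vector), and then pass to the induced norm via extremality of the rank-one operators. The only differences are that you spell out the Cauchy--Schwarz tightness analysis and the extreme-point justification in slightly more detail than the paper does.
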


We remark that the equality condition for Equation
(\ref{equation:rank_1_bound}), when $u=v$, is the well known fact that the only
pure states which maximize negativity are maximally entangled. 
We also remark that Equation (\ref{eqn:1_norm_val}) was proved in
\cite[Theorem 1.2]{tomiyama_transpose_1983}, where it was proved that
$\bignorm{ T \otimes \I_{\L(\Y)}}_\infty = n$, and because partial
transposition is self-adjoint, 
$\bignorm{ T \otimes \I_{\L(\Y)}}_\infty = \bignorm{ T \otimes \I_{\L(\Y)}}_1$.

Proposition~\ref{proposition:induced-T-norm} implies, for $n \geq 2$, 
$m \geq 1$, and $\lambda_n = \frac{n+1}{2n}$, that
\begin{equation}
  \begin{multlined}
    \max\big\{ \bignorm{ \lambda_n (\Phi_n^{(0)} \otimes \I_{\L(\complex^m)})(\rho) - (1- \lambda_n)
      (\Phi_n^{(1)} \otimes \I_{\L(\complex^m)})(\rho) }_1 : \rho \in \D(\complex^n \otimes \complex^m)
    \big\}\\
    = \frac{1}{n} \bignorm{ T \otimes \I_{\L(\complex^m)} }_1 
    = \frac{1}{n}\min(n,m).
  \end{multlined}
\end{equation}
Hence, for an ancilla of dimension $m$, the optimal success probability of a
channel discrimination game for the Werner-Holevo channels with probability
$\lambda_n$ is
\begin{align}
  \frac{1}{2} + \frac{1}{2n} \min(n,m).
\end{align}
In particular, this implies that this channel discrimination game can be won
with certainty if and only if $m \geq n$. 

To prove Theorem~\ref{theorem:counter_examples} it will be useful to generalize Proposition~\ref{proposition:induced-T-norm} to a full characterization of when $\bignorm{\big(T \otimes \I_{\L(\Y)}\big)(X) }_1 = n$ for (a not-necessarily rank-1) $X \in \L(\X \otimes \Y)$ with $\|X\|_1 = 1$. First we prove a proposition about equality conditions in the triangle inequality for the trace norm for sets of orthogonal operators, which requires two facts. The first is that for $A \in \L(\X)$, it holds that
\begin{align}
	\|A\|_1 = \max \{|\ip{U}{A}| : U \in \U(\X)\},
\end{align}
and the second is that $\Tr(A) = \|A\|_1$ if and only if $A \geq 0$. 

\begin{proposition} \label{prop:triangle_equality}
Let $\{A_i\}_{i=1}^r \subset \L(\X,\Y)$ be an orthogonal set. If
\begin{align}
	\biggnorm{\sum_{i=1}^rA_i}_1 = \sum_{i=1}^r\left\|A_i\right\|_1, \label{eqn:triangle_equality}
\end{align}
then it holds that $A_iA_j^* = 0$ and $A_i^*A_j = 0$ for all $i \neq j$. 
\begin{proof}
  Assume first that $\Z$ is an arbitrary f.d.\ complex Hilbert space, and 
  $B,C\in\L(\Z)$ are orthogonal operators for which the equality
  $\norm{B+C}_1 = \norm{B}_1 + \norm{C}_1$ holds.
  Let $U \in \U(\Z)$ be a unitary operator satisfying
  \begin{equation}
    \ip{U}{B+C} = \norm{B + C}_1.
  \end{equation}
  It follows that $\ip{U}{B} = \norm{B}_1$ and $\ip{U}{C} = \norm{C}_1$, and
  therefore $U^{\ast} B = B^{\ast} U$ and $U^{\ast} C = C^{\ast} U$ are both
  positive semidefinite operators.
  We have
  \begin{equation}
    \ip{B^{\ast} U}{U^{\ast} C} = \ip{U^{\ast} B}{C^{\ast} U} = \ip{B}{C} = 0,
  \end{equation}
  and therefore $(B^{\ast} U)(U^{\ast} C) = 0$ and 
  $(U^{\ast} B)(C^{\ast} U) = 0$, as orthogonal positive semidefinite
  operators have product equal to zero.
  It follows that $B^{\ast} C = 0$ and $B C^{\ast} = 0$.

  Now choose $i,j \in \{1, \dots, r\}$ with $i\not=j$.
  The equality \eqref{eqn:triangle_equality} implies that
  $\norm{A_i + A_j}_1 = \norm{A_i}_1 + \norm{A_j}_1$.
  Defining $B,C\in\L(\X\oplus\Y)$ as
  \begin{equation}
    B = \begin{pmatrix}
      0 & 0\\
      A_i & 0
    \end{pmatrix}
    \quad\text{and}\quad
    C = \begin{pmatrix}
      0 & 0\\
      A_j & 0
    \end{pmatrix},
  \end{equation}
  we find that $B$ and $C$ are orthogonal operators satisfying
  $\norm{B + C}_1 = \norm{B}_1 + \norm{C}_1$, and therefore
  $B^{\ast} C = 0$ and $B C^{\ast} = 0$ from the argument above.
  This implies that $A_i A_j^{\ast} = 0$ and $A_i^{\ast} A_j = 0$ as required.
\end{proof}
\end{proposition}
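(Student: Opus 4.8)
The plan is to reduce the statement to a two-operator version for square matrices and then apply the two displayed facts about the trace norm.

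First I would treat the core case: suppose $B,C\in\L(\Z)$ are orthogonal with $\norm{B+C}_1=\norm{B}_1+\norm{C}_1$, the goal being $B^\ast C=0$ and $BC^\ast=0$. By the first fact there is a unitary $U\in\U(\Z)$ with $\ip{U}{B+C}=\norm{B+C}_1$. Since $\lvert\ip{U}{B}\rvert\le\norm{B}_1$ and $\lvert\ip{U}{C}\rvert\le\norm{C}_1$ while $\ip{U}{B}+\ip{U}{C}$ equals the nonnegative real number $\norm{B}_1+\norm{C}_1$, a short argument forces $\ip{U}{B}=\norm{B}_1$ and $\ip{U}{C}=\norm{C}_1$. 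Then $\Tr(U^\ast B)=\norm{B}_1=\norm{U^\ast B}_1$ by unitary invariance, so the second fact gives $U^\ast B\ge 0$, and likewise $U^\ast C\ge 0$; being positive semidefinite they are self-adjoint, so $U^\ast B=B^\ast U$ and $U^\ast C=C^\ast U$. Combining $\ip{B}{C}=0$ with $UU^\ast=\I$ and these identities shows that $B^\ast U,\,U^\ast C$ are orthogonal positive semidefinite operators, and so are $U^\ast B,\,C^\ast U$. Orthogonal positive semidefinite operators have vanishing product, so $(B^\ast U)(U^\ast C)=B^\ast C=0$ and $(U^\ast B)(C^\ast U)=0$; multiplying the latter by $U$ on the left and $U^\ast$ on the right gives $BC^\ast=0$.

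Next I would handle the two reductions. To go from an $r$-term family to a pair, fix $i\neq j$ and write
\[
\biggnorm{\sum_{k=1}^{r}A_k}_1 \;\le\; \norm{A_i+A_j}_1 + \sum_{k\neq i,j}\norm{A_k}_1 \;\le\; \sum_{k=1}^{r}\norm{A_k}_1 ,
\]
so under the hypothesis \eqref{eqn:triangle_equality} both inequalities are equalities and $\norm{A_i+A_j}_1=\norm{A_i}_1+\norm{A_j}_1$. To go from $\L(\X,\Y)$ to square operators, replace each $A_i$ by $\begin{pmatrix}0&0\\ A_i&0\end{pmatrix}\in\L(\X\oplus\Y)$; this substitution preserves the trace norm and Hilbert--Schmidt orthogonality, and the products of the block operators realize $A_i^\ast A_j$ and $A_iA_j^\ast$ in separate diagonal blocks, so the core case applied to the blocks delivers both $A_i^\ast A_j=0$ and $A_iA_j^\ast=0$.

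The step that needs the most care is the bookkeeping in the core case: one must extract \emph{both} $B^\ast C=0$ and $BC^\ast=0$, rather than just one of them --- this is what forces us to exploit both products $(B^\ast U)(U^\ast C)$ and $(U^\ast B)(C^\ast U)$, and it is exactly the block embedding at the end that makes both of these identities necessary for the conclusion.
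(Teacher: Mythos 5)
Your proposal is correct and follows essentially the same route as the paper's proof: the same core two-operator square case via a norming unitary $U$ and the fact that orthogonal positive semidefinite operators $B^*U$, $U^*C$ (and $U^*B$, $C^*U$) have zero product, followed by the same reduction to pairs and the same block embedding into $\L(\X\oplus\Y)$. The only difference is that you spell out a few steps the paper leaves implicit (the forcing of $\ip{U}{B}=\norm{B}_1$ and the pairwise triangle-inequality chain), which is fine.
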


We remark that the converse of the above proposition holds as well.
With this in hand we can generalize
Proposition~\ref{proposition:induced-T-norm}.

\begin{theorem} \label{theorem:structure_of_operators}
Let $\X=\complex^n$ and $\Y = \complex^m$. 
For $X \in \L(\X \otimes \Y)$ with $\|X \|_1 \leq 1$, the following are
equivalent.
\begin{enumerate}
\item $\bignorm{\big(T \otimes \I_{\L(\Y)}\big)(X)}_1 = n$.
\item $m \geq n$, and there exists a choice of
  $r \in \{1, \dots, \floor{m/n}\}$, $\sigma \in \D(\complex^r)$, and 
  $U,V \in \U(\X \otimes \complex^r, \Y)$ for which
  \begin{align}
    X = \left(\I_\X \otimes U\right) \left(\tau_\X \otimes \sigma\right) 
    \left(\I_\X \otimes V^*\right),
  \end{align}
  where $\tau_\X \in \D(\X \otimes \X)$ is the canonical maximally entangled
  state.
\end{enumerate}
When $X \in \D(\X \otimes \Y)$ the above equivalence holds with $V = U$.
\begin{proof}
  The fact that statement $2$ implies statement $1$ follows by a direct
  computation together with Proposition~\ref{proposition:induced-T-norm}.
  
  Now suppose that statement 1 holds, and observe that 
  Proposition~\ref{proposition:induced-T-norm} immediately implies $m \geq n$.
  Let
  \begin{equation}
    X = \sum_{i=1}^r s_i x_i y_i^*
  \end{equation}
  be a singular value decomposition of $X$, where $r = \rank(X)$.
  By Proposition~\ref{proposition:induced-T-norm} all of the $x_i$ and $y_i$
  must be maximally entangled, as the triangle inequality would otherwise
  allow one to conclude that
  \begin{equation}
    \bignorm{\big(T \otimes \I_{\L(\Y)}\big)(X)}_1 < n.
  \end{equation}
  Hence, for each $i$ there exist isometries $A_i,B_i \in \U(\X,\Y)$ for which
  \begin{equation}
    x_i = \frac{1}{\sqrt{n}}\vec(A_i^{\t})
    \quad\text{and}\quad
    y_i = \frac{1}{\sqrt{n}}\vec(B_i^{\t}).
  \end{equation}

  Now, note that
  \begin{equation}
    \big(T \otimes \I_{\L(\Y)}\big)(X) 
    = \frac{1}{n}W_{\X\Y}\sum_{i=1}^r s_i A_i \otimes B_i^{\ast},
  \end{equation}
  so that
  \begin{equation}
      n = \bignorm{\big(T \otimes \I_{\L(\Y)}\big)(X)}_1 
      = \frac{1}{n}\Bignorm{\sum_{i=1}^r s_i A_i \otimes B_i^{\ast}}_1
      \leq \frac{1}{n}\sum_{i=1}^rs_i \bignorm{ A_i \otimes B_i^{\ast}}_1 = n,
  \end{equation}
where the the last equality follows from the $A_i$ and $B_i$ being isometries,
and therefore 
\begin{equation}
  \left\| A_i \otimes B_i^{\ast}\right\|_1=n^2
\end{equation}
for every $i$. 
Hence, we have equality in the triangle inequality for these operators (which
are orthogonal as they arise from a singular value decomposition), and so
Proposition~\ref{prop:triangle_equality} implies
\begin{align}
  (A_i \otimes B_i^*)^*(A_j \otimes B_j^*) & = A_i^*A_j \otimes B_i B_j^* =
  0,\\
  (A_i \otimes B_i^*)(A_j \otimes B_j^*)^* & = A_i A_j^* \otimes B_i^*B_j = 0,
\end{align}
for all $i \neq j$. As these are isometries, $B_iB_j^* \neq 0$, so the first expression above gives $A_i^*A_j = 0$, and likewise the second implies $B_i^*B_j = 0$ for all $i \neq j$. Hence the $A_i$ (and respectively the $B_i$) embed $\X$ into $r$ mutually orthogonal $n$-dimensional subspaces of $\Y$, giving $rn \leq m$. 

Lastly, to get the particular form of $X$, define $U, V \in \U(\X \otimes \complex^r, \Y)$ as
\begin{align}
	U = \sum_{i=1}^r A_i \otimes e_i^* \quad\text{and}\quad V = \sum_{i=1}^r B_i \otimes e_i^*,
\end{align}
where the fact that $U$ and $V$ are isometries follows from $A_i^*A_j = 0 = B_i^*B_j$ for $i \neq j$. Defining 
\begin{align}
	\sigma = \sum_{i=1}^r s_i E_{ii} \in \D(\complex^r),
\end{align}
we see that
\begin{align}
	X &= \frac{1}{n}\sum_{i=1}^r s_i \vec(A_i^\t)\vec(B_i^\t)^*\\
	  &= (\I_\X \otimes U)\Big(\sum_{i=1}^r \frac{s_i}{n} \vec(\I_\X)\vec(\I_\X)^* \otimes E_{ii}\Big)(\I_\X \otimes V^*) \\
	  &= (\I_\X \otimes U) (\tau_\X \otimes \sigma) (\I_\X \otimes V^*),
\end{align}
as required.

When $X \in \D(\X \otimes \Y)$, in the above $B_i = A_i$, and hence $V=U$.
\end{proof}
\end{theorem}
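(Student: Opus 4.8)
The plan is to prove the two implications of the equivalence separately, with the understanding that $(2)\Rightarrow(1)$ is a routine computation while $(1)\Rightarrow(2)$ carries all the content. For $(2)\Rightarrow(1)$, given $X = (\I_\X \otimes U)(\tau_\X \otimes \sigma)(\I_\X \otimes V^*)$ I would first note that conjugation by the isometries $\I_\X \otimes U$ and $\I_\X \otimes V^*$ leaves trace norm unchanged, and that $T \otimes \I_{\L(\Y)}$ commutes with this conjugation because $T$ acts only on the leftmost copy of $\X$ while $U$ and $V$ act on the copy of $\Y \cong \X \otimes \complex^r$. This reduces the computation to $\bignorm{(T \otimes \I_{\L(\Y)})(X)}_1 = \bignorm{\bigl((T \otimes \I_{\L(\X)})(\tau_\X)\bigr) \otimes \sigma}_1$; since $(T \otimes \I_{\L(\X)})(\tau_\X) = \tfrac1n W_{\X\X}$ with $W_{\X\X}$ a unitary on $\complex^{n^2}$ and $\sigma \in \D(\complex^r)$, this equals $\tfrac1n \cdot n^2 \cdot 1 = n$. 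I would also record that $\|X\|_1 = \|\tau_\X \otimes \sigma\|_1 = 1$, so $X$ lies on the trace-norm unit sphere.

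For $(1)\Rightarrow(2)$, Proposition~\ref{proposition:induced-T-norm} immediately forces $m \geq n$, since $n = \bignorm{(T\otimes\I_{\L(\Y)})(X)}_1 \leq \min(n,m)\|X\|_1 \leq \min(n,m)$. I would then fix a singular value decomposition $X = \sum_{i=1}^r s_i x_i y_i^*$ with $r = \rank(X)$, all $s_i > 0$, and $\{x_i\},\{y_i\}$ orthonormal, so $\sum_i s_i = \|X\|_1 \leq 1$. Applying $T \otimes \I_{\L(\Y)}$ and combining the triangle inequality with Proposition~\ref{proposition:induced-T-norm} gives the chain $n = \bignorm{(T\otimes\I_{\L(\Y)})(X)}_1 \leq \sum_i s_i \bignorm{(T\otimes\I_{\L(\Y)})(x_iy_i^*)}_1 \leq n\sum_i s_i \leq n$, which must therefore be tight at every step. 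This yields three facts simultaneously: $\sum_i s_i = 1$ (hence $\|X\|_1 = 1$); each $\bignorm{(T\otimes\I_{\L(\Y)})(x_iy_i^*)}_1 = \min(n,m)$, so the equality clause of Proposition~\ref{proposition:induced-T-norm} makes every $x_i$ and $y_i$ maximally entangled; and equality holds in the triangle inequality for the operators $(T\otimes\I_{\L(\Y)})(x_iy_i^*)$.

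Next I would write each maximally entangled vector as $x_i = \tfrac{1}{\sqrt n}\vec(A_i^\t)$ and $y_i = \tfrac{1}{\sqrt n}\vec(B_i^\t)$ with $A_i, B_i \in \U(\X,\Y)$, and, using $(T \otimes \I_{\L(\X)})(\vec(\I_\X)\vec(\I_\X)^*) = W_{\X\X}$ together with the vectorization identity, rewrite $(T \otimes \I_{\L(\Y)})(X) = \tfrac1n W_{\X\Y}\sum_i s_i A_i \otimes B_i^*$. Since $W_{\X\Y}$ is unitary and $\|A_i \otimes B_i^*\|_1 = n^2$ for isometries, the triangle-equality from the previous step becomes exactly $\bignorm{\sum_i s_i A_i \otimes B_i^*}_1 = \sum_i s_i \|A_i \otimes B_i^*\|_1$. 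The $x_iy_i^*$ are pairwise orthogonal and $T \otimes \I$ preserves the Hilbert-Schmidt inner product, so $\{A_i \otimes B_i^*\}$ is an orthogonal set, and Proposition~\ref{prop:triangle_equality} gives $(A_iA_j^*)\otimes(B_i^*B_j) = 0$ and $(A_i^*A_j)\otimes(B_iB_j^*) = 0$ for $i \neq j$. Because $B_i,B_j$ are isometries the second tensor factors are nonzero, forcing $A_i^*A_j = 0 = B_i^*B_j$; hence the $A_i$ (and likewise the $B_i$) embed $\X$ into $r$ mutually orthogonal $n$-dimensional subspaces of $\Y$, so $rn \leq m$, i.e. $r \le \floor{m/n}$. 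I would then set $U = \sum_i A_i \otimes e_i^*$ and $V = \sum_i B_i \otimes e_i^*$ (isometries by $A_i^*A_j = \delta_{ij}\I_\X$, and similarly for $B$), put $\sigma = \sum_i s_i E_{ii} \in \D(\complex^r)$, and check by direct expansion that $(\I_\X \otimes U)(\tau_\X \otimes \sigma)(\I_\X \otimes V^*) = \tfrac1n\sum_i s_i \vec(A_i^\t)\vec(B_i^\t)^* = X$. For $X \in \D(\X \otimes \Y)$ the singular value decomposition is the spectral decomposition, so $y_i = x_i$, $B_i = A_i$, and $V = U$.

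The \emph{main obstacle} is the equality analysis, where structure must be squeezed out of two stacked rigidity statements: the equality condition inside Proposition~\ref{proposition:induced-T-norm} (which itself rests on the equality case of $\|A\|_1 \leq \sqrt{\min(n,m)}\|A\|_2$) pins down each rank-one SVD term as built from isometries, and then the equality condition in the triangle inequality (Proposition~\ref{prop:triangle_equality}) forces the images of those isometries to be mutually orthogonal. The connecting step is the algebraic manipulation that puts $(T \otimes \I_{\L(\Y)})(X)$ into the form $\tfrac1n W_{\X\Y}\sum_i s_i A_i \otimes B_i^*$, since it is precisely this that makes Proposition~\ref{prop:triangle_equality} applicable to a clean orthogonal family; once that identity is in hand, assembling $U$, $V$, and $\sigma$ is bookkeeping.
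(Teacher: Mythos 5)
Your proposal is correct and follows essentially the same route as the paper: reduce to a singular value decomposition, use the rank-one equality case of Proposition~\ref{proposition:induced-T-norm} to force each $x_i,y_i$ to be maximally entangled, rewrite $(T\otimes\I_{\L(\Y)})(X)$ as $\tfrac{1}{n}W_{\X\Y}\sum_i s_i A_i\otimes B_i^*$, and invoke Proposition~\ref{prop:triangle_equality} to get orthogonality of the isometries' ranges before assembling $U$, $V$, and $\sigma$. The only (harmless) differences are presentational: you make explicit that the tight chain also forces $\sum_i s_i=1$, and you justify the orthogonality of $\{A_i\otimes B_i^*\}$ via preservation of the Hilbert--Schmidt inner product, where the paper simply cites the singular value decomposition.
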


\section{Proof of counterexamples} \label{section:proof_of_counterexamples}

We will now prove Theorem~\ref{theorem:counter_examples} via a multiparty generalization of Theorem~\ref{theorem:structure_of_operators}. We first show that, for any $\X_1, \dots, \X_k, \Y$, and $X \in \L(\X_1 \otimes \dots \otimes \X_k \otimes \Y)$ with $\|X\|_1= 1$,
\begin{align}
	\bignorm{ \big(T_{\X_i} \otimes \I_{\L(\Y)}\big)\big(\big(R_i \otimes \I_{\L(\Y)}\big)(X)\big)}_1 = \dim(\X_i),
\end{align}
for all $1 \leq i \leq k$ if and only if 
\begin{align}
	\bignorm{ \big(T_{\X_1 \otimes \dots \otimes \X_k} \otimes \I_{\L(\Y)}\big)(X)}_1 = \prod_{i=1}^k\dim(\X_i) = \dim(\X_1 \otimes \dots \otimes \X_k),
\end{align}
where we are using subscripts on the transpose map to be explicit about which space it is acting on. In other words, all of the $\X_i$ subsystems are maximally entangled with $\Y$ (as measured by negativity) if and only if $\X_1 \otimes \dots \otimes \X_k$ is maximally entangled with $\Y$. This equivalence is given in Theorem~\ref{theorem:full_characterization}, which is essentially induction applied to Theorem~\ref{theorem:structure_of_operators}. Figure \ref{figure:induction} gives a visual presentation of the structure of the operators. By applying this equivalence, we conclude that, for $\X_1, \dots, \X_k, \X$ denoting copies of $\complex^n$, and $X \in \L(\X_1 \otimes \dots \otimes \X_k \otimes \Y)$ with $\|X\|_1 = 1$,
\begin{align}
	\bignorm{\big(\Psi_{n,k} \otimes \I_{\L(\Y)}\big)(X)}_1 = nk \label{equation:achieved_value}
\end{align} 
if and only if
\begin{align}
	\bignorm{ \big(T_{\X_1 \otimes \dots \otimes \X_k} \otimes \I_{\L(\Y)}\big)(X)}_1 = n^k
\end{align}
for all $i$, and hence Equation (\ref{equation:achieved_value}) is only
possible if
\begin{equation}
  \dim(\Y) \geq n^k = \dim(\X_1 \otimes \dots \otimes \X_k),
\end{equation}
where $\Psi_{n,k}$ is defined in Equation (\ref{equation:psi_nk_def}).
This provides a proof of Equation (\ref{equation:norm_relations}) which, as described at the end of Section~\ref{section:channel_discrimination}, enables a proof of the statement in Theorem~\ref{theorem:counter_examples} for the channels defined in Equation (\ref{equation:channel_definition}) and for the particular probability $\lambda_n = \frac{n+1}{2n}$. The statement for all $\lambda \in (0,1)$ will then follow by an easy argument.

\begin{figure}[!ht]
\centering
  \begin{subfigure}[b]{0.4\textwidth}
  \centering
\begin{tikzpicture}
	\draw[black, dashed] (-0.1,0)rectangle (0.75,0.75);
	\filldraw [black] (0.525,0.375) circle (1.5pt);
	\node[draw=none] at (0.275,0.375) {$\X$};
	
	\draw[black, dashed] (2,-0.75)rectangle (2.85,0.75);
	\node[draw=none] at (2.475,0.375) {$\X$};
	\filldraw [black] (2.2,0.375) circle (1.5pt);
	
	\node[draw=none] (test) at (2.475,-0.375) {$\complex^r$};
	\filldraw [black] (2.2,-0.375) circle (1.5pt);
	
	\draw (0.525,0.375) -- (2.2,0.375);
	\node[draw=none] at (1.3625,0.6) {$\tau_\X$};
	
	\draw (1.3625,-0.375) -- (2.2,-0.375);
	\node[draw=none] at (1.3625,-0.25) {$\sigma$};
	\node[draw=none] at (0, -1.4) {};
\end{tikzpicture}
        \caption{Theorem~\ref{theorem:structure_of_operators}}
    \end{subfigure}
      \begin{subfigure}[b]{0.5\textwidth}
      \centering
\begin{tikzpicture}
	\draw[black, dashed] (-0.1,-1.5)rectangle (0.75,0.75);
	\filldraw [black] (0.525,0.375) circle (1.5pt);
	\node[draw=none] at (0.275,0.375) {$\X_1$};
	
	\node[draw=none] at (0.375,-0.175) {$\vdots$};
	
	\filldraw [black] (0.525,-1.125) circle (1.5pt);
	\node[draw=none] at (0.275,-1.125) {$\X_k$};
	
	\draw[black, dashed] (2,-2.25)rectangle (2.85,0.75);
	\node[draw=none] at (2.475,0.375) {$\X_1$};
	\filldraw [black] (2.2,0.375) circle (1.5pt);
	
	\node[draw=none] at (2.375,-0.175) {$\vdots$};
	
	\node[draw=none] (test) at (2.475,-1.125) {$\X_k$};
	\filldraw [black] (2.2,-1.125) circle (1.5pt);
	
	\node[draw=none] at (2.475,-1.875) {$\complex^r$};
	\filldraw [black] (2.2,-1.875) circle (1.5pt);
	
	\draw (0.525,0.375) -- (2.2,0.375);
	\node[draw=none] at (1.3625,0.6) {$\tau_{\X_1}$};
	
	\node[draw=none] at (1.3625,-0.175) {$\vdots$};
	
	\draw (0.525,-1.125) -- (2.2,-1.125);
	\node[draw=none] at (1.3625,-0.9) {$\tau_{\X_k}$};
	
	\draw (1.3625,-1.875) -- (2.2,-1.875);
	\node[draw=none] at (1.3625,-1.75) {$\sigma$};
\end{tikzpicture}
        \caption{Theorem~\ref{theorem:full_characterization}}
    \end{subfigure}
    \caption{This is a diagrammatic representation of the structures given in Theorem~\ref{theorem:structure_of_operators} and Theorem~\ref{theorem:full_characterization}. In Theorem~\ref{theorem:structure_of_operators} the ancilla system factorizes into $\X \otimes \complex^r$, and the operator $X$ looks like something maximally entangled across the $\X$ systems with $\sigma$ left over. In Theorem~\ref{theorem:full_characterization}, this factorization-and-maximally-entangled structure is repeated $k$-times, again, potentially with some $\sigma$ left over.\label{figure:induction}}
\end{figure}

Before beginning we introduce an implicit permutation notation. At points in
the section we will be working with operators that act on a tensor product
space, where the ordering of the tensor factors for which it is convenient to
specify the operator is not the same as the ordering used in the context that
the operator appears. This primarily occurs for operators of product form. For
example, given $A \in \L(\X \otimes \Z)$, and $B \in \L(\Y)$, the operator $A
\otimes B \in \L(\X \otimes \Z \otimes \Y)$ has a simple form, but if our
spaces are naturally ordered as $\X \otimes \Y \otimes \Z$, then we must write
\begin{equation}
  (\I_\X \otimes  W_{\Z,\Y})(A \otimes B)(\I_\X \otimes W_{\Z,\Y}^*)
\end{equation}
to specify it as an operator in $\L(\X \otimes \Y \otimes \Z)$, which can become clunky. 

To avoid this, we introduce the following notation. For some finite list of f.d.\ Hilbert spaces $\Z_1, \dots, \Z_k$, a permutation $\sigma : \{1, \dots, k\} \rightarrow \{1, \dots, k\}$, and an operator $X \in \L(\Z_1 \otimes \dots \otimes \Z_k)$, we write
\begin{align}
	\underbrace{X}_{\mathclap{\in \L(\Z_{\sigma(1)} \otimes \dots \otimes \Z_{\sigma(k)})}} = PXP^*,
\end{align}
where $P \in \U(\Z_1 \otimes \dots \otimes \Z_k, \Z_{\sigma(1)} \otimes \dots \otimes \Z_{\sigma(k)})$ is the isometry which permutes the subsystems as given in the definition. For the example in the preceding paragraph, this notation gives
\begin{align}
	\underbrace{A \otimes B}_{\mathclap{\in \L(\X \otimes \Y \otimes \Z)}} = (\I_\X \otimes  W_{\Z,\Y})(A \otimes B)(\I_\X \otimes W_{\Z,\Y}^*).
\end{align}
Note as well that for f.d.\ complex Hilbert spaces $\A$ and $\B$, it holds that
\begin{align}
	\tau_{\A \otimes \B} = \underbrace{ \tau_{\A} \otimes \tau_\B}_{\mathclap{\in \L(\A \otimes \B \otimes \A \otimes \B)}}. \label{equation:max_entangled_permutation}
\end{align}
In the above there is a potential ambiguity as multiple copies of the same space appear, so it is not necessarily well defined. In this case however, the operator is invariant under swapping the order of these copies, and so there is no real ambiguity.

To prove the multiparty generalization of Theorem~\ref{theorem:structure_of_operators} we require a couple lemmas.

\begin{lemma} \label{lemma:rank_one_reduction}
Let $X \in \L(\X \otimes \Y)$ with $\|X\|_1 =1$. If $\Tr_\Y(X) = uv^*$ for some
$u,v \in \S(\X)$, then there exists $\sigma \in \D(\Y)$ for which $X = uv^*
\otimes \sigma$. 

\begin{proof}
First consider the case in which $X$ is positive semidefinite, and therefore
a density operator by the condition $\norm{X}_1 = 1$.
The partial trace is a positive map, from which it follows that $v = u$.
Define a projection operator $\Pi = \I_\X - uu^*$, and observe that 
$\ip{\Pi \otimes \I_\Y}{X} = \ip{\Pi}{\Tr_\Y(X)} = 0$.
As $X$ and $\Pi\otimes\I_{\Y}$ are both positive semidefinite, it follows that
$(\Pi \otimes \I_\Y) X = X (\Pi \otimes \I_\Y) = 0$, and therefore
\begin{align}
  X &= \left(uu^* \otimes \I_\Y + \Pi \otimes \I_\Y\right)X
  \left(uu^* \otimes \I_\Y + \Pi \otimes \I_\Y\right) \\
  &= \left(uu^* \otimes \I_\Y\right)X\left(uu^* \otimes \I_\Y\right)  \\
  &= uu^* \otimes \sigma,
\end{align}
where $\sigma = \left(u^* \otimes \I_\Y\right)X\left(u \otimes \I_\Y\right) \in
\D(\Y)$.

For the general case, let $U \in \U(\X)$ be a unitary operator satisfying
$Uu = v$.
It follows that
\begin{equation}
  \norm{(U\otimes \I_{\Y})X}_1 = 1 = \Tr((U\otimes \I_{\Y})X),
\end{equation}
and therefore $(U\otimes \I_{\Y})X$ is positive semidefinite.
Substituting $X$ with the operator $(U\otimes\I_{\Y})X$ in the case considered
above yields $(U\otimes\I_{\Y})X = v v^{\ast} \otimes \sigma$ for some choice of $\sigma\in\D(\Y)$, and therefore 
$X = u v^{\ast} \otimes \sigma$,
which completes the proof.
\end{proof}
\end{lemma}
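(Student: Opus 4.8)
The plan is to reduce everything to the case where $X$ is positive semidefinite, and then to use the elementary fact that a density operator with a pure marginal must be a product state.

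First I would dispose of the case $X \in \Pos(\X\otimes\Y)$, so that $\norm{X}_1 = 1$ makes $X$ a density operator. Since the partial trace is positive, $\Tr_\Y(X) = uv^*$ is then positive semidefinite, which forces $v = u$, so that $\Tr_\Y(X) = uu^*$ is a rank-one projection. Setting $\Pi = \I_\X - uu^*$, one has $\ip{\Pi\otimes\I_\Y}{X} = \ip{\Pi}{\Tr_\Y(X)} = 0$; since $\Pi\otimes\I_\Y$ and $X$ are both positive semidefinite, orthogonality gives $(\Pi\otimes\I_\Y)X = 0 = X(\Pi\otimes\I_\Y)$. Sandwiching $X$ between $\I_\X\otimes\I_\Y = uu^*\otimes\I_\Y + \Pi\otimes\I_\Y$ on both sides then collapses it onto the $u$-block, yielding $X = (uu^*\otimes\I_\Y)X(uu^*\otimes\I_\Y) = uu^*\otimes\sigma$ with $\sigma = (u^*\otimes\I_\Y)X(u\otimes\I_\Y)$; positivity of $\sigma$ is immediate and $\Tr(\sigma) = \Tr(X) = 1$, so $\sigma\in\D(\Y)$.

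For the general case I would pick a unitary $U\in\U(\X)$ with $Uu = v$ and apply the positive case to $(U\otimes\I_\Y)X$. Multiplying by $U\otimes\I_\Y$ leaves the trace norm unchanged, and $\Tr_\Y((U\otimes\I_\Y)X) = U\,\Tr_\Y(X) = Uuv^* = vv^*$, so $\Tr((U\otimes\I_\Y)X) = 1 = \norm{(U\otimes\I_\Y)X}_1$, which forces $(U\otimes\I_\Y)X \geq 0$. The positive case then gives $(U\otimes\I_\Y)X = vv^*\otimes\sigma$ for some $\sigma\in\D(\Y)$, and multiplying on the left by $U^*\otimes\I_\Y$ together with $U^*v = u$ produces $X = uv^*\otimes\sigma$.

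The only ingredients requiring any care are the fact that positive semidefinite operators with vanishing Hilbert--Schmidt inner product have zero product (which is already used earlier in the excerpt) and the standard identity $\Tr_\Y((A\otimes\I_\Y)Z) = A\,\Tr_\Y(Z)$; everything else is bookkeeping. I do not expect a genuine obstacle here: the content of the lemma is essentially ``factor out a rank-one block'', and the main decision is simply to arrange the argument so that the positive-semidefinite special case carries the whole load, with the unitary trick handling the reduction.
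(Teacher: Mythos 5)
Your proposal is correct and follows essentially the same route as the paper's own proof: the positive semidefinite case via the orthogonality of $X$ and $\Pi\otimes\I_\Y$ followed by the sandwiching argument, and then the reduction of the general case by a unitary $U$ with $Uu=v$ using the fact that trace norm equal to trace forces positive semidefiniteness. The only difference is cosmetic — you explicitly verify $\Tr_\Y((U\otimes\I_\Y)X)=vv^*$ before invoking the special case, which the paper leaves implicit.
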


\begin{lemma} \label{lemma:projection_sandwich}
Let $X \in \L(\X,\Y)$, and let $\Pi_1 \in \L(\Y)$ and $\Pi_2 \in \L(\X)$ be orthogonal projections. If 
\begin{equation}
	\left\|\Pi_1 X \Pi_2 \right\|_1 = \|X\|_1,
\end{equation} 
then it holds that $\Pi_1 X \Pi_2 = X$.
\begin{proof}
  Let $X = \sum_{i=1}^r s_i u_iv_i^*$ be a singular value decomposition of $X$.
  Then, we have that
  \begin{equation}
      \sum_{i=1}^r s_i = \|X\|_1 = \|\Pi_1 X \Pi_2\|_1 
      = \Bignorm{\sum_{i=1}^r s_i \Pi_1 u_iv_i^* \Pi_2}_1
      \leq \sum_{i=1}^r s_i \norm{\Pi_1 u_iv_i^* \Pi_2}_1 
      \leq  \sum_{i=1}^r s_i.
  \end{equation}
  Hence, all inequalities are equalities, which implies $1 = \norm{\Pi_1 u_iv_i^* \Pi_2}_1 = \|\Pi_1u_i\| \|\Pi_2 v_i\|$ for all $1 \leq i \leq r$, implying that $\Pi_1 u_i = u_i$ and $\Pi_2 v_i = v_i$ for all $i$, and hence $\Pi_1 X \Pi_2 = X$. 
\end{proof}
\end{lemma}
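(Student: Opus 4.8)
The plan is to reduce to rank-one pieces via a singular value decomposition of $X$ and then read off the equality conditions in the triangle inequality for the trace norm. First I would write $X = \sum_{i=1}^{r} s_i u_i v_i^{*}$ with $r = \rank(X)$, all $s_i > 0$, and $\{u_i\}$, $\{v_i\}$ orthonormal systems. Then $\Pi_1 X \Pi_2 = \sum_{i=1}^{r} s_i (\Pi_1 u_i)(\Pi_2 v_i)^{*}$, and since a rank-one operator $x w^{*}$ has trace norm $\norm{x}\,\norm{w}$, the triangle inequality gives
\begin{equation}
  \norm{\Pi_1 X \Pi_2}_1 \;\leq\; \sum_{i=1}^{r} s_i \norm{\Pi_1 u_i}\,\norm{\Pi_2 v_i} \;\leq\; \sum_{i=1}^{r} s_i \;=\; \norm{X}_1,
\end{equation}
where the second inequality uses that $\Pi_1$ and $\Pi_2$, being orthogonal projections, are contractions, so $\norm{\Pi_1 u_i}\leq\norm{u_i}=1$ and likewise $\norm{\Pi_2 v_i}\leq 1$.

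Next I would feed in the hypothesis $\norm{\Pi_1 X \Pi_2}_1 = \norm{X}_1$, which forces both inequalities to be equalities. Equality in the second is a term-by-term comparison of non-negative reals: each summand obeys $s_i \norm{\Pi_1 u_i}\,\norm{\Pi_2 v_i}\leq s_i$ with $s_i>0$, so necessarily $\norm{\Pi_1 u_i}=\norm{\Pi_2 v_i}=1$ for every $i$. Since $\norm{u_i}^2 = \norm{\Pi_1 u_i}^2 + \norm{(\I_\Y-\Pi_1)u_i}^2$, the equality $\norm{\Pi_1 u_i}=\norm{u_i}$ yields $(\I_\Y-\Pi_1)u_i=0$, i.e.\ $\Pi_1 u_i = u_i$, and symmetrically $\Pi_2 v_i = v_i$. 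Substituting these back in gives $\Pi_1 X \Pi_2 = \sum_{i=1}^{r} s_i (\Pi_1 u_i)(\Pi_2 v_i)^{*} = \sum_{i=1}^{r} s_i u_i v_i^{*} = X$.

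I do not expect a real obstacle; the two points that need care are (i) insisting that the singular value decomposition have exactly $r=\rank(X)$ terms so that every $s_i$ is strictly positive — this is precisely what lets the triangle-inequality equality be extracted index by index with no further structural input — and (ii) exploiting that $\Pi_1,\Pi_2$ are \emph{orthogonal} projections, since the Pythagorean identity above is what upgrades $\norm{\Pi u}=\norm{u}$ to $\Pi u = u$. An essentially equivalent alternative would go through the dual description $\norm{\cdot}_1 = \max\{\,|\ip{W}{\cdot}| : \norm{W}_\infty\leq 1\,\}$ together with the adjoint identity $\ip{W}{\Pi_1 X \Pi_2}=\ip{\Pi_1 W \Pi_2}{X}$, but the singular value decomposition route is the most direct.
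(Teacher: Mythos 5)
Your proof is correct and follows essentially the same route as the paper's: a singular value decomposition, the triangle inequality combined with the contraction bound $\norm{\Pi_1 u_i}\,\norm{\Pi_2 v_i}\leq 1$, and the resulting term-by-term equality forcing $\Pi_1 u_i = u_i$ and $\Pi_2 v_i = v_i$. The only difference is that you spell out the Pythagorean step upgrading $\norm{\Pi_1 u_i}=1$ to $\Pi_1 u_i = u_i$, which the paper leaves implicit.
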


We are now in a position to generalize Theorem~\ref{theorem:structure_of_operators} to a multiparty setting.

\begin{theorem} \label{theorem:full_characterization}
  Let $\X_1 = \complex^{n_1}, \dots, \X_k = \complex^{n_k}$, $\Y = \complex^m$,
  $N = \prod_{i=1}^k n_i = \dim(\X_1 \otimes \dots \otimes \X_k)$, and let 
  $X \in \L(\X_1 \otimes \dots \otimes \X_k \otimes \Y)$ with $\|X\|_1 = 1$. 
  The following are equivalent:
  \begin{enumerate}
  \item $\bignorm{\big(T_{\X_i} \otimes \I_{\L(\Y)}\big)\big(\big(R_i \otimes \I_{\L(\Y)}\big)(X)\big)}_1 = n_i$, for all $1 \leq i \leq k$. 
  \item $\bignorm{\big(T_{\X_1 \otimes \dots \otimes \X_k} \otimes \I_{\L(\Y)}\big)(X)}_1 = N$.
  \item $m\geq N$, and there is some $r \in \{1, \dots, \floor{m/N}\}$, $\sigma \in \D(\complex^r)$, and $U,V \in \U(\X_1 \otimes \dots \otimes \X_k \otimes \complex^r, \Y)$ for which
    \begin{align}
      X = (\I_{\X_1 \otimes \dots \otimes \X_k} \otimes U)(\tau_{\X_1 \otimes \dots \otimes \X_k} \otimes \sigma)(\I_{\X_1 \otimes \dots \otimes \X_k} \otimes V^*),
    \end{align}
    where $\tau_{\X_1 \otimes \dots \otimes \X_k} \in \D(\X_1 \otimes \dots
    \otimes \X_k\otimes \X_1 \otimes \dots \otimes \X_k)$ is the canonical
    maximally entangled state.
\end{enumerate}
If $X \in \D(\X_1 \otimes \dots \otimes \X_k \otimes \Y)$ the above equivalence holds with $V = U$.
\begin{proof}
The equivalence of statements~$2$ and $3$ is the content of
Theorem~\ref{theorem:structure_of_operators}, and from this we also retrieve
the statement that if $X$ is a density operator, then we can take $V=U$ in
statement~$3$. 
That statement~$3$ implies statement~$1$ follows by a direct computation, 
along with the observation in Equation
(\ref{equation:max_entangled_permutation}). 
When $k=1$, statements~1 and 2 are the same, so in this case there is nothing
to prove. 
When $k=2$ we will show that statement~1 implies statement~3 (in which case we
will have the full equivalence for $k=2$), then use induction to directly show
that statement 1 is equivalent to statement~2 for $k > 2$. 

For statement 1 implies statement 3 in the $k=2$ case, to simplify notation we
denote $\A = \X_1$, $\B = \X_2$, $a = n_1$, and $b = n_2$, and hence 
$N = ab$. We will use Lemmas \ref{lemma:rank_one_reduction} and \ref{lemma:projection_sandwich} to deduce the required form of $X$ from the structure that Theorem~\ref{theorem:structure_of_operators} gives for the reductions $\Tr_\A(X)$ and $\Tr_\B(X)$.
By Theorem~\ref{theorem:structure_of_operators} it follows from
$\bignorm{\big(T_\A \otimes \I_{\L(\Y)}\big)(\Tr_\B(X))}_1 = a$ that 
$a \leq m$, and there exists $s \in \{1, \dots, \floor{m/a}\}$, 
$\nu \in \D(\complex^s)$, and isometries 
$A,B \in \U(\A \otimes \complex^s, \Y)$ for which 
\begin{align}
  \Tr_\B(X) = (\I_\A \otimes A)(\tau_\A \otimes \nu)(\I_\A \otimes B^*).
\end{align}
This implies that 
\begin{align}
  \Tr_{\B \otimes \complex^s}((\I_{\A\otimes \B} \otimes A^*)X(\I_{\A\otimes \B} \otimes B)) = \tau_\A.
\end{align}
Note that
\begin{align}
  1 = \|\tau_\A\|_1 &= \left\| \Tr_{\B \otimes \complex^s}((\I_{\A\otimes \B} \otimes A^*)X(\I_{\A\otimes \B} \otimes B)) \right\|_1\\ &\leq \left\|(\I_{\A\otimes \B} \otimes A^*)X(\I_{\A\otimes \B} \otimes B) \right\|_1 \leq \|X\|_1 = 1,
\end{align}
giving $\left\|(\I_{\A\otimes \B} \otimes A^*)X(\I_{\A\otimes \B} \otimes B) \right\|_1 = 1$, and so Lemma~\ref{lemma:rank_one_reduction} implies that there exists $\eta \in \D(\B \otimes \complex^s)$ for which
\begin{align}
  (\I_{\A\otimes \B} \otimes A^*)X(\I_{\A\otimes \B} \otimes B) = \underbrace{ \tau_\A \otimes \eta}_{\mathclap{\in \L(\A  \otimes \B\otimes \A \otimes \complex^s)}},
\end{align}
and hence
\begin{align}
  (\I_{\A\otimes \B} \otimes AA^*)X(\I_{\A\otimes \B} \otimes BB^*) = (\I_{\A\otimes \B} \otimes A)\underbrace{ (\tau_\A \otimes \eta)}_{\mathclap{\in \L(\A\otimes \B \otimes \A  \otimes \complex^s)}}(\I_{\A\otimes \B} \otimes B^*).
\end{align}
As the above operator has trace norm $1$, and $\I_{\A\otimes \B} \otimes AA^*$
and $\I_{\A\otimes \B} \otimes BB^*$ are both orthogonal projections, 
Lemma~\ref{lemma:projection_sandwich} implies
\begin{align}
  X = (\I_{\A\otimes \B} \otimes A)\underbrace{ (\tau_\A \otimes \eta)}_{\mathclap{\in \L(\A  \otimes \B\otimes \A \otimes \complex^s)}}(\I_{\A\otimes \B} \otimes B^*).
\end{align}

Next, it holds that
\begin{align}
  \bignorm{\big(T_\B \otimes \I_{\L(\complex^s)}\big)(\eta)}_1 = \bignorm{\big(T_\B \otimes \I_{\L(\Y)}\big)(\Tr_\A(X))}_1 = b,
\end{align}
and so again by Theorem~\ref{theorem:structure_of_operators}, $b \leq s$, and there exists $r \in \{1, \dots, \floor{s/b}\}$, $\sigma \in \D(\complex^r)$, and an isometry $S \in \U(\B \otimes \complex^r, \complex^s)$ for which
\begin{align}
  \eta = (\I_\B \otimes S)(\tau_\B \otimes \sigma)(\I_\B \otimes S^*).
\end{align}
Hence, letting $U = A(\I_{\A} \otimes S)$ and $V = B(\I_{\A} \otimes S)$ we get that
\begin{align}
	X &= (\I_{\A\otimes \B} \otimes A)\underbrace{ [\tau_\A \otimes  (\I_\B \otimes S)(\tau_\B\otimes \sigma)(\I_\B \otimes S^*)]}_{\in \L(\A  \otimes \B\otimes \A \otimes \complex^s)}(\I_{\A\otimes \B} \otimes B^*) \\
	&= (\I_{\A\otimes \B} \otimes U)\underbrace{ (\tau_\A \otimes \tau_\B \otimes \sigma)}_{\mathclap{\in \L(\A  \otimes \B\otimes \A \otimes \B \otimes \complex^r)}}(\I_{\A\otimes \B} \otimes V^*) \\
	&=(\I_{\A\otimes \B} \otimes U) (\tau_{\A \otimes \B} \otimes \sigma)(\I_{\A\otimes \B} \otimes V^*),
\end{align}
and $ab \leq as \leq m$, and $r \leq s/b \leq m/ab$, as required.

Lastly, we show that statement 1 is equivalent to statement 2 for all $k$ by
induction.
So, assuming the equivalence holds for some $k \geq 2$, we show it holds for
$k+1$. Note that
\begin{align}
  \bignorm{\big(T_{\X_i} \otimes \I_{\L(\Y)}\big)\big(\big(R_i \otimes \I_{\L(\Y)}\big)(X)\big)}_1 = n_i
\end{align}
for all $1 \leq i \leq k$, by the induction hypothesis, is equivalent to 
\begin{align}
  \bignorm{\big(T_{\X_1 \otimes \dots \otimes \X_{k}} \otimes \I_{\L(\Y)}\big)(\Tr_{\X_{k+1}}(X))}_1 = \prod_{i=1}^kn_i,
\end{align}
which, together with $\bignorm{\big(T_{\X_{k+1}} \otimes \I_{\L(\Y)}\big)\big(\big(R_{k+1} \otimes \I_{\L(\Y)}\big)(X)\big)}_1 = n_{k+1}$, again by the induction hypothesis, is equivalent to
\begin{align}
  \bignorm{\big(T_{\X_1 \otimes \dots \otimes \X_{k+1}} \otimes \I_{\L(\Y)}\big)(X)}_1 = \prod_{i=1}^{k+1}n_i,
\end{align}
as required.
\end{proof}
\end{theorem}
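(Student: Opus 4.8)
The plan is to reduce the whole statement to Theorem~\ref{theorem:structure_of_operators} applied to well-chosen reductions of $X$, and then to bootstrap from $k=2$ to arbitrary $k$ by induction. First I would dispose of the routine parts. Statements $2$ and $3$ are equivalent for every $k$: this is just Theorem~\ref{theorem:structure_of_operators} with the single space $\X_1\otimes\dots\otimes\X_k$ (of dimension $N$) in place of $\X$, and that theorem also delivers the refinement $V=U$ when $X\in\D(\X_1\otimes\dots\otimes\X_k\otimes\Y)$. The implication $3\Rightarrow 1$ is a direct computation: writing $\tau_{\X_1\otimes\dots\otimes\X_k}=\tau_{\X_1}\otimes\dots\otimes\tau_{\X_k}$ in the sense of Equation~\eqref{equation:max_entangled_permutation}, applying $R_i\otimes\I_{\L(\Y)}$ replaces each factor $\tau_{\X_j}$, $j\neq i$, by its maximally mixed reduction, leaving an operator of the form (rank-one maximally entangled across $\X_i$)\,$\otimes$\,(trace-one remainder), whose negativity is $n_i$ by Proposition~\ref{proposition:induced-T-norm}. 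When $k=1$, statements $1$ and $2$ coincide, so nothing remains. The entire content is thus $1\Rightarrow 2$ (equivalently $1\Rightarrow 3$).

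The heart of the argument is the case $k=2$, where I would prove $1\Rightarrow 3$; write $\A=\X_1$, $\B=\X_2$, $a=n_1$, $b=n_2$, $N=ab$. From the $i=1$ part of statement~$1$, Proposition~\ref{proposition:induced-T-norm} first forces $\norm{\Tr_\B(X)}_1=1$, so Theorem~\ref{theorem:structure_of_operators} applies to $\Tr_\B(X)\in\L(\A\otimes\Y)$ and produces $s\leq m/a$, a state $\nu\in\D(\complex^s)$, and isometries $A,B\in\U(\A\otimes\complex^s,\Y)$ with $\Tr_\B(X)=(\I_\A\otimes A)(\tau_\A\otimes\nu)(\I_\A\otimes B^*)$. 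Pulling $A^*,B$ out of $X$ and tracing, the operator $(\I_{\A\otimes\B}\otimes A^*)X(\I_{\A\otimes\B}\otimes B)$ has trace norm $1$ and reduces onto the $\A$-systems to the rank-one operator $\tau_\A$, so Lemma~\ref{lemma:rank_one_reduction} upgrades this to a tensor factorization $(\I_{\A\otimes\B}\otimes A^*)X(\I_{\A\otimes\B}\otimes B)=\tau_\A\otimes\eta$ for some $\eta\in\D(\B\otimes\complex^s)$. Conjugating by $AA^*$ and $BB^*$ keeps the trace norm at $1$, so Lemma~\ref{lemma:projection_sandwich} gives back $X=(\I_{\A\otimes\B}\otimes A)(\tau_\A\otimes\eta)(\I_{\A\otimes\B}\otimes B^*)$ exactly. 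Now feed in the $i=2$ part of statement~$1$: after cancelling the isometries and the maximally mixed reduction of $\tau_\A$, it becomes $\norm{(T_\B\otimes\I_{\L(\complex^s)})(\eta)}_1=b$, and since $\eta$ is a genuine density operator, Theorem~\ref{theorem:structure_of_operators} writes $\eta=(\I_\B\otimes S)(\tau_\B\otimes\sigma)(\I_\B\otimes S^*)$ with $S$ an isometry and $\sigma\in\D(\complex^r)$, $r\leq s/b$. Substituting back, setting $U=A(\I_\A\otimes S)$ and $V=B(\I_\A\otimes S)$, and using $\tau_\A\otimes\tau_\B=\tau_{\A\otimes\B}$ up to permutation, yields the claimed form, with $ab\leq as\leq m$ and $r\leq m/(ab)$. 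The asymmetry $U\neq V$ enters only through $A\neq B$, so if $X$ is itself a density operator then $\Tr_\B(X)$ is too, $A=B$, and $U=V$.

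For general $k$ I would induct on the statement ``$1\Leftrightarrow 2$'', with the just-proved $k=2$ case as the engine. Suppose $1\Leftrightarrow 2$ for $k$ subsystems, and take $X\in\L(\X_1\otimes\dots\otimes\X_{k+1}\otimes\Y)$ with $\norm{X}_1=1$. For $i\leq k$ the channel $R_i$ factors through $\Tr_{\X_{k+1}}$, so the $i\leq k$ half of statement~$1$ for $X$ is exactly statement~$1$ for the $k$-party operator $\Tr_{\X_{k+1}}(X)$; Proposition~\ref{proposition:induced-T-norm} again forces $\norm{\Tr_{\X_{k+1}}(X)}_1=1$, so the induction hypothesis turns that half into $\norm{(T_{\X_1\otimes\dots\otimes\X_k}\otimes\I_{\L(\Y)})(\Tr_{\X_{k+1}}(X))}_1=\prod_{i=1}^kn_i$. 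Grouping $\Z_1=\X_1\otimes\dots\otimes\X_k$ and $\Z_2=\X_{k+1}$, this equality together with the $i=k+1$ half of statement~$1$ is precisely statement~$1$ for the two-party operator $X\in\L(\Z_1\otimes\Z_2\otimes\Y)$, so the $k=2$ case upgrades it to $\norm{(T_{\Z_1\otimes\Z_2}\otimes\I_{\L(\Y)})(X)}_1=\dim(\Z_1\otimes\Z_2)=\prod_{i=1}^{k+1}n_i$, which is statement~$2$ for $X$. Combined with $2\Leftrightarrow 3$, all three statements are equivalent for every $k$.

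The main obstacle is the $k=2$ step: one must recover the \emph{exact} global factorization of $X$, whereas the hypothesis only constrains the reductions $\Tr_\B(X)$ and $\Tr_\A(X)$, each of which sees only part of the picture. Lemma~\ref{lemma:rank_one_reduction} (promoting a rank-one reduction to a tensor factor) and Lemma~\ref{lemma:projection_sandwich} (promoting a norm equality under projections to an operator equality) are precisely the tools that let the two partial descriptions be glued together. The secondary difficulty is bookkeeping: tracking which copy of $\A$ or $\B$ sits where as the ancilla $\Y$ is refined first to $\A\otimes\complex^s$ and then to $\A\otimes\B\otimes\complex^r$, which is exactly what the implicit permutation notation is designed to absorb.
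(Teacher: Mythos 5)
Your proposal is correct and follows essentially the same route as the paper: the $k=2$ case is handled by applying Theorem~\ref{theorem:structure_of_operators} to $\Tr_\B(X)$, upgrading via Lemma~\ref{lemma:rank_one_reduction} and Lemma~\ref{lemma:projection_sandwich} to an exact factorization, then applying the structure theorem again to the leftover $\eta$, and the general case follows by the same grouping-based induction. No substantive differences.
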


The content of Figure \ref{figure:induction} follows by the above theorem along
with the observation
\begin{align}
  \tau_{\X_1 \otimes \dots \otimes \X_k} 
  = \underbrace{\tau_{\X_1} \otimes \dots \otimes 
    \tau_{\X_k}}_{\mathclap{\in \L(\X_1 \otimes \dots \otimes \X_k 
      \otimes \X_1 \otimes \dots \otimes \X_k)}}.
\end{align}
For the case $n_1 = \dots = n_k = n$, by noting that 
$\bignorm{\big(\Psi_{n,k} \otimes \I_{\L(\Y)}\big)(X)}_1 = nk$ if and only if 
\begin{align}
  \bignorm{\big(T \otimes \I_{\L(\Y)}\big)\big(\big(R_i \otimes \I_{\L(\Y)}\big)(X)\big)}_1 = n
\end{align} 
for all $1 \leq i \leq k$, we arrive at the following.

\begin{corollary} \label{corollary:full_characterization}
Let $\X, \X_1, \dots, \X_k$ denote copies of $\complex^n$, and let $\Y = \complex^m$. For $X \in \L(\X_1 \otimes \dots \otimes \X_k \otimes \Y)$ with $\|X\|_1 = 1$, the following are equivalent.
\begin{enumerate}
	\item $\bignorm{\big(\Psi_{n,k} \otimes \I_{\L(\Y)}\big)(X)}_1 = nk$.
	\item $\bignorm{\big(T_{\X_1 \otimes \dots \otimes \X_k} \otimes \I_{\L(\Y)}\big)(X)}_1 = n^k$.
	\item $m \geq n^k$, and there is some $r \in \{1, \dots, \floor{m/n^k}\}$, $\sigma \in \D(\complex^r)$, and $U,V \in \U(\X_1 \otimes \dots \otimes \X_k \otimes \complex^r, \Y)$ for which
	\begin{align}
		X = (\I_{\X_1 \otimes \dots \otimes \X_k} \otimes U)(\tau_{\X_1 \otimes \dots \otimes \X_k} \otimes \sigma)(\I_{\X_1 \otimes \dots \otimes \X_k} \otimes V^*),
	\end{align}
	where $\tau_{\X_1 \otimes \dots \otimes \X_k} \in \D(\X_1 \otimes \dots \otimes \X_k \otimes \X_1 \otimes \dots \otimes \X_k)$ is the canonical maximally entangled state.
\end{enumerate}
When $X \in \D(\X_1 \otimes \dots \otimes \X_k \otimes \Y)$ the above equivalence holds with $V = U$.
\end{corollary}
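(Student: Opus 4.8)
The plan is to derive Corollary~\ref{corollary:full_characterization} directly from Theorem~\ref{theorem:full_characterization} by specializing to the case $n_1 = \dots = n_k = n$, so that $N = n^k$, and by verifying that the ``$\Psi_{n,k}$ negativity'' condition (statement 1 of the corollary) is exactly the conjunction of the $k$ per-subsystem conditions (statement 1 of the theorem). Once that translation is in place, statements 2 and 3 of the corollary are verbatim instances of statements 2 and 3 of the theorem, and the density-operator refinement $V = U$ carries over directly.

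The key step is therefore the equivalence between $\bignorm{\big(\Psi_{n,k} \otimes \I_{\L(\Y)}\big)(X)}_1 = nk$ and $\bignorm{\big(T \otimes \I_{\L(\Y)}\big)\big(\big(R_i \otimes \I_{\L(\Y)}\big)(X)\big)}_1 = n$ for all $i$. To see this I would first observe that, from the definition \eqref{equation:psi_nk_def} of $\Psi_{n,k}$, one has
\begin{equation}
  \big(\Psi_{n,k} \otimes \I_{\L(\Y)}\big)(X)
  = \sum_{i=1}^k E_{ii} \otimes \big(T_{\X_i} \otimes \I_{\L(\Y)}\big)\big(\big(R_i \otimes \I_{\L(\Y)}\big)(X)\big).
\end{equation}
Because the blocks $E_{ii}$ are supported on mutually orthogonal one-dimensional subspaces of $\complex^k$, this is a block-diagonal operator, so its trace norm is exactly the sum of the trace norms of the blocks:
\begin{equation}
  \bignorm{\big(\Psi_{n,k} \otimes \I_{\L(\Y)}\big)(X)}_1
  = \sum_{i=1}^k \bignorm{\big(T_{\X_i} \otimes \I_{\L(\Y)}\big)\big(\big(R_i \otimes \I_{\L(\Y)}\big)(X)\big)}_1.
\end{equation}
Now each reduction $R_i$ is trace preserving, so $\big(R_i \otimes \I_{\L(\Y)}\big)(X)$ has trace norm at most $\|X\|_1 = 1$; hence by Proposition~\ref{proposition:induced-T-norm} each summand is at most $n$, and the total is at most $nk$. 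Equality $nk$ in the sum forces equality $n$ in every summand, and conversely, which is precisely the claimed equivalence.

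I expect the main (though modest) obstacle to be making the block-diagonal trace-norm additivity airtight: one must note that $E_{ii} \otimes M_i$ for $M_i \in \L(\X \otimes \Y)$ are operators acting on $\complex^k \otimes \X \otimes \Y$ with pairwise orthogonal row and column spaces, so the singular values of $\sum_i E_{ii} \otimes M_i$ are the union (with multiplicity) of the singular values of the $M_i$, giving the additivity of $\|\cdot\|_1$ without any inequality slack. With that identity and the $\|\cdot\|_1 \le 1$ bound on $\big(R_i \otimes \I_{\L(\Y)}\big)(X)$ in hand, the chain ``statement 1 $\Leftrightarrow$ (statement 1 of Theorem~\ref{theorem:full_characterization}) $\Leftrightarrow$ statement 2 $\Leftrightarrow$ statement 3'' is immediate, and the $V = U$ clause for density operators follows from the corresponding clause of Theorem~\ref{theorem:full_characterization} since $X \ge 0$ implies each $\big(R_i \otimes \I_{\L(\Y)}\big)(X) \ge 0$ and the whole argument of the theorem applies.
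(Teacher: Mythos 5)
Your proposal is correct and follows the same route as the paper: the paper obtains the corollary by specializing Theorem~\ref{theorem:full_characterization} to $n_1 = \dots = n_k = n$ after noting that $\bignorm{(\Psi_{n,k} \otimes \I_{\L(\Y)})(X)}_1 = nk$ holds if and only if each partial-transposed reduction has trace norm $n$, which is exactly your block-diagonal additivity argument combined with the bound from Proposition~\ref{proposition:induced-T-norm}. Your write-up in fact supplies slightly more detail than the paper does for this translation step (the only nitpick being that trace-norm contractivity of $R_i \otimes \I_{\L(\Y)}$ uses that $R_i$ is completely positive and trace preserving, not trace preservation alone).
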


As described at the end of Section~\ref{section:channel_discrimination}, in the setting of channel discrimination the above corollary gives that a state $\rho \in \D(\X_1 \otimes \dots \otimes \X_k \otimes \Y)$ can be used to perfectly discriminate $\Gamma_{n,k}^{(0)}$ and $\Gamma_{n,k}^{(1)}$ with probability $\lambda_n$ if and only if it can be used to perfectly discriminate $\Phi_{n^k}^{(0)}$ and $\Phi_{n^k}^{(1)}$ with probability $\lambda_n$ (where all symbols are defined in Section~\ref{section:channel_discrimination}). This is the main point in the proof of Theorem~\ref{theorem:counter_examples}, given below.

\begin{proof}[Proof of Theorem~\ref{theorem:counter_examples}]
Fix $n\geq 2$ and $k \geq 1$, and let $\X_1, \dots, \X_k$, and $\X$ denote
copies of $\complex^n$. 
For our examples we identify 
$\complex^{n^k} \cong \X_1 \otimes \dots \otimes \X_k$ and 
$\complex^{kn} \cong \complex^k \otimes \X$. 

Let $\Gamma^{(0)}_{n,k}, \Gamma^{(1)}_{n,k},\Psi_{n,k} \in \T(\X_1 \otimes \dots \otimes \X_k, \complex^k \otimes \X)$ be as defined in Section~\ref{section:channel_discrimination}. First we show that
\begin{equation}
  \Bignorm{
    \lambda_n \Gamma^{(0)}_{n,k} \otimes \I_{\L(\Y)}
    - (1 - \lambda_n) \Gamma^{(1)}_{n,k} \otimes \I_{\L(\Y)}}_1
  < \Bigvertiii{
    \lambda_n \Gamma^{(0)}_{n,k} -
    (1 - \lambda_n) \Gamma^{(1)}_{n,k}}_1 =
  1, \label{equation:norm_relation_to_prove}
\end{equation}
whenever $\dim(\Y) < n^k$, where $\lambda_n = \frac{n+1}{2n}$. 
The above is equivalent to showing that
\begin{equation}
  \bignorm{\Psi_{n,k} \otimes \I_{\L(\Y)}}_1 < \vertiii{\Psi_{n,k}}_1=nk \label{equation:equivalent_norm_inequality}
\end{equation}
whenever $\dim(\Y) < n^k$. 

By Corollary~\ref{corollary:full_characterization}, for $\tau_{\X_1 \otimes \dots \otimes \X_k} \in \D(\X_1 \otimes \dots \otimes \X_k \otimes \X_1 \otimes \dots \otimes \X_k)$ it holds that
\begin{align}
  \bignorm{\big(\Psi_{n,k} \otimes \I_{\L(\X_1 \otimes \dots \otimes \X_k)}\big)(\tau_{\X_1 \otimes \dots \otimes \X_k})}_1 = nk,
\end{align}
and hence $\vertiii{\Psi_{n,k}}_1 = nk$. Furthermore, for any f.d. complex Hilbert space~$\Y$ with $\dim(\Y) < n^k$ and $X \in \L(\X_1 \otimes \dots \otimes \X_k \otimes \Y)$ with $\|X\|_1 = 1$, the above corollary implies that
\begin{align}
  \bignorm{\big(\Psi_{n,k} \otimes \I_{\L(\Y)}\big)(X)}_1 < nk,
\end{align}
giving that
\begin{align}
  \bignorm{\Psi_{n,k} \otimes \I_{\L(\Y)}}_1 < nk.
\end{align}
This completes the proof of Equation
(\ref{equation:equivalent_norm_inequality}).

Lastly, we need to show Equation (\ref{equation:norm_relation_to_prove}) holds
for any $\lambda \in (0,1)$, not just the particular choice $\lambda_n$. 
To do this we require the following fact: for $A,B \in \L(\Z)$ with $\|A\|_1
\leq 1$ and $\|B\|_1 \leq 1$, if $\|\alpha A - (1-\alpha)B\|_1 = 1$ for a
particular $\alpha \in (0,1)$, then it holds that $\|\lambda A -
(1-\lambda)B\|_1 = 1$ for all $\lambda \in (0,1)$. 
To see this, note that the assumption is equivalent to the existence of a
unitary $U \in \L(\Z)$ for which
\begin{align}
  \ip{U}{\alpha A - (1-\alpha)B} = \alpha\ip{U}{A}+(1-\alpha) \ip{U}{-B} = 1. 
\end{align}
As $|\ip{U}{A}| \leq \|A\|_1 \leq 1$ and $|\ip{U}{-B}| \leq \|B\|_1 \leq 1$, 
the above equality implies that $\ip{U}{A} = \ip{U}{-B} = 1$. 
Thus, for any $\lambda \in (0,1)$, we have
\begin{align}
  1 = \ip{U}{\lambda A - (1-\lambda)B} \leq \|\lambda A-(1-\lambda)B\|_1 \leq 1.
\end{align}

Thus, as there exists $X \in \L(\X_1 \otimes \dots \otimes \X_k \otimes \X_1 \otimes \dots \otimes \X_k)$ with trace norm $1$ for which
\begin{align}
  \Bignorm{
    \lambda_n \big(\Gamma^{(0)}_{n,k} 
    \otimes \I_{\L(\X_1 \otimes \dots \otimes \X_k)}\big)(X) -
    (1 - \lambda_n) \big(\Gamma^{(1)}_{n,k} \otimes \I_{\L(\X_1 \otimes \dots \otimes \X_k)}\big)(X)}_1 = 1
\end{align}
it follows by the above paragraph that the above equation must hold for all $\lambda \in (0,1)$, and therefore
\begin{align}
  \Bigvertiii{
    \lambda \Gamma^{(0)}_{n,k} -
    (1 - \lambda) \Gamma^{(1)}_{n,k}}_1 = 1
\end{align}
for all $\lambda \in (0,1)$. By a similar argument, for $\Y$ with $\dim(\Y) < n^k$, if 
\begin{align}
  \Bignorm{
      \lambda \Gamma^{(0)}_{n,k} \otimes \I_{\L(\Y)}
      - (1 - \lambda) \Gamma^{(1)}_{n,k} \otimes \I_{\L(\Y)}}_1 = 1
\end{align}
for some $\lambda \in (0,1)$, then the above equation would also hold for $\lambda_n$, which we have already shown is not the case.
\end{proof}

\section{Weak entanglement measures and reversible quantum channels}

Theorem~\ref{theorem:structure_of_operators} provides a
characterization of the set of operators $X\in\L(\X\otimes\Y)$ whose trace norm
equals 1 and whose negativity is maximized.
In this section we prove a generalization of this result, albeit for the
restricted case in which $X$ must be a density operator, in which the
negativity can be replaced by any member of a class of entanglement measures
that we call \emph{weak entanglement measures}.
Many well-known measures of entanglement fall into this class.

Once the structure of density operators that maximize weak entanglement 
measures is established, we will apply it to the question of when a quantum 
channel is \emph{reversible}, meaning that it has a left-inverse that is also a
channel.
We prove that a channel is reversible if and only if it preserves entanglement
as measured by any weak entanglement measure, and equivalently, if and only if
its Choi matrix is maximally entangled as measured by any weak entanglement
measure.

\subsection{Structure of states that maximize weak entanglement measures}

We will begin by defining a class of entanglement measures that we call
\emph{weak entanglement measures}.

\begin{definition}
A \emph{weak entanglement measure} is a family of functions
\begin{equation}
  \{ \op{E}_{n,m}\,:\,n,m\in\mathbb{N},\: 1\leq n\leq m \},
\end{equation}
each of which takes the form
\begin{equation}
  \op{E}_{n,m} : \D(\complex^n\otimes\complex^m) \rightarrow \real,
\end{equation}
for which the following properties hold:
\begin{enumerate}
\item There exists a function
  $g : \N \rightarrow \real$ for which
  \begin{equation}
    \max_{\rho \in \D(\complex^n \otimes \complex^m)} \op{E}_{n,m}(\rho) = g(n).
  \end{equation}
  That is, we assume that the maximum exists and that it is a function only of the
  minimum of the two dimensions. 
  We call $g$ the \emph{maximum function} for the family $\{\op{E}_{n,m}\}$.
\item \label{property:pure_state_max} 
  For any unit vector $u \in \S(\complex^n \otimes \complex^m)$, it holds that
  $\op{E}_{n,m}(u u^{\ast}) = g(n)$ if and only if $u$ is maximally entangled (in the sense given in Equation (\ref{eqn:max_entangled})).
\item \label{property:monotonicity} The measure is monotonically decreasing
  under quantum channels acting on the second subsystem.
  That is, for all density operators $\rho\in\D(\complex^n\otimes\complex^m)$
  and channels $\Phi \in \C(\complex^m,\complex^k)$ for $k\geq n$, it holds
  that
  \begin{equation}
    \op{E}_{n,k}((\I_{\L(\complex^n)} \otimes \Phi)(\rho)) \leq 
    \op{E}_{n,m}(\rho).
  \end{equation}
\item Each function $\op{E}_{n,m}$ is \emph{pure state convex}:
  for any set $\{u_1,\ldots,u_N\} \subset \S(\complex^n \otimes \complex^m)$
  and probability vector $(p_1, \dots, p_N)$, it holds that
  \begin{equation}
    \op{E}_{n,m}\Biggl(\sum_{i = 1}^N p_i u_i u_i^{\ast}\Biggr)
    \leq \sum_{i = 1}^N p_i\op{E}_{n,m}\bigl( u_i u_i^{\ast}\bigr).
  \end{equation}
  
\end{enumerate}
\end{definition}

A few comments on this definition are in order.
First, pure state convexity may seem an odd axiom (as opposed to general
convexity), but there may exist entanglement measures that are pure state
convex and not generally convex. 
(For example, distillable entanglement is known to be pure-state convex
\cite[Lemma 25]{donald_uniqueness_2002}, but may not be generally convex
\cite{shor_nonadditivity_2001}.)
Second, it is generally desired that entanglement measures satisfy stronger
versions of the third condition (e.g., monotonicity with respect to any LOCC
channel between both subsystems). 
Furthermore entanglement measures usually treat the two subsystems
symmetrically, and Property \ref{property:monotonicity} is asymmetric in that
it only applies to the second subsystem. 
In our proof the subsystems are treated asymmetrically, and we only need
monotonicity to hold with respect to the second system (and hence this result can be applied to functions like the coherent
information). 

The set of weak entanglement measures includes 
negativity \cite{vidal_computable_2002}, 
coherent information \cite{schumacher_quantum_1996},
squashed entanglement \cite{christandl_squashed_2004,tucci_quantum_1999},
entanglement of formation, and distillable entanglement. 
See \cite[Table 1]{brandao_faithful_2011} for a list of commonly used entanglement measures and the properties that they are known to satisfy.

In order to prove the theorem that follows we will make use of the following
simple lemma.

\begin{lemma} \label{lemma:isometry_linear_combos}
  Let $\X$ and $\Y$ be f.d.\ complex Hilbert spaces with
  $\dim(\X) \leq \dim(\Y)$, and let $U,V\in\U(\X,\Y)$ be orthogonal isometries
  for which $\alpha U + \beta V$ is proportional to an isometry for all choices
  of $\alpha,\beta\in\complex$.
  It holds that $U^{\ast} V = 0$ (i.e., $U$ and $V$ map $\X$ into 
  orthogonal subspaces of $\Y$).
  \begin{proof}
    It suffices to consider the pairs $(\alpha,\beta) = (1,1)$ and
    $(\alpha,\beta) = (1,i)$.
    As $U + V$ and $U + iV$ are proportional to isometries, the following
    operators must be proportional to the identity operator:
    \begin{align}
      \bigl(U + V\bigr)^{\ast} \bigl(U + V\bigr)
      & = 2\I + (U^{\ast} V + V^{\ast} U),\\
      \bigl(U + iV\bigr)^{\ast} \bigl(U + iV\bigr)
      & = 2\I + i(U^{\ast} V - V^{\ast} U).
    \end{align}
    As $U^{\ast} V$ and $V^{\ast} U$ are traceless, we conclude that
    \begin{equation}
      U^{\ast} V + V^{\ast} U = 0
      \quad\text{and}\quad
      U^{\ast} V - V^{\ast} U = 0,
    \end{equation}
    which implies $U^{\ast} V = 0$ as required.
  \end{proof}
\end{lemma}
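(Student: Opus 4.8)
The plan is to translate the hypothesis into a statement about $(\alpha U + \beta V)^{\ast}(\alpha U + \beta V)$ and then extract information about $U^{\ast} V$ by testing a small number of convenient values of $(\alpha, \beta)$. An operator $W \in \L(\X,\Y)$ is proportional to an isometry precisely when $W^{\ast} W$ is a scalar multiple of $\I_\X$. Expanding the quadratic form and using $U^{\ast} U = V^{\ast} V = \I_\X$ gives
\begin{equation}
  (\alpha U + \beta V)^{\ast}(\alpha U + \beta V)
  = (|\alpha|^2 + |\beta|^2)\I_\X + \bar\alpha \beta\, U^{\ast} V + \overline{\bar\alpha \beta}\,(U^{\ast} V)^{\ast},
\end{equation}
so the hypothesis asserts that this Hermitian operator is proportional to $\I_\X$ for every choice of $\alpha, \beta \in \complex$.

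First I would take $(\alpha,\beta) = (1,1)$, which forces the cross term $U^{\ast} V + V^{\ast} U$ to be a scalar multiple of $\I_\X$; then I would take $(\alpha,\beta) = (1,i)$, which (after cancelling the common factor of $i$) forces $U^{\ast} V - V^{\ast} U$ to be a scalar multiple of $\I_\X$ as well. These are the only two test values needed. To pin down the two scalars I would invoke the orthogonality hypothesis: since $U$ and $V$ are orthogonal, $\Tr(U^{\ast} V) = \ip{U}{V} = 0$, and hence $\Tr(V^{\ast} U) = 0$ too, so both $U^{\ast} V + V^{\ast} U$ and $U^{\ast} V - V^{\ast} U$ are traceless. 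A traceless scalar multiple of $\I_\X$ must be $0$, so both combinations vanish, and adding them yields $2 U^{\ast} V = 0$, giving $U^{\ast} V = 0$ as required.

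This argument is essentially a direct computation, so there is no deep obstacle; the one point that requires care is the conjugate-linearity of the adjoint in its first slot when expanding, so that the cross terms appear correctly as $\bar\alpha\beta\, U^{\ast} V$ together with its adjoint rather than $\alpha\beta\, U^{\ast} V$. The one genuinely substantive move is recognizing that the orthogonality of $U$ and $V$ is exactly what upgrades ``proportional to $\I_\X$'' to ``equal to $0$,'' since without the traceless condition the cross terms could be nonzero multiples of the identity.
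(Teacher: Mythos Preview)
Your proof is correct and follows essentially the same approach as the paper's: both test the pairs $(\alpha,\beta)=(1,1)$ and $(1,i)$, observe that $U^{\ast}V + V^{\ast}U$ and $U^{\ast}V - V^{\ast}U$ must be scalar multiples of the identity, and then use the orthogonality hypothesis $\ip{U}{V}=0$ to conclude that these scalars are zero. The only cosmetic difference is that you expand the general quadratic form first before specializing, whereas the paper writes out the two specific cases directly.
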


\begin{theorem} \label{theorem:weak_characterization}
  Let $\X = \complex^n$ and $\Y = \complex^m$ for positive integers $n$ and $m$
  satisfying $n\leq m$, and let $\rho \in \D(\X \otimes \Y)$.
  The following statements are equivalent:
  \begin{enumerate}
  \item For every weak entanglement measure $\{\op{E}_{s,t}\}$ with
    maximum function $g$ it holds that $\op{E}_{n,m}(\rho) = g(n)$.
  \item Statement $1$ holds for any weak entanglement measure.
  \item There exists a positive integer $r \leq m/n$, a density operator 
    $\sigma \in \D(\complex^r)$, and an isometry
    $U \in \U(\X \otimes \complex^r, \Y)$ for which
    \begin{equation}
      \rho = (\I_\X \otimes U)(\tau_\X \otimes \sigma)(\I_\X \otimes U^*).
    \end{equation}
  \end{enumerate}
  \begin{proof}
    Statement 1 trivially implies statement 2 (as the set of weak entanglement
    measures is nonempty).
    
    Now assume statement 2 holds: $\op{E}_{n,m}(\rho) = g(n)$ for some weak
    entanglement measure $\{\op{E}_{s,t}\}$ with maximum function $g$.
    By the pure-state convexity axiom (Property 4), for any pure-state
    decomposition
    \begin{equation}
      \rho = \sum_{i=1}^N p_i v_i v_i^*
    \end{equation}
    (for $p_1,\ldots,p_N$ positive) it holds that
    \begin{equation}
      g(n) = \op{E}_{n,m}(\rho) \leq \sum_{i=1}^N p_i \op{E}_{n,m}(v_i v_i^*)
    \end{equation}
    and $\op{E}_{n,m}(v_i v_i^*) \leq g(n)$, implying that 
    $\op{E}_{n,m}(v_i v_i^*) = g(n)$, for all $i=1,\ldots,N$.
    Hence, by Property \ref{property:pure_state_max}, every pure state
    decomposition of $\rho$ necessarily consists only of maximally entangled
    states.
    This is equivalent to the statement that every unit vector
    $v \in \op{Im}(\rho)$ contained in the image of $\rho$ is maximally
    entangled.

    Now consider a spectral decomposition
    \begin{equation}
      \rho = \sum_{i=1}^r p_i v_i v_i^{\ast}
    \end{equation}
    of $\rho$, where $r = \op{rank}(\rho)$ and we have restricted the sum to
    range only over indices corresponding to positive eigenvalues of $\rho$.
    By the argument above, one has that each $v_i$ is maximally entangled, so
    there exists an orthogonal collection of isometries
    $\{V_1,\ldots,V_r\}\subset\U(\X,\Y)$ for which
    \begin{equation}
      v_i = \frac{1}{\sqrt{n}} \vec(V_i^{\t})
    \end{equation}
    for each $i \in \{1,\ldots,r\}$.
    For each pair $i\not=j$ we find that
    \begin{equation}
      \vec\bigl(\alpha V_i^{\t} + \beta V_j^{\t}\bigr) \in \op{Im}(\rho),
    \end{equation}
    and therefore $\alpha V_i + \beta V_j$ is proportional to an isometry
    for all $\alpha,\beta\in\complex$.
    By Lemma~\ref{lemma:isometry_linear_combos} it holds that
    $V_i^{\ast} V_j = 0$, and hence $rn \leq m$.
    
    Along the same lines as in Theorem~\ref{theorem:structure_of_operators}, define $U\in\U(\X\otimes\complex^r,\Y)$ and $\sigma \in \D(\complex^r)$ as
    \begin{align}
    	U = \sum_{i=1}^r V_i \otimes e_i^* \quad\text{and}\quad \sigma = \sum_{i=1}^r  p_i E_{ii},
    \end{align}
    where the fact that $U$ is an isometry follows from $V_i^*V_j = 0$ for $i \neq j$. It follows by direct multiplication that
   \begin{equation}
      \rho = (\I_{\X}\otimes U)(\tau_{\X} \otimes \sigma)(\I_{\X}\otimes U)^{\ast},
    \end{equation}
    and therefore statement 2 implies statement 3.

    Finally, assume that statement 3 holds, let $\{\op{E}_{s,t}\}$ be any weak
    entanglement measure with maximum function $g$, and define a channel $\Phi\in\C(\Y,\X)$ as follows:
    \begin{equation}
      \Phi(X) = \Tr_{\complex^r}(U^* YU) + \ip{\I_\Y - UU^*}{Y}\eta,
    \end{equation}
    for all $Y \in \L(\Y)$ and any fixed choice of a density operator $\eta \in \D(\X)$.
    It holds that $(\I_{\L(\X)} \otimes \Phi)(\rho) = \tau_\X$, so
    by Property \ref{property:monotonicity} one has
    \begin{equation}
	g(n) = \op{E}_{n,n}(\tau_{\X}) = 
        \op{E}_{n,n}((\I_{\L(\X)} \otimes \Phi)(\rho))
        \leq \op{E}_{n,m}(\rho) \leq g(n).
    \end{equation}
    It follows that $\op{E}_{n,m}(\rho) = g(n)$, and so
    statement 3 implies statement~1.
\end{proof}
\end{theorem}

Using the above characterization we can arrive at a density operator version of
Theorem~\ref{theorem:full_characterization} that holds for any weak
entanglement measure.

\begin{corollary}
  Let $\X_1 = \complex^{n_1}, \dots, \X_k = \complex^{n_k}$ and 
  $\Y = \complex^m$ for positive integers $n_1,\ldots,n_k$ and $m$ satisfying
  $n = \prod_{i=1}^kn_i \leq m$, let 
  $\rho \in \D(\X_1 \otimes \dots \otimes \X_k \otimes \Y)$ be a density
  operator, and let $\{E_{s,t}\}$ be any weak entanglement measure with maximum function $g$. 
  The following statements are equivalent:
  \begin{enumerate}
  \item It holds that
    \begin{equation}
      \op{E}_{n_i,m}((R_i\otimes\I_{\L(\Y)})(\rho)) = g(n_i)
    \end{equation}
    for all $i = 1,\ldots,k$.
  \item It holds that
    \begin{equation}
      \op{E}_{n,m}(\rho) = g(n).
    \end{equation}
  \item There exists a positive integer $r \leq n/m$, a density operator
    $\sigma \in \D(\complex^r)$, and an isometry
    \begin{equation}
      U \in \U(\X_1 \otimes \dots \otimes \X_k \otimes \complex^r, \Y)
    \end{equation}
    for which
    \begin{equation}
      \rho = (\I_{\X_1 \otimes \dots \otimes \X_k} \otimes U)(\tau_{\X_1
        \otimes \dots \otimes \X_k} \otimes \sigma)(\I_{\X_1 \otimes \dots
        \otimes \X_k} \otimes U^*).
    \end{equation}
  \end{enumerate}
  \begin{proof}
    The equivalence of the above statements was shown for the negativity in Theorem~\ref{theorem:full_characterization}, and Theorem~\ref{theorem:weak_characterization} gives that statements 1 and 2 hold for the negativity if and only if they hold for all weak entanglement measures.
  \end{proof}
\end{corollary}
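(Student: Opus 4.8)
The plan is to reduce the corollary entirely to the negativity case, which is already settled by Theorem~\ref{theorem:full_characterization}, using Theorem~\ref{theorem:weak_characterization} as a dictionary between an arbitrary weak entanglement measure and the negativity. The key point is that the negativity $X\mapsto\bignorm{(T\otimes\I_{\L(\Y)})(X)}_1$ is itself a weak entanglement measure with maximum function $g(n)=n$: properties~1 and~2 are consequences of Proposition~\ref{proposition:induced-T-norm}, pure-state convexity (property~4) is the triangle inequality, and monotonicity under channels on the second subsystem (property~3) is routine. So I would fix an arbitrary weak entanglement measure $\{\op{E}_{s,t}\}$ with maximum function $g$, and work throughout under the identification $\X_1\otimes\dots\otimes\X_k\cong\complex^n$, so that $\tau_{\X_1 \otimes \dots \otimes \X_k}$ is identified with $\tau_{\complex^n}$.

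First I would dispatch the equivalence of statements~2 and~3: this is precisely Theorem~\ref{theorem:weak_characterization} applied to $\rho\in\D(\complex^n\otimes\complex^m)$ (legitimate since $n\leq m$), whose statement~2 is our statement~2, whose statement~3 is our statement~3 under the identification above, and which moreover tells us that statement~2 holds for $\{\op{E}_{s,t}\}$ if and only if it holds for the negativity, i.e.\ if and only if $\bignorm{(T_{\X_1 \otimes \dots \otimes \X_k}\otimes\I_{\L(\Y)})(\rho)}_1=n$, which is statement~2 of Theorem~\ref{theorem:full_characterization}. Next I would treat statement~1 by applying Theorem~\ref{theorem:weak_characterization} again, this time to each reduction $(R_i\otimes\I_{\L(\Y)})(\rho)\in\D(\X_i\otimes\Y)$ (legitimate since $n_i\leq n\leq m$): it gives that $\op{E}_{n_i,m}((R_i\otimes\I_{\L(\Y)})(\rho))=g(n_i)$ holds if and only if the corresponding maximality holds for the negativity, namely $\bignorm{(T_{\X_i}\otimes\I_{\L(\Y)})((R_i\otimes\I_{\L(\Y)})(\rho))}_1=n_i$. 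So statement~1 of the corollary is equivalent to statement~1 of Theorem~\ref{theorem:full_characterization}, and since that theorem asserts the equivalence of its statements~1 and~2, the loop closes: statement~1 $\Leftrightarrow$ (negativity maximal on every reduction) $\Leftrightarrow$ (negativity maximal on $\rho$) $\Leftrightarrow$ statement~2 $\Leftrightarrow$ statement~3.

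The only thing requiring genuine care --- and where a sloppy argument would go wrong --- is keeping track of which tensor factor plays the role of the ``second subsystem'' in the weak-entanglement-measure axioms. In both invocations of Theorem~\ref{theorem:weak_characterization} the space $\Y$ remains the designated second subsystem: the channels $R_i$ act only on the input factors $\X_1\otimes\dots\otimes\X_k$ and leave $\Y$ untouched, so the asymmetric monotonicity axiom (property~3) is used exactly in the form in which it is stated, and likewise the auxiliary channel built inside the proof of Theorem~\ref{theorem:weak_characterization} acts on $\Y$ only. Beyond this bookkeeping --- together with consistency about the identification $\X_1\otimes\dots\otimes\X_k\cong\complex^n$, the matching of canonical maximally entangled states, and the admissible range of $r$ --- the corollary is immediate; the ``difficulty,'' such as it is, lies entirely in assembling the two prior theorems correctly rather than in any new estimate.
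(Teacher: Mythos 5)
Your proposal is correct and follows essentially the same route as the paper: Theorem~\ref{theorem:full_characterization} supplies the equivalences for the negativity, and Theorem~\ref{theorem:weak_characterization} (applied both to $\rho$ and to each reduction $(R_i\otimes\I_{\L(\Y)})(\rho)$, using that the negativity is itself a weak entanglement measure) transfers statements 1 and 2 between the negativity and an arbitrary weak entanglement measure. You merely spell out the bookkeeping that the paper's two-line proof leaves implicit.
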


\subsection{Reversible channels}

A quantum channel $\Phi \in \C(\X, \Y)$ is called \emph{reversible} if there
exists a channel $\Psi \in \C(\Y, \X)$ for which $\Psi  \Phi = \I_{\L(\X)}$
(i.e., $\Phi$ has a left inverse that is also a channel). 
We apply Theorem~\ref{theorem:weak_characterization} to show that a channel is
reversible if and only if it preserves entanglement as measured by any weak
entanglement measure. 
The structure given in Theorem~\ref{theorem:weak_characterization} also allows
us to re-derive a result from \cite{nayak_invertible_2007}, where it was shown
that a channel is reversible if and only if it has a certain form.
We also add in a couple of other conditions.

Before stating the theorem, let us recall a couple of simple concepts from the
theory of quantum information.
First, for positive semidefinite operators $P, Q \in \Pos(\X)$, the 
\emph{fidelity} is defined as
\begin{equation}
  \F(P,Q) = \Bignorm{\sqrt{P}\sqrt{Q}}_1.
\end{equation}
Second, for any pair of channels $\Phi\in\C(\X,\Y)$ and $\Psi\in\C(\X,\Z)$, it
is said that $\Phi$ and $\Psi$ are \emph{complementary} if there exists an
isometry $A \in \U(\X,\Y\otimes\Z)$ such that
\begin{equation}
  \Phi(X) = \Tr_{\Z}(A X A^{\ast})
  \quad\text{and}\quad
  \Psi(X) = \Tr_{\Y}(A X A^{\ast}).
\end{equation}
We will also make use of a couple of simple facts, stated as lemmas as follows.
(See, for instance, Corollary 3.24 and Proposition 2.29 in
\cite{watrous_quantum_2015}.)

\begin{lemma} \label{lemma:fidelity_identity}
  For any $u, v \in \X \otimes \Y$ it holds that 
  $\F(\Tr_\Y(uu^*),\Tr_\Y(vv^*)) = \norm{\Tr_\X(uv^*)}_1$.
\end{lemma}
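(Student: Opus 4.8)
The plan is to pull everything back to ordinary matrix identities through the vectorization correspondence. Since $\vec$ is a linear bijection, write $u = \vec(A)$ and $v = \vec(B)$ for the uniquely determined $A,B \in \L(\Y,\X)$. Evaluating on matrix units (using $\vec(E_{ij}) = e_i \otimes e_j$) yields the standard partial-trace identities $\Tr_\Y(\vec(A)\vec(B)^*) = AB^*$ and $\Tr_\X(\vec(A)\vec(B)^*) = (B^*A)^{\t}$; in particular $\Tr_\Y(uu^*) = AA^*$, $\Tr_\Y(vv^*) = BB^*$, and $\Tr_\X(uv^*) = (B^*A)^{\t}$. Because the trace norm is unchanged by transposition and by passing to the adjoint, the right-hand side of the lemma equals $\norm{(B^*A)^{\t}}_1 = \norm{B^*A}_1 = \norm{A^*B}_1$, while the left-hand side is by definition $\F(AA^*,BB^*) = \bignorm{\sqrt{AA^*}\sqrt{BB^*}}_1$. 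Thus the lemma reduces to the purely operator-theoretic identity $\bignorm{\sqrt{AA^*}\sqrt{BB^*}}_1 = \norm{A^*B}_1$.

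To establish that identity I would use polar decompositions $A = \sqrt{AA^*}\,W_A$ and $B = \sqrt{BB^*}\,W_B$, where $W_A,W_B \in \L(\Y,\X)$ are partial isometries with final spaces $\ran(A)$ and $\ran(B)$ respectively. Then $A^*B = W_A^*\bigl(\sqrt{AA^*}\sqrt{BB^*}\bigr)W_B$, and since $\ran\bigl(\sqrt{AA^*}\sqrt{BB^*}\bigr) \subseteq \ran(\sqrt{AA^*}) = \ran(A)$ and $\ran\bigl(\sqrt{BB^*}\sqrt{AA^*}\bigr) \subseteq \ran(\sqrt{BB^*}) = \ran(B)$, the maps $W_A^*$ and $W_B$ act isometrically on exactly the subspaces supporting $\sqrt{AA^*}\sqrt{BB^*}$ from the left and right. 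Composing a fixed operator with such a partial isometry leaves the trace norm unchanged, so $\norm{A^*B}_1 = \bignorm{\sqrt{AA^*}\sqrt{BB^*}}_1$, as wanted. (This last identity is a standard fact that may alternatively just be cited, e.g.\ Corollary 3.24 of \cite{watrous_quantum_2015}.)

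The bulk of the work — the partial traces of vectorized operators, and the invariance of the trace norm under transpose and adjoint — is routine index manipulation. The one delicate point is the final identity $\bignorm{\sqrt{AA^*}\sqrt{BB^*}}_1 = \norm{A^*B}_1$: in general composing with a partial isometry only decreases the trace norm, so the argument depends on checking that in each of the two reductions the relevant partial isometry is genuinely isometric on the subspace that supports $\sqrt{AA^*}\sqrt{BB^*}$ (namely $\ran(A)$ on the left and $\ran(B)$ on the right). Once that bookkeeping is done, the lemma follows at once.
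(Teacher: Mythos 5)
The paper does not prove this lemma at all: it is stated as a known fact with a pointer to Corollary 3.24 and Proposition 2.29 of \cite{watrous_quantum_2015}, so there is no in-paper argument to compare against. Your proof is a correct, self-contained derivation of exactly that cited fact: the vectorization identities $\Tr_\Y(\vec(A)\vec(B)^*) = AB^*$ and $\Tr_\X(\vec(A)\vec(B)^*) = (B^*A)^{\t}$ are right, and the reduction to $\bignorm{\sqrt{AA^*}\sqrt{BB^*}}_1 = \norm{A^*B}_1$ via polar decompositions is sound, including the one genuinely delicate point --- that $W_A^*$ and $W_B$ act isometrically on $\ran\bigl(\sqrt{AA^*}\sqrt{BB^*}\bigr) \subseteq \ran(A)$ and on $\ran\bigl(\sqrt{BB^*}\sqrt{AA^*}\bigr) \subseteq \ran(B)$ respectively, so the trace norm is preserved rather than merely not increased.
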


\begin{lemma} \label{lemma:reduction_relation} 
  For $u \in \X \otimes \Y$ and $P \in \Pos(\X \otimes \Z)$, if 
  $\Tr_\Y(uu^*) = \Tr_\Z(P)$, then there exists $\Psi \in \C(\Y, \Z)$ for which
  $(\I_{\L(\X)} \otimes \Psi)(uu^*) = P$.
\end{lemma}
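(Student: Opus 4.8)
The plan is to reduce this to the standard unitary (equivalently, partial-isometry) equivalence of purifications. Set $\rho = \Tr_\Y(uu^*) = \Tr_\Z(P)\in\Pos(\X)$. Since $P\in\Pos(\X\otimes\Z)$, take a spectral decomposition $P = \sum_j q_j z_j z_j^*$ with $q_j>0$ and $\{z_j\}\subset\X\otimes\Z$ orthonormal, let $\W=\complex^{\rank(P)}$ with standard basis $\{f_j\}$, and form $w = \sum_j\sqrt{q_j}\,z_j\otimes f_j\in\X\otimes\Z\otimes\W$. Then $\Tr_\W(ww^*)=P$, hence $\Tr_{\Z\otimes\W}(ww^*)=\Tr_\Z(P)=\rho=\Tr_\Y(uu^*)$; that is, $u$ and $w$ are purifications of the same operator $\rho$, with purifying systems $\Y$ and $\Z\otimes\W$ respectively.

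Next I would use the fact that two purifications of a common positive operator are related by a partial isometry acting on the purifying system: there is an operator $A\in\L(\Y,\Z\otimes\W)$ with $A^*A$ an orthogonal projection and $(\I_\X\otimes A)\,u=w$. Concretely, aligning a Schmidt decomposition $u=\sum_i\sqrt{p_i}\,x_i\otimes y_i$ with one of the form $w=\sum_i\sqrt{p_i}\,x_i\otimes y_i'$ (using the same eigenvectors $x_i$ of $\rho$), one sets $A=\sum_i y_i'\,y_i^*$; then $A^*A=\sum_i y_iy_i^*$ and $A^*A\,y_i=y_i$ for every $i$, so in particular $u$ lies in the range of $\I_\X\otimes A^*A$. (Equivalently, $\|(\I_\X\otimes A)u\|=\|w\|=\sqrt{\Tr P}=\sqrt{\Tr\rho}=\|u\|$ together with $A$ being a contraction forces this.)

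Then I would define the channel explicitly: fixing any $\xi\in\D(\Z)$, put
\begin{equation*}
  \Psi(Y)=\Tr_\W\!\bigl(A\,Y\,A^*\bigr)+\ip{\I_\Y-A^*A}{Y}\,\xi,\qquad Y\in\L(\Y).
\end{equation*}
This map is completely positive (the first summand is a conjugation followed by a partial trace, and the second is $Y\mapsto\ip{\I_\Y-A^*A}{Y}\xi$, which is completely positive since $\I_\Y-A^*A\geq 0$ and $\xi\geq 0$), and it is trace preserving because $\Tr(\Psi(Y))=\Tr(A^*A\,Y)+\Tr((\I_\Y-A^*A)Y)=\Tr(Y)$; hence $\Psi\in\C(\Y,\Z)$. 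Applying $\I_{\L(\X)}\otimes\Psi$ to $uu^*=\sum_{i,j}\sqrt{p_ip_j}\,(x_ix_j^*)\otimes(y_iy_j^*)$, the contribution of the second summand vanishes term by term, since $\ip{\I_\Y-A^*A}{y_iy_j^*}=\ip{y_j}{(\I_\Y-A^*A)y_i}=0$ for all $i,j$; and the first summand gives $(\I_{\L(\X)}\otimes\Tr_\W)\bigl((\I_\X\otimes A)\,uu^*\,(\I_\X\otimes A^*)\bigr)=\Tr_\W(ww^*)=P$. Therefore $(\I_{\L(\X)}\otimes\Psi)(uu^*)=P$, as required.

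The only step carrying genuine content is the purification-equivalence step: constructing $A$ with $(\I_\X\otimes A)u=w$ and observing that $u$ necessarily sits inside its support projection $\I_\X\otimes A^*A$ --- this is exactly what kills the correction term. Everything else is routine verification, so I anticipate no serious obstacle; if one prefers to avoid citing the purification lemma, the displayed construction of $A$ from aligned Schmidt decompositions is self-contained.
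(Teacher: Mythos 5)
Your proof is correct. The paper does not actually prove this lemma itself---it simply cites Proposition 2.29 of \cite{watrous_quantum_2015}---and your argument (purify $P$ to $w\in\X\otimes\Z\otimes\W$, invoke the equivalence of purifications of $\rho$ to obtain a partial isometry $A$ with $(\I_\X\otimes A)u=w$, then add the $\ip{\I_\Y-A^*A}{\cdot}\,\xi$ term to make the map trace preserving and check that this term annihilates $uu^*$) is exactly the standard proof being cited, so the two approaches coincide.
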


\begin{theorem} \label{theorem:reversibility_conditions}
  Let $\X = \complex^n$ and $\Y = \complex^m$ for positive integers $n\leq m$,
  let $\Phi \in \C(\X, \Y)$ be a channel, and let $\{\op{E}_{s,t}\}$ be any
  weak entanglement measure with maximum function $g$.
  The following statements are equivalent:
  \begin{enumerate}
  \item \label{condition:reversible} $\Phi$ is reversible.

  \item \label{condition:entanglement_preservation}
    $\Phi$ preserves entanglement with respect to $\{\op{E}_{s,t}\}$, meaning
    that for all positive integers $k\leq n$ and all density operators
    $\rho\in\D(\complex^k\otimes\X)$ it holds that
    \begin{equation}
      \op{E}_{k,m}\bigl( (\I_{\L(\complex^k)}\otimes \Phi)(\rho) \bigr)
      = \op{E}_{k,n}(\rho).
    \end{equation}

  \item \label{condition:max_entangled_choi} 
    It holds that
    \begin{equation}
      \op{E}_{n,m}\bigl(\textstyle{\frac{1}{n}}J(\Phi)\bigr) = g(n).
    \end{equation}

  \item \label{condition:map_structure}
    There exists a positive integer $r \leq m/n$, a density operator
    $\sigma \in \D(\complex^r)$, and an isometry 
    $U \in \U(\X \otimes \complex^r, \Y)$ for which
    \begin{equation}
      \Phi(X) = U(X \otimes \sigma)U^*
    \end{equation}
    for all $X \in \L(\X)$.
    
  \item \label{condition:norm_isometry}
    It holds that
    \begin{equation}
      \norm{\Phi(X)}_1 = \norm{X}_1
    \end{equation}
    for all $X \in \L(\X)$.
    
  \item \label{condition:fidelity_preservation}
    It holds that
    \begin{equation}
      \F(\Phi(\rho), \Phi(\sigma)) = \F(\rho, \sigma)
    \end{equation}
    for all $\rho, \sigma \in \D(\X)$. 

  \item \label{condition:complementary}
    If $\Psi \in \C(\X, \Z)$ is complementary to $\Phi$, then there exists
    a density operator $\sigma \in \D(\Z)$ for which
    \begin{equation}
      \Psi(X) = \Tr(X) \sigma
    \end{equation}
    for all $X \in \L(\X)$ (i.e., all channels which are complementary to
    $\Phi$ are constant on $\D(\X)$).
\end{enumerate}
\end{theorem}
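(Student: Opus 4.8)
The plan is to establish that the seven statements are equivalent by closing the cycle \ref{condition:reversible} $\Rightarrow$ \ref{condition:entanglement_preservation} $\Rightarrow$ \ref{condition:max_entangled_choi} $\Rightarrow$ \ref{condition:map_structure} $\Rightarrow$ \ref{condition:reversible}, and then routing the two remaining statements through the structural statement \ref{condition:map_structure} via \ref{condition:map_structure} $\Rightarrow$ \ref{condition:norm_isometry} $\Rightarrow$ \ref{condition:complementary} $\Rightarrow$ \ref{condition:map_structure} and \ref{condition:map_structure} $\Rightarrow$ \ref{condition:fidelity_preservation} $\Rightarrow$ \ref{condition:complementary}. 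Theorem~\ref{theorem:weak_characterization} is the tool that ties \ref{condition:max_entangled_choi} to the structural statement, while Lemma~\ref{lemma:fidelity_identity} is what converts the norm/fidelity conditions into statements about complementary channels.

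For the main cycle: \ref{condition:reversible} $\Rightarrow$ \ref{condition:entanglement_preservation} follows by applying the monotonicity axiom (Property~\ref{property:monotonicity}) to a given $\rho \in \D(\complex^k \otimes \X)$ twice, once with $\Phi$ and once with its channel left inverse $\Psi$; since $(\I_{\L(\complex^k)} \otimes \Psi\Phi)(\rho) = \rho$, the quantity $\op{E}_{k,m}((\I_{\L(\complex^k)} \otimes \Phi)(\rho))$ is squeezed between $\op{E}_{k,n}(\rho)$ and itself. For \ref{condition:entanglement_preservation} $\Rightarrow$ \ref{condition:max_entangled_choi} I would take $k = n$ and $\rho = \tau_\X$, use $(\I_{\L(\X)} \otimes \Phi)(\tau_\X) = \tfrac{1}{n} J(\Phi)$, and note that $\tau_\X$ is a pure state given by a maximally entangled vector, so Property~\ref{property:pure_state_max} gives $\op{E}_{n,n}(\tau_\X) = g(n)$. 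For \ref{condition:max_entangled_choi} $\Rightarrow$ \ref{condition:map_structure}: the operator $\tfrac{1}{n} J(\Phi)$ is a density operator, so Theorem~\ref{theorem:weak_characterization} (the implication from its statement~2 to its statement~3, with the two isometries equal as we are in the density-operator case) furnishes $r \leq m/n$, $\sigma \in \D(\complex^r)$, and $U \in \U(\X \otimes \complex^r, \Y)$ with $\tfrac{1}{n} J(\Phi) = (\I_\X \otimes U)(\tau_\X \otimes \sigma)(\I_\X \otimes U^*)$; a direct computation shows that the channel $X \mapsto U(X \otimes \sigma)U^*$ has Choi matrix $n(\I_\X \otimes U)(\tau_\X \otimes \sigma)(\I_\X \otimes U^*) = J(\Phi)$, so injectivity of $J$ yields $\Phi(X) = U(X \otimes \sigma)U^*$. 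Finally, \ref{condition:map_structure} $\Rightarrow$ \ref{condition:reversible} is witnessed by the map $Y \mapsto \Tr_{\complex^r}(U^*YU) + \ip{\I_\Y - UU^*}{Y}\eta$ (for any fixed $\eta \in \D(\X)$), exactly as in the proof of Theorem~\ref{theorem:weak_characterization}: it is completely positive and trace preserving, and composing it with $\Phi$ returns the identity on $\L(\X)$.

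The implications \ref{condition:map_structure} $\Rightarrow$ \ref{condition:norm_isometry} and \ref{condition:map_structure} $\Rightarrow$ \ref{condition:fidelity_preservation} are immediate once $\Phi(X) = U(X \otimes \sigma)U^*$, since conjugation by an isometry and tensoring with a fixed density operator preserve both the trace norm and the fidelity (using $\|X \otimes \sigma\|_1 = \|X\|_1$, $\F(\rho \otimes \sigma, \rho' \otimes \sigma) = \F(\rho, \rho')$, and $\Tr(\sigma) = 1$). For \ref{condition:norm_isometry} $\Rightarrow$ \ref{condition:complementary} and \ref{condition:fidelity_preservation} $\Rightarrow$ \ref{condition:complementary}, fix an arbitrary complementary channel $\Psi(X) = \Tr_\Y(AXA^*)$, where $A \in \U(\X, \Y \otimes \Z)$ and $\Phi(X) = \Tr_\Z(AXA^*)$. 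Applying Lemma~\ref{lemma:fidelity_identity} to the vectors $Au, Av \in \Y \otimes \Z$, tracing out one factor or the other, yields the two identities $\|\Phi(uv^*)\|_1 = \F(\Psi(uu^*), \Psi(vv^*))$ and $\F(\Phi(uu^*), \Phi(vv^*)) = \|\Psi(uv^*)\|_1$, valid for all unit vectors $u, v \in \X$. Under \ref{condition:norm_isometry} the left side of the first is $\|uv^*\|_1 = 1$, so the states $\Psi(uu^*)$ and $\Psi(vv^*)$ have fidelity $1$ and therefore coincide for all unit $u, v$; hence $\Psi(uu^*)$ is a fixed $\sigma' \in \D(\Z)$ and $\Psi(X) = \Tr(X)\sigma'$ by linearity. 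Under \ref{condition:fidelity_preservation} the left side of the second is $\F(uu^*, vv^*) = |\ip{u}{v}|$, so $\Psi(uv^*) = 0$ whenever $u \perp v$; testing with $u = e_i$, $v = e_j$ and with $u = e_i \pm e_j$ shows $\Psi(E_{ij}) = 0$ for $i \neq j$ and that all the density operators $\Psi(E_{ii})$ agree, again giving $\Psi(X) = \Tr(X)\sigma'$. Since $A$ was arbitrary, every complementary channel is constant, which is statement~\ref{condition:complementary}.

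The remaining implication \ref{condition:complementary} $\Rightarrow$ \ref{condition:map_structure} is the technical core and the step I expect to be the main obstacle. Fix a Stinespring isometry $A \in \U(\X, \Y \otimes \Z)$ for $\Phi$; by \ref{condition:complementary} its complementary channel is constant, so $\Tr_\Y(Auu^*A^*) = \sigma$ for a fixed $\sigma \in \D(\Z)$ and every unit $u \in \X$, i.e., $Au$ is a purification of $\sigma$. Fixing a reference purification $w_0 \in \complex^r \otimes \Z$ of $\sigma$ of full Schmidt rank $r = \rank(\sigma)$, the uniqueness of purifications up to an isometry on the purifying space lets one write $Au = (V(u) \otimes \I_\Z)w_0$ for a uniquely determined isometry $V(u) \in \U(\complex^r, \Y)$; linearity of $A$ and the full Schmidt rank of $w_0$ force $u \mapsto V(u)$ to be linear (with $V(w)^*V(w) = \|w\|^2 \I$ for all $w$). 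A short polarization computation, comparing $V(u+v)^*V(u+v)$ and $V(u+iv)^*V(u+iv)$ against $\|u+v\|^2$ and $\|u+iv\|^2$, then gives $V(u)^*V(v) = \ip{u}{v}\,\I_{\complex^r}$, so the family $\{V(u)\}$ assembles into a single isometry $\tilde U \in \U(\X \otimes \complex^r, \Y)$ acting by $\tilde U(u \otimes f) = V(u)f$, whence in particular $rn \leq m$. Then $\Phi(uu^*) = \Tr_\Z(Auu^*A^*) = V(u)\,\Tr_\Z(w_0 w_0^*)\,V(u)^* = \tilde U(uu^* \otimes \sigma_0)\tilde U^*$ with $\sigma_0 = \Tr_\Z(w_0 w_0^*) \in \D(\complex^r)$, and extending over rank-one operators gives $\Phi(X) = \tilde U(X \otimes \sigma_0)\tilde U^*$, which is statement~\ref{condition:map_structure}. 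The delicate points here are the linearization of $u \mapsto V(u)$ and checking that it defines a genuine isometry; beyond that, the whole argument rests on Theorem~\ref{theorem:weak_characterization}, Lemma~\ref{lemma:fidelity_identity}, and standard facts about dilations, purifications, fidelity, and the trace norm.
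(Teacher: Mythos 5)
Your proposal is correct, and most of its legs coincide with the paper's argument: the chain \ref{condition:reversible} $\Rightarrow$ \ref{condition:entanglement_preservation} $\Rightarrow$ \ref{condition:max_entangled_choi} $\Rightarrow$ \ref{condition:map_structure} via monotonicity, the choice $\rho = \tau_\X$, and Theorem~\ref{theorem:weak_characterization}, as well as \ref{condition:map_structure} $\Rightarrow$ \ref{condition:norm_isometry}, \ref{condition:map_structure} $\Rightarrow$ \ref{condition:fidelity_preservation}, and the two derivations of statement~\ref{condition:complementary} from Lemma~\ref{lemma:fidelity_identity}, are essentially identical to what the paper does. The one place you genuinely diverge is in closing the cycle from statement~\ref{condition:complementary}. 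The paper proves \ref{condition:complementary} $\Rightarrow$ \ref{condition:reversible} directly: from $J(\Psi) = \I_\X \otimes \sigma$ it reads off that the purifications $\vec(A^{\t})$ and $\vec(\I_\X) \otimes$ (a purification of $\sigma$) have equal reductions, invokes Lemma~\ref{lemma:reduction_relation} to obtain a channel $\Xi \in \C(\Y,\X)$ relating them, and checks $J(\Xi\Phi) = J(\I_{\L(\X)})$. You instead prove \ref{condition:complementary} $\Rightarrow$ \ref{condition:map_structure} by hand --- observing that $Au$ is a purification of the fixed $\sigma$ for every unit $u$, linearizing $u \mapsto V(u)$ via the full Schmidt rank of the reference purification, and polarizing to get $V(u)^* V(v) = \ip{u}{v}\I$ --- and then supply \ref{condition:map_structure} $\Rightarrow$ \ref{condition:reversible} with the explicit reversal map from the proof of Theorem~\ref{theorem:weak_characterization}. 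Both routes are sound; the paper's is shorter because Lemma~\ref{lemma:reduction_relation} packages exactly the purification-uniqueness fact you re-derive, while your version has the modest advantage of producing the structural form of statement~\ref{condition:map_structure} directly from statement~\ref{condition:complementary} without passing back through the entanglement-measure machinery, and its polarization step is a close cousin of Lemma~\ref{lemma:isometry_linear_combos}. The delicate points you flag (linearity of $V$ and that $\tilde{U}$ is an isometry) do go through as you describe.
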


\begin{remark}
  We note that the equivalence of statements \ref{condition:reversible} and 
  \ref{condition:map_structure} is the content of \cite[Theorem
    2.1]{nayak_invertible_2007}.
  In the proof given therein, this equivalence follows from an argument similar
  to a key step of the proof of Theorem~\ref{theorem:weak_characterization} (as
  well as Theorem~\ref{theorem:structure_of_operators}).
  A similar argument has also been used to derive conditions under which
  an error map is correctable \cite{knill_theory_1997}.
  The equivalence of statements \ref{condition:map_structure} and
  \ref{condition:fidelity_preservation} follows from
  \cite{molnar_fidelity_2001} for $\Y = \X$, but also for infinite dimensions. Similarly, the equivalence of statements \ref{condition:map_structure} and \ref{condition:norm_isometry} in infinite dimensions follows from \cite{busch_stochastic_1999}. Lastly, for the case of the coherent information, the equivalence of statements \ref{condition:reversible} and \ref{condition:max_entangled_choi} is a special case of the result in \cite[Section VI]{schumacher_quantum_1996}, in which it was shown that a channel is reversible on half of a bipartite pure state if and only if the data processing inequality is satisfied with equality.
\end{remark}

\begin{proof}[Proof of Theorem~\ref{theorem:reversibility_conditions}]
  Assume that statement 1 holds, and let $\Psi\in\C(\Y,\X)$ be a
  left-inverse of $\Phi$.
  By the monotonicity of weak entanglement measures it holds that
  \begin{equation}
    \op{E}_{k,n}(\rho) =
    \op{E}_{k,n}\bigl( (\I_{\L(\complex^k)}\otimes \Psi\Phi)(\rho) \bigr) \leq
    \op{E}_{k,m}\bigl( (\I_{\L(\complex^k)}\otimes \Phi)(\rho) \bigr) \leq
    \op{E}_{k,n}(\rho)
  \end{equation}
  for all choices of $k\leq n$ and $\rho\in\D(\complex^k\otimes\X)$.
  Hence, statement~1 implies statement~2.

  Statement 2 immediately implies statement 3, as statement 3 is equivalent to
  the particular choice of $k = n$ and $\rho = \tau_{\X}$ in statement 2.

  Next, under the assumption that statement 3 holds, one has that the Choi
  operator of $\Phi$ is given by
  \begin{equation}
    J(\Phi) = (\I_\X \otimes U)(\vec(\I_\X)\vec(\I_\X)^* \otimes \sigma)
    (\I_\X \otimes U^*),
  \end{equation}
  by Theorem~\ref{theorem:weak_characterization}.
  This is equivalent to
  \begin{equation}
    \Phi(X) = U(X\otimes \sigma)U^*
  \end{equation}
  for all $X \in \L(\X)$.
  It has therefore been proved that statement 3 implies statement 4.
  
  By well-known properties of the trace norm and the fidelity function, one
  immediately finds that statement 4 implies both statements 5 and 6.

  Now assume that statement 5 holds, and let $\Psi\in\C(\X,\Z)$ be any
  complementary channel to $\Phi$.
  For any two unit vectors $u,v \in \S(\X)$,
  Lemma~\ref{lemma:fidelity_identity} implies that
  \begin{align}
    \F(\Psi(uu^*),\Psi(vv^*)) = \norm{\Phi(uv^*)}_1 = \|uv^*\|_1= 1,
  \end{align}
  and therefore $\Psi(uu^*) = \Psi(vv^*)$.
  From this fact one concludes that $\Psi$ is constant on $\D(\X)$, i.e.,
  there exists $\sigma \in \D(\Z)$ for which $\Psi(X) = \Tr(X)\sigma$ for all
  $X \in \L(\X)$.
  Statement~5 therefore implies statement~7.

  Along somewhat similar lines, assume that statement~6 holds, and again let
  $\Psi\in\C(\X,\Z)$ be any complementary channel to $\Phi$.
  For any choice of orthogonal vectors $u,v\in\X$ it follows by
  Lemma~\ref{lemma:fidelity_identity} that
  \begin{equation}
    \norm{\Psi(uv^*)}_1 = \F(\Phi(uu^*), \Phi(vv^*)) = \F(uu^*, vv^*) = 0,
  \end{equation}
  and hence $\Psi(uv^*) = 0$.
  In particular, this implies that for $E_{ij} \in \L(\X)$ with $i \neq j$
  one has $\Psi(E_{ij}) = 0$.
  Furthermore, because
  \begin{equation}
    E_{ii} - E_{jj} = 
    \frac{1}{2}[(e_i + e_j)(e_i - e_j)^* + (e_i - e_j)(e_i + e_j)^*]
  \end{equation}
  and $(e_i + e_j) \perp (e_i - e_j)$, it follows that
  \begin{equation}
    \Psi(E_{ii}) - \Psi(E_{jj}) = 
    \frac{1}{2} \Psi( (e_i + e_j)(e_i - e_j)^*) 
    - \frac{1}{2} \Psi((e_i - e_j)(e_i + e_j)^*) = 0.
  \end{equation}
  That is, there exists $\sigma \in \D(\Z)$ for which
  $\Psi(E_{ii}) = \sigma$ for all $1 \leq i \leq n$.
  Hence, we have
  \begin{equation}
    J(\Psi) = \sum_{i,j=1}^n E_{ij} \otimes \Psi(E_{ij}) 
    = \I_\X \otimes \sigma,
  \end{equation}
  which is equivalent to $\Psi(X) = \Tr(X) \sigma$ for all $X \in \L(\X)$.
  Statement~6 therefore implies statement~7.

  Finally, assume that statement~7 holds.
  Let $\Psi \in \C(\X, \Z)$ be the complementary channel associated with
  any fixed Stinespring representation
  $\Phi(X) = \Tr_\Z(A X A^*)$ for $A \in \U(\X, \Y \otimes \Z)$.
  Assuming that $\sigma \in \D(\Z)$ satisfies $\Psi(X) = \Tr(X) \sigma$ for all
  $X \in \L(\X)$, it holds that $J(\Psi) = \I_\X \otimes \sigma$, and hence
  \begin{equation}
    \Tr_\Y(\vec(A^{\t})\vec(A^{\t})^*) = \I_\X \otimes \sigma 
    = \Tr_{\X}(\vec(\I_\X)\vec(\I_\X)^* \otimes \sigma ).
  \end{equation}
  By Lemma~\ref{lemma:reduction_relation} there exists a channel 
  $\Xi \in \C(\Y, \X)$ for which
  \begin{equation}
    (\I_{\L(\X)} \otimes \Xi \otimes \I_{\L(\Z)})(\vec(A^{\t})\vec(A^{\t})^*) 
    = \vec(\I_\X)\vec(\I_\X)^* \otimes \sigma. 
  \end{equation}
  By tracing out $\Z$ we get
  \begin{equation}
    J(\Xi \Phi) = (\I_{\L(\X)} \otimes \Xi)(J(\Phi)) 
    = \vec(\I_\X) \vec(\I_\X)^* = J(\I_{\L(\X)}),
  \end{equation}
  giving $\Xi \Phi = \I_{\L(\X)}$.
  Statement~7 therefore implies statement~1, which completes the proof.
\end{proof}

\section{Discussion}

We have shown that there exists a family of channel discrimination problems
for which a perfect discrimination requires ancilla system with dimension equal
to that of the input, even when the output dimension is much smaller. 
Beyond this it would be nice to have a formula for, or even non-trivial bounds
on, $\bignorm{ \Psi_{n,k} \otimes \I_{\L(\complex^m)}}_1$ when $m < n^k$. 
To serve as a launching ground for future investigations, in
Appendix~\ref{app:computation} we have included numerically computed lower
bounds for $\bignorm{ \Psi_{n,2} \otimes \I_{\L(\complex^m)} }_1$ for $2 \leq n
\leq 6$ and $n \leq m \leq n^2$, computed in MATLAB using QETLAB
\cite{qetlab}. 
More generally, one could try to find non-trivial bounds on
\begin{align}
  \bignorm{(\lambda \Phi_0 - (1-\lambda) \Phi_1) \otimes \I_{\L(\complex^k)}}_1
\end{align}
for all $\Phi_0, \Phi_1 \in \C(\complex^n, \complex^m)$ in terms of $n,m,k,$
and $\vertiii{\lambda \Phi_0 - (1-\lambda) \Phi_1}_1$, though this is likely 
a much more difficult task.

Theorem~\ref{theorem:full_characterization} shows that for $m \geq n^k$ the optimal operators have a special form where the ancilla system factorizes into $k$ copies of $\complex^n$. This seems intuitively natural, as in the channel discrimination setting, discriminating these channels is like playing $k$ separate Werner-Holevo channel discrimination games using a single resource system, where the referee randomly selects which game will be played and throws away the rest of the input systems. In this setting, Theorem~\ref{theorem:full_characterization} says that all optimal strategies are independent, in the sense that the only way of creating an optimal strategy is to stick together $k$-instances of optimal strategies for discriminating the Werner-Holevo channels. It is thus natural to conjecture that this would be true for $m < n^k$, however this is not the case. For the $k=2$ case, we show in Proposition~\ref{proposition:independent_strategies} in Appendix~\ref{app:independent_strategies} that such independent strategies have the optimal value $n + \floor{m/n}$ when $n \leq m < n^2$, however, lower bounds on the optimal value computed in Appendix~\ref{app:computation} are well above this.

Another question is whether or not the optimum in the induced $1$-norm of $\Psi_{n,k} \otimes \I_{\L(\complex^m)}$ is achieved by some Hermitian operator when $m < n^k$. Even for Hermiticity preserving maps it is known that this does not hold generally \cite{watrous_notes_2005}. Proposition~\ref{proposition:induced-T-norm} shows that this holds for the partial transpose map (i.e., the case when $k=1$), and numerical evidence in Appendix~\ref{app:computation} suggests that this holds when $k=2$. We conjecture that it holds for all $n \geq 2$ and $k \geq 1$.

\subsection*{Acknowledgements}
We thank Gus Gutoski for suggesting the problem, and Vern Paulsen, Nathaniel Johnston, and Marco Piani for helpful discussions. This work was supported by Canada's NSERC and the Ontario Graduate Scholarship. 

\appendix
\section{Optimal value for independent strategies in the
  \emph{k = 2} case} \label{app:independent_strategies}

To be precise, what we mean by an \emph{independent strategy} for optimizing
\begin{equation}
  \begin{multlined}
    \bignorm{ \big( \Psi_{n,2} \otimes \I_{\L(\Y)}\big)(X)}_1\\
    = \bignorm{ \big(T \otimes \I_{\L(\Y)}\big)\big(\Tr_{\X_2}(X)\big)}_1 
    + \bignorm{ \big(T \otimes \I_{\L(\Y)}\big)\big(\Tr_{\X_1}(X)\big) }_1
  \end{multlined}
\end{equation}
for $X \in \L(\X_1 \otimes \X_2 \otimes \Y)$, is an attempt at optimizing the above expression with an operator of the following form. For $a,b \in \{1, \dots, \dim(\Y)\}$ with $ab \leq \dim(\Y)$ and some $U \in \U(\complex^{a} \otimes \complex^{b}, \Y)$, $X$ takes the form
\begin{align}
	X = ( \I_{\X_1 \otimes \X_2} \otimes U)\underbrace{(Y_1 \otimes Y_2 )}_{\mathclap{\in \L(\X_1 \otimes \X_2 \otimes \complex^a \otimes \complex^b)}}( \I_{\X_1 \otimes \X_2} \otimes U^*) \label{equation:independent_form}
\end{align}
for some $Y_1 \in \L(\X_1 \otimes \complex^a)$ and $Y_2 \in \L(\X_2 \otimes \complex^b)$ with $\|Y_1\|_1 = \|Y_2 \|_1 = 1$, and we are again using the implicit permutation notation introduced in Section \ref{section:proof_of_counterexamples}. For an operator of this form we have
\begin{equation}
  \label{equation:value_of_independent_strat}
  \begin{multlined}
    \big\| \big( \Psi_{n,2} \otimes \I_{\L(\Y)}\big)(X)\big\|_1
    = \bignorm{ \big(T \otimes \I_{\L(\complex^a)}\big)(Y_1) }_1 
    \bignorm{ \Tr_{\X_2}(Y_2)}_1 + \\
    \bignorm{ \big(T \otimes \I_{\L(\complex^b)}\big)(Y_2) }_1 
    \bignorm{ \Tr_{\X_1}(Y_1)}_1. 
  \end{multlined}
\end{equation}
Corollary \ref{corollary:full_characterization} says that when $\dim(\Y) \geq n^2$, optimal operators are \emph{necessarily} of this form. We now give the optimal value for these operators when $n \leq \dim(\Y) < n^2$.

\begin{proposition} \label{proposition:independent_strategies}
Let $\X_1$ and $\X_2$ denote copies of $\complex^n$ and let $\Y = \complex^m$ with $n \leq m < n^2$. If $X \in \L(\X_1 \otimes \X_2 \otimes \Y)$ is of the form given in Equation (\ref{equation:independent_form}), then
\begin{align}
	\bignorm{ \big( \Psi_{n,2} \otimes \I_{\L(\Y)}\big)(X)}_1 \leq n + \floor{m/n},
\end{align}
and furthermore equality is achieved for some operator of this form.
\begin{proof}
  First, for such an $X$ the value achieved in 
  Equation~\eqref{equation:value_of_independent_strat} can be upper bounded by
  \begin{equation}
    \begin{aligned}
      \bignorm{ \big(T \otimes \I_{\L(\complex^a)}\big)(Y_1) }_1 
      \bignorm{\Tr_{\X_2}(Y_2)}_1 +
      \bignorm{ \big(T \otimes \I_{\L(\complex^b)}\big)(Y_2)}_1 
      \bignorm{ \Tr_{\X_1}(Y_1)}_1 \hspace{-8cm} \\ 
      & \leq \bignorm{ \big(T \otimes \I_{\L(\complex^a)}\big)(Y_1) }_1 
      + \bignorm{ \big(T \otimes \I_{\L(\complex^b)}\big)(Y_2) }_1 \\
      & \leq \min(n,a) + \min(n,b),
    \end{aligned}
  \end{equation}
where the first inequality is monotonicity of the $1$-norm under partial trace, and the second is two applications of Proposition~\ref{proposition:induced-T-norm}. Next, observe that for fixed $a$ and $b$, this value is attained by some choice of $Y_1$ and $Y_2$ (again, by Proposition~\ref{proposition:induced-T-norm}), and finally, observe that by virtue of the $\min$ functions, there is no reason to consider either $a > n$ or $b > n$. In summary, the optimal value for operators of this form is the same as the optimal value of the following simpler optimization problem
\begin{align}
  \max\{a+b : a,b \in \{1,\dots,n\}, ab \leq m\} = \alpha.
\end{align}
Note that $a = n$ and $b = \floor{m/n}$ satisfy the constraints, so $\alpha \geq n + \floor{m/n}$.

To see that $\alpha \leq n + \floor{m/n}$, consider the relaxed optimization problem
\begin{align}
	\max\{a+b : a,b \in [1,n], ab \leq m\} = \beta \geq \alpha.
\end{align}
For a given $a$ the optimal value of $b$ is $\min(n, m/a)$, so
\begin{equation}
  \beta = \max\{a + \min(n,m/a) : a \in [1,n]\}.
\end{equation}
The function $f(a) = a + \min(n,m/a)$ is strictly increasing over the interval $[1,m/n]$, so the optimum is achieved at some point in the interval $[m/n,n]$, on which $f(a) = a + m/a$. $f$ is convex on $[m/n,n]$ as $f''(a) = 2m/a^3 > 0$, so the optimum is achieved at an endpoint, and in this case $f(m/n) = f(n) = n + m/n$. Hence
\begin{align}
	\alpha \leq \beta = n + m/n,
\end{align}
and since $\alpha$ is a natural number this implies $\alpha \leq n + \floor{m/n}$.
\end{proof}
\end{proposition}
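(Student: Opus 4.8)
The plan is to start from the closed-form expression for the objective on operators of the independent form given in Equation~\eqref{equation:value_of_independent_strat}, and to peel away the two ``reduction'' factors. Since the partial trace is contractive in trace norm and $\|Y_1\|_1 = \|Y_2\|_1 = 1$, one has $\|\Tr_{\X_2}(Y_2)\|_1 \le 1$ and $\|\Tr_{\X_1}(Y_1)\|_1 \le 1$, so the objective is at most $\bignorm{(T \otimes \I_{\L(\complex^a)})(Y_1)}_1 + \bignorm{(T \otimes \I_{\L(\complex^b)})(Y_2)}_1$. Each summand is then bounded by Proposition~\ref{proposition:induced-T-norm} applied on $\X_1 \otimes \complex^a$ and on $\X_2 \otimes \complex^b$ respectively, yielding the bound $\min(n,a) + \min(n,b)$. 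Since $\min(n,\cdot)$ is nondecreasing and saturates at $n$, nothing is lost by restricting to $a,b \le n$, and the task collapses to the arithmetic optimization $\alpha := \max\{a+b : a,b \in \{1,\dots,n\},\ ab \le m\}$.

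The second step is to show $\alpha = n + \floor{m/n}$. The lower bound is immediate: $a = n$, $b = \floor{m/n}$ is feasible, since $m < n^2$ forces $\floor{m/n} \le n$ and $n\floor{m/n} \le m$. For the upper bound I would relax to real variables and set $\beta := \max\{a+b : a,b \in [1,n],\ ab \le m\} \ge \alpha$; for fixed $a$ the optimal $b$ is $\min(n, m/a)$, so $\beta = \max_{a\in[1,n]}\bigl(a + \min(n,m/a)\bigr)$. On $[1,m/n]$ this function equals $a+n$ and is increasing, so its maximum is attained on $[m/n,n]$, where it equals $a + m/a$; that expression is convex there (second derivative $2m/a^3 > 0$), hence maximized at an endpoint, and both endpoints give $n + m/n$. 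Thus $\alpha \le \beta = n + m/n$, and integrality of $\alpha$ gives $\alpha \le n + \floor{m/n}$.

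For the ``furthermore'' claim I would exhibit an operator realizing the value $n + \floor{m/n}$: take $a = n$, $b = \floor{m/n}$ (so $ab \le m$ and a suitable isometry $U \in \U(\complex^a \otimes \complex^b, \Y)$ exists), set $Y_1 = \tau_{\X_1}$, and set $Y_2 = uu^*$ for any maximally entangled $u \in \S(\X_2 \otimes \complex^b)$. Proposition~\ref{proposition:induced-T-norm} then gives $\bignorm{(T\otimes\I_{\L(\complex^a)})(Y_1)}_1 = n$ and $\bignorm{(T\otimes\I_{\L(\complex^b)})(Y_2)}_1 = \min(n,b) = b$, while $\Tr_{\X_1}(Y_1) = \I_{\complex^n}/n$ and $\Tr_{\X_2}(Y_2)$ is a density operator on $\complex^b$, so both reduction factors equal $1$; substituting into Equation~\eqref{equation:value_of_independent_strat} yields exactly $n + \floor{m/n}$.

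I do not anticipate a genuine obstacle: the argument is a chain of routine reductions. The only points needing care are the continuous relaxation in step two --- one must split the domain at $a = m/n$ so that the $\min$ disappears, then use monotonicity on one piece and convexity on the other --- and verifying that the extremal example meets all of the normalization constraints ($\|Y_1\|_1 = \|Y_2\|_1 = 1$ and unit trace norm of both reductions) at once.
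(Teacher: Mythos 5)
Your proof is correct and follows essentially the same route as the paper's: bound the two reduction factors by $1$ via contractivity of the partial trace, apply Proposition~\ref{proposition:induced-T-norm} twice to reduce to the integer program $\max\{a+b : ab \le m\}$, and solve it by the same continuous relaxation and convexity argument. The only difference is that you spell out an explicit extremal operator ($Y_1 = \tau_{\X_1}$, $Y_2$ a maximally entangled pure state on $\X_2 \otimes \complex^{\floor{m/n}}$) where the paper merely asserts attainability; that added detail is correct and harmless.
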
 

\section{Numerical tests} \label{app:computation}

For $\Phi \in \T(\X,\Y)$, computing $\|\Phi\|_1$ is hard in general. However,
as detailed in \cite{how_to_compute}, there are nice algorithms for computing
lower bounds to $\|\Phi\|_1$. 
For $2 \leq n \leq 6$ and $n \leq m \leq n^2$, Table \ref{table:computations}
contains computed lower bounds for 
$\norm{\Psi_{n,2} \otimes \I_{\L(\complex^m)}}_1$, as well as computed lower
bounds for $\norm{\Psi_{n,2} \otimes \I_{\L(\complex^m)}}_1^H$, where
\begin{align}
  \norm{\Phi}_1^H =  \max\{\norm{\Phi(H)}_1 : H \in \Herm(\X), \|H\|_1 = 1\}.
\end{align}
The computations were done in MATLAB using modified versions of the function
{\tt InducedSchattenNorm} in the  QETLAB \cite{qetlab} package (which
uses the algorithm in \cite{how_to_compute}). 
For $n=5$ and $n=6$, plots ranging over $n \leq m \leq n^2$ are given in Figure
\ref{figure:data_plots}. 
The code and data used in this appendix can be found in the GitHub repository
at \cite{ancilla_dimension}.

One feature of the data is that the lower bounds for
$\norm{\Psi_{n,2} \otimes \I_{\L(\complex^m)}}_1$ and 
$\norm{\Psi_{n,2} \otimes \I_{\L(\complex^m)}}_1^H$ almost always agree (up
to stopping precision), and in cases of disagreement the value computed for
Hermitian inputs is always the larger of the two. 
This lends evidence to the conjecture that
\begin{equation}
  \bignorm{\Psi_{n,2} \otimes \I_{\L(\complex^m)}}_1 
  = \bignorm{\Psi_{n,2} \otimes \I_{\L(\complex^m)}}_1^H,
\end{equation}
and the stronger conjecture that
\begin{equation}
  \bignorm{\Psi_{n,k} \otimes \I_{\L(\complex^m)}}_1 
  = \bignorm{\Psi_{n,k} \otimes \I_{\L(\complex^m)}}_1^H
\end{equation}
for all $k$.

Another curious feature, displayed in Figure \ref{figure:data_plots}, is that while seeming to increase roughly linearly in $m$, there is a bump when $m$ is a multiple of $n$, with dips between these points. It is unclear whether this is an actual feature of $\bignorm{\Psi_{n,2} \otimes \I_{\L(\complex^m)}}_1$ or is a peculiarity of the lower bounds found by the algorithm.

\begin{table}
\centering 
\caption{Lower bounds for $\bignorm{ \Psi_{n,2} \otimes \I_{\L(\complex^m)}}_1$ and $\bignorm{ \Psi_{n,2} \otimes \I_{\L(\complex^m)}}_1^H$ (the columns with `-H') for $2 \leq n \leq 6$ (columns) and $n \leq m \leq n^2$ (rows), computed using $1000$ initial guesses and a stopping tolerance of $10^{-5}$. \label{table:computations}}
\resizebox{\columnwidth}{!}{
\begin{tabular}{ c | c | c | c | c | c | c | c | c | c | c } 
 	 m$\backslash$n & 2& 2-H	 & 3& 3-H	 & 4& 4-H	 & 5& 5-H	 & 6& 6-H	 \\ \hline 
 	 2 		 & 3.0448 	 & 3.0448 	 &  	 &  	 &  	 &  	 &  	 &  	 &  	 &  	 \\ 
 	 3 		 & 3.4142 	 & 3.4142 	 & 4.0656 	 & 4.0656 	 &  	 &  	 &  	 &  	 &  	 &  	 \\ 
 	 4 		 & 4.0000 	 & 4.0000 	 & 4.3307 	 & 4.3307 	 & 5.0777 	 & 5.0777 	 &  	 &  	 &  	 &  	 \\ 
 	 5 		 &  	 &  	 & 4.6386 	 & 4.6386 	 & 5.2830 	 & 5.2830 	 & 6.0857 	 & 6.0857 	 &  	 &  	 \\ 
 	 6 		 &  	 &  	 & 5.0551 	 & 5.0551 	 & 5.4711 	 & 5.4711 	 & 6.2527 	 & 6.2527 	 & 7.0914 	 & 7.0914 	 \\ 
 	 7 		 &  	 &  	 & 5.2361 	 & 5.2361 	 & 5.6949 	 & 5.6949 	 & 6.4100 	 & 6.4100 	 & 7.2319 	 & 7.2319 	 \\ 
 	 8 		 &  	 &  	 & 5.5615 	 & 5.5616 	 & 6.0896 	 & 6.0896 	 & 6.5593 	 & 6.5593 	 & 7.3666 	 & 7.3666 	 \\ 
 	 9 		 &  	 &  	 & 6.0000 	 & 6.0000 	 & 6.2240 	 & 6.2241 	 & 6.7331 	 & 6.7331 	 & 7.4961 	 & 7.4961 	 \\ 
 	 10 		 &  	 &  	 &  	 &  	 & 6.4873 	 & 6.4873 	 & 7.1136 	 & 7.1136 	 & 7.6209 	 & 7.6209 	 \\ 
 	 11 		 &  	 &  	 &  	 &  	 & 6.7635 	 & 6.7635 	 & 7.2207 	 & 7.2209 	 & 7.7611 	 & 7.7611 	 \\ 
 	 12 		 &  	 &  	 &  	 &  	 & 7.0596 	 & 7.0596 	 & 7.4396 	 & 7.4396 	 & 8.1312 	 & 8.1312 	 \\ 
 	 13 		 &  	 &  	 &  	 &  	 & 7.1622 	 & 7.1623 	 & 7.6222 	 & 7.6222 	 & 8.2202 	 & 8.2206 	 \\ 
 	 14 		 &  	 &  	 &  	 &  	 & 7.3722 	 & 7.3723 	 & 7.8151 	 & 7.8152 	 & 8.4068 	 & 8.4068 	 \\ 
 	 15 		 &  	 &  	 &  	 &  	 & 7.6457 	 & 7.6457 	 & 8.1023 	 & 8.1023 	 & 8.5342 	 & 8.5342 	 \\ 
 	 16 		 &  	 &  	 &  	 &  	 & 8.0000 	 & 8.0000 	 & 8.1873 	 & 8.1874 	 & 8.6700 	 & 8.6701 	 \\ 
 	 17 		 &  	 &  	 &  	 &  	 &  	 &  	 & 8.3605 	 & 8.3605 	 & 8.8563 	 & 8.8564 	 \\ 
 	 18 		 &  	 &  	 &  	 &  	 &  	 &  	 & 8.5850 	 & 8.5850 	 & 9.1344 	 & 9.1344 	 \\ 
 	 19 		 &  	 &  	 &  	 &  	 &  	 &  	 & 8.8297 	 & 8.8297 	 & 9.2058 	 & 9.2061 	 \\ 
 	 20 		 &  	 &  	 &  	 &  	 &  	 &  	 & 9.0623 	 & 9.0623 	 & 9.3479 	 & 9.3480 	 \\ 
 	 21 		 &  	 &  	 &  	 &  	 &  	 &  	 & 9.1295 	 & 9.1296 	 & 9.5437 	 & 9.5437 	 \\ 
 	 22 		 &  	 &  	 &  	 &  	 &  	 &  	 & 9.2749 	 & 9.2749 	 & 9.7192 	 & 9.7192 	 \\ 
 	 23 		 &  	 &  	 &  	 &  	 &  	 &  	 & 9.4641 	 & 9.4641 	 & 9.8829 	 & 9.8830 	 \\ 
 	 24 		 &  	 &  	 &  	 &  	 &  	 &  	 & 9.7016 	 & 9.7016 	 & 10.1101 	 & 10.1101 	 \\ 
 	 25 		 &  	 &  	 &  	 &  	 &  	 &  	 & 10.0000 	 & 10.0000 	 & 10.1708 	 & 10.1711 	 \\ 
 	 26 		 &  	 &  	 &  	 &  	 &  	 &  	 &  	 &  	 & 10.2970 	 & 10.2971 	 \\ 
 	 27 		 &  	 &  	 &  	 &  	 &  	 &  	 &  	 &  	 & 10.4621 	 & 10.4621 	 \\ 
 	 28 		 &  	 &  	 &  	 &  	 &  	 &  	 &  	 &  	 & 10.6717 	 & 10.6717 	 \\ 
 	 29 		 &  	 &  	 &  	 &  	 &  	 &  	 &  	 &  	 & 10.8717 	 & 10.8717 	 \\ 
 	 30 		 &  	 &  	 &  	 &  	 &  	 &  	 &  	 &  	 & 11.0639 	 & 11.0639 	 \\ 
 	 31 		 &  	 &  	 &  	 &  	 &  	 &  	 &  	 &  	 & 11.1145 	 & 11.1146 	 \\ 
 	 32 		 &  	 &  	 &  	 &  	 &  	 &  	 &  	 &  	 & 11.2170 	 & 11.2170 	 \\ 
 	 33 		 &  	 &  	 &  	 &  	 &  	 &  	 &  	 &  	 & 11.3589 	 & 11.3589 	 \\ 
 	 34 		 &  	 &  	 &  	 &  	 &  	 &  	 &  	 &  	 & 11.5311 	 & 11.5311 	 \\ 
 	 35 		 &  	 &  	 &  	 &  	 &  	 &  	 &  	 &  	 & 11.7416 	 & 11.7416 	 \\ 
 	 36 		 &  	 &  	 &  	 &  	 &  	 &  	 &  	 &  	 & 12.0000 	 & 12.0000 	 \\ 
\end{tabular} }
\end{table}

\begin{figure}
\caption{Plots for the data in Table \ref{table:computations} 
  for $n=5$ and $n=6$. \label{figure:data_plots}}
\begin{subfigure}[b]{1\textwidth}
  \centering
  \caption{$n=5$, $5 \leq m \leq 25$.}
  \includegraphics[scale=0.6]{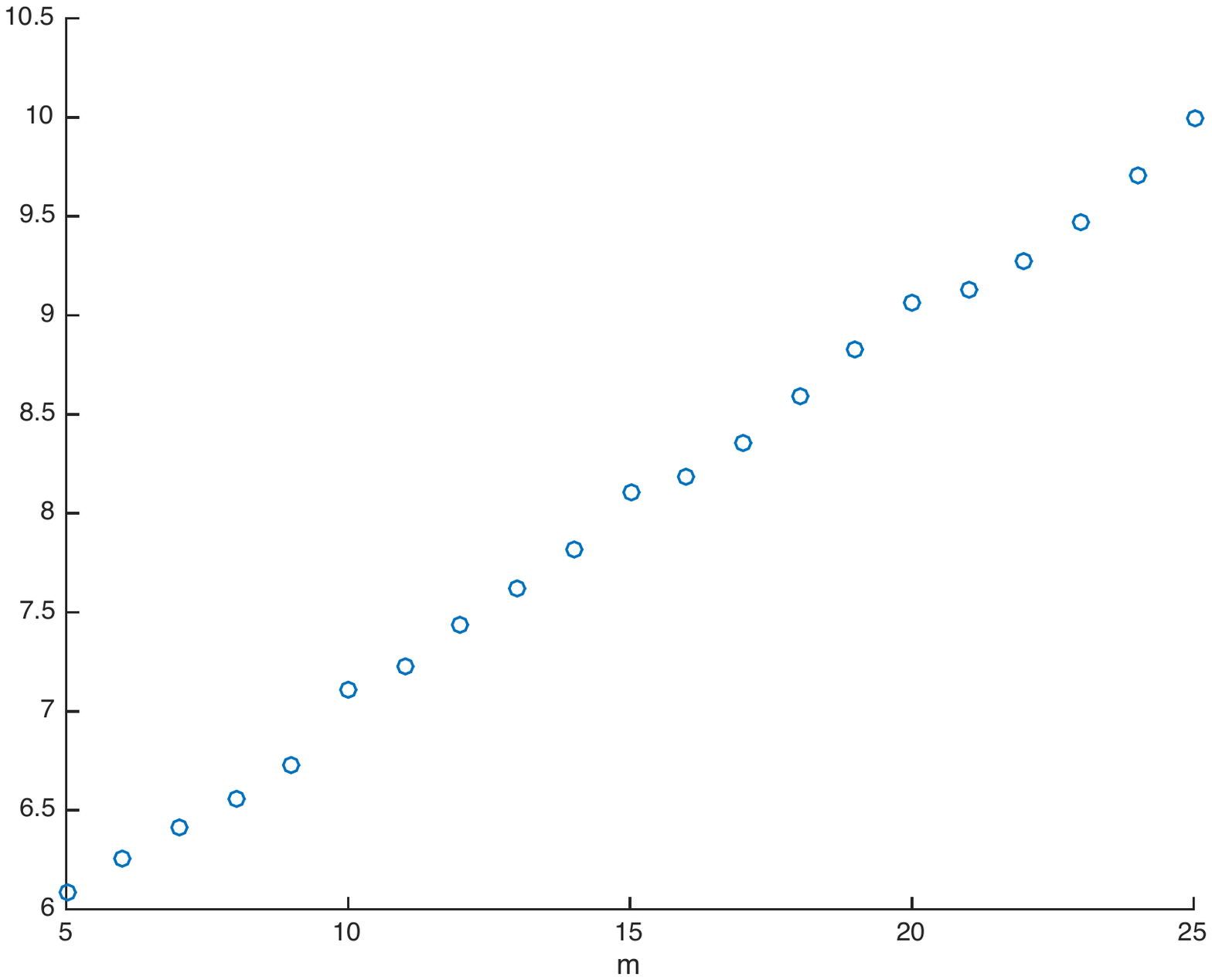}
\end{subfigure}
\begin{subfigure}[b]{1\textwidth}
  \centering
  \caption{$n=6$, $6 \leq m \leq 36$.}
  \includegraphics[scale=0.6]{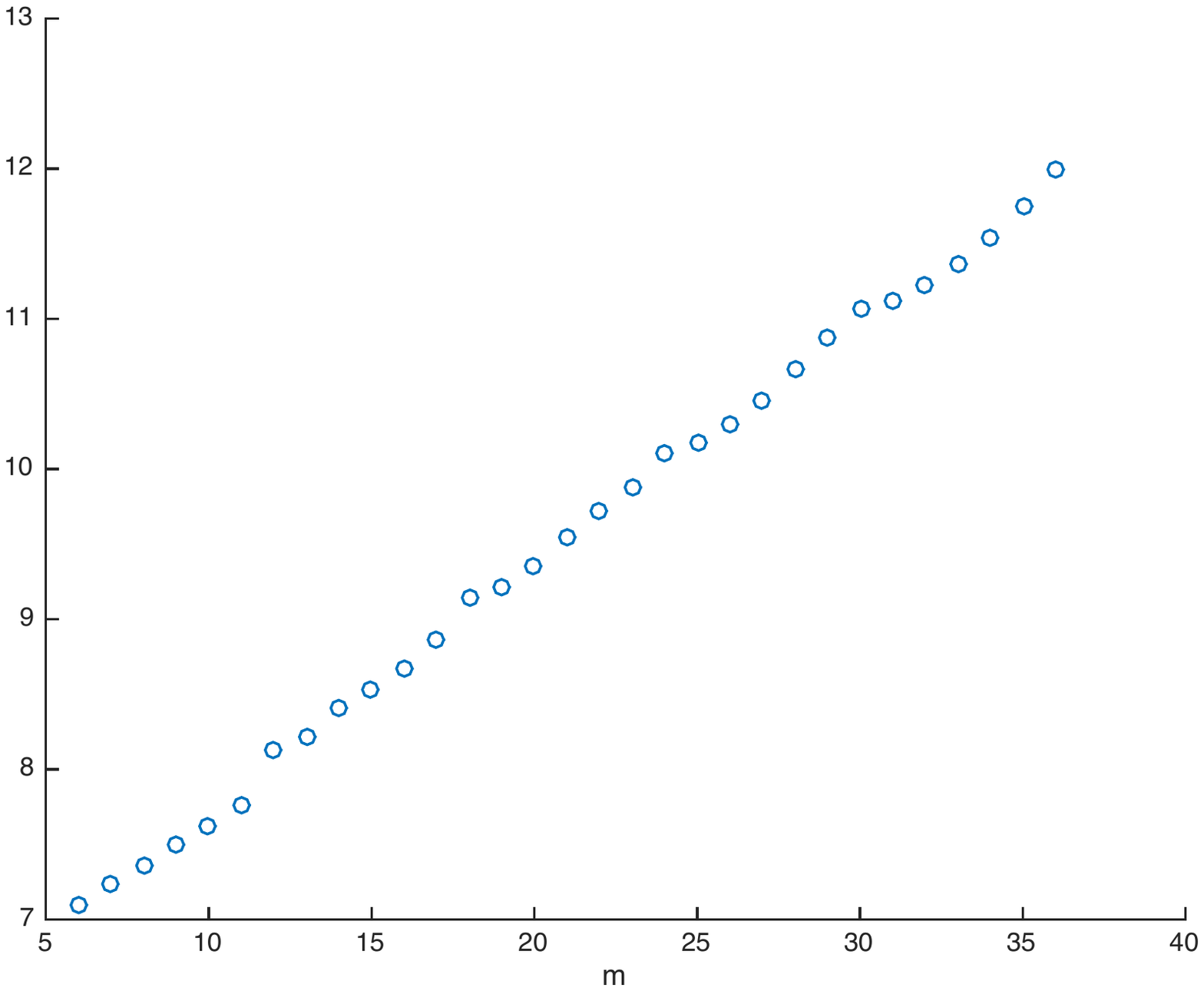}
\end{subfigure}
\end{figure}

\clearpage
\bibliographystyle{unsrt}
\bibliography{pt}

\end{document}